\tikzset{->-/.style={decoration={
  markings,
  mark=at position #1 with {\arrow{>}}},postaction={decorate}}}
\theoremstyle{plain}
\newtheorem{theorem}{Theorem}
\newtheorem{lemma}[theorem]{Lemma}
\theoremstyle{definition}
\newif\ifdraftpaper
\newcommand{\ignore}[1]{}
\newcommand{\FC}{\mbox{\rm FC}}
\newcommand{\bB}{\mathbf{B}}
\newcommand{\bC}{\mathbf{C}}
\newcommand{\bE}{\mathbf{E}}
\newcommand{\bbB}{\mathbb{B}}
\newcommand{\fA}{\mathfrak{A}}
\newcommand{\fB}{\mathfrak{B}}
\newcommand{\cL}{\mathcal{L}}
\newcommand{\bA}{\mathbf{A}}
\newcommand{\fC}{\mathfrak{C}}
\newcommand{\fD}{\mathfrak{D}}
\newcommand{\ft}{\mathfrak{t}}
\renewcommand{\phi}{\varphi} 
\newcommand{\FinSat}{\ensuremath{\textit{FinSat}}}
\newcommand{\tp}{\ensuremath{\mbox{\rm tp}}}
\newcommand{\FOT}{\ensuremath{\cL^2}}
\newcommand{\LtkE}{\ensuremath{\mbox{$\mathcal{L}^2k\mbox{E}$}}}
\newcommand{\LtoE}{\ensuremath{\mbox{$\mathcal{L}^21\mbox{E}$}}}
\newcommand{\LttE}{\ensuremath{\mbox{$\mathcal{L}^22\mbox{E}$}}}
\newcommand{\LtkT}{\ensuremath{\mbox{$\mathcal{L}^2k\mbox{T}$}}}
\newcommand{\LtoT}{\ensuremath{\mbox{$\mathcal{L}^21\mbox{T}$}}}
\newcommand{\LtoPOu}{\ensuremath{\mbox{$\mathcal{L}^21\mbox{\rm PO}^u$}}}
\newcommand{\LtoPO}{\ensuremath{\mbox{$\mathcal{L}^21\mbox{\rm PO}$}}}
\newcommand{\ExpSpace}{\textsc{ExpSpace}}
\newcommand{\NExpTime}{\textsc{NExpTime}}
\newcommand{\TwoExpTime}{2\textsc{-ExpTime}}
\newcommand{\TwoNExpTime}{2\textsc{-NExpTime}}
\newcommand{\ThreeNExpTime}{3\textsc{-NExpTime}}
\newcommand{\set}[1]{\{#1\}}
\newcommand{\sizeOf}[1]{\lVert #1 \rVert}
\newcommand{\modt}[1]{{\lfloor#1\rfloor}}
\newcommand{\cutout}[1]{}
\newcommand{\leftCell}{L}
\newcommand{\rightCell}{R}
\newcommand{\converseDiatom}{I}
\newcommand{\relVar}{\mathfrak{s}}
\newcommand{\ianIgnore}[1]{}
\title{The Finite Satisfiability Problem for Two-Variable, First-Order Logic with one Transitive Relation is Decidable}
\author{Ian Pratt-Hartmann\\
\ \\
{\small
School of Computer Science, University of Manchester}\\
{\small Instytut Matematyki i Informatyki, Uniwersytet Opolski}}
\date{}
\begin{document}
\maketitle
\section{Introduction}
\label{sec:intro}

The {\em two-variable fragment}, 
henceforth denoted $\FOT$, is the fragment of first-order logic with equality but without function-symbols, in which only two logical variables may appear. It is well-known that $\FOT$ has the finite model property, and that its satisfiability (= finite satisfiability) problem is\linebreak
$\NExpTime$-complete~\cite{GKV97}. It follows that it is impossible, within $\FOT$, to express the condition that a given binary predicate $r$ denotes a transitive relation, since in that case the  $\FOT$-formula $\forall x \neg r(x,x) \wedge \forall x \exists y r(x,y)$ becomes an axiom of infinity. This observation has prompted investigation of what happens when $\FOT$ is enriched by imposing various semantic restrictions on the interpretations of certain predicates. For $k >0$, denote by
$\LtkT$ the logic whose formulas are exactly those of $\FOT$, but where $k$ distinguished predicates are required to be interpreted as \textit{transitive relations}, and denote by
$\LtkE$ the same set of formulas, but where $k$ distinguished predicates are required to be interpreted as 
\textit{equivalence relations}. For each of these logics, the question arises as to whether the satisfiability and finite satisfiability problems are decidable, and, if so, what their computational complexity is.

The following is known. 
(i) $\LtoE$ has the finite model property, and its satisfiability (= finite satisfiability) problem is $\NExpTime$-complete~\cite{fo21t:KO05}. (ii) $\LttE$ lacks the finite model property, but its satisfiability and finite satisfiability problems are both $\TwoNExpTime$-complete~\cite{fo21t:kmp-ht}.
(iii) For $k \geq 3$, the satisfiability and finite satisfiability problems for $\LtkE$ are both undecidable~\cite{fo21t:KO05}. 
(iv) $\LtoT$ lacks the finite model property and its satisfiability problem is \TwoExpTime-hard and in \TwoNExpTime~\cite{KT09}. (v) For $k \geq 2$, the satisfiability and finite satisfiability problems for $\LtkT$ are both undecidable~\cite{fo21t:Kie05}. (In fact, the satisfiability and finite satisfiability problems for the two-variable
fragment with one transitive relation and one equivalence relation are already undecidable~\cite{fo21t:KT09}.)
This resolves the decidability and (within narrow limits) the complexity of the satisfiability and finite satisfiability problems for all of the logics
$\LtkT$ and $\LtkE$ except for one case: the finite satisfiability problem for $\LtoT$, where decidability is currently open. This article deals with that case, by showing that the finite satisfiability problem for $\LtoT$ is in \ThreeNExpTime. The best currently known lower bound for this problem is $\TwoExpTime$-hard~\cite{fo21t:Kie03}. We remark that the approach employed in~\cite{KT09} to establish
the decidability of the satisfiability problem for $\LtoT$ breaks down if models are required to be finite: the
algorithm presented here for determining finite satisfiability employs a quite different strategy.

Denote by $\LtoPO$ the logic defined in exactly the same way as $\LtoT$, except that the distinguished 
binary relation is constrained to be interpreted as a (strict) {\em partial order}---i.e.~as an transitive and irreflexive relation. Since the $\FOT$-formula $\forall x \forall y \neg r(x,x)$ asserts that $r$ is irreflexive, it follows that $\LtoPO$ no stronger, in terms of expressive power, than $\LtoT$. In addition, we take the logic $\LtoPOu$ to be the
fragment of $\LtoPO$ in which---apart from equality and the distinguished (partial order) predicate---only \textit{unary} predicates are allowed. Our strategy in the sequel is first to consider $\LtoPOu$. Structures 
interpreting this logic are, in effect, partial orders in which each element is assigned one of a finite number of {\em types}. We obtain a $\TwoNExpTime$ upper complexity-bound on the finite satisfiability problem for this logic, by introducing a method for `factorizing' such typed partial orders into smaller partial orders on blocks of elements of the same type. We then extend this upper bound
to $\LtoPO$ by exhibiting a method to eliminate all binary predicates in $\LtoPO$-formulas (other than equality and the distinguished predicate). Finally, we obtain the $\ThreeNExpTime$ upper complexity-bound on the finite satisfiability problem for $\LtoT$ by 
exhibiting a method to replace the distinguished transitive relation by a partial order.
This latter reduction produces an exponential increase in the size of the formula in question.
 
Stronger complexity-theoretic upper bounds 
are available when the distinguished predicates are required to be interpreted as
\textit{linear} orders:
the satisfiability and finite satisfiability problems for $\FOT$ together with one linear order are both \NExpTime-complete~\cite{fo21t:Otto01};
the finite satisfiability problem for $\FOT$ together with two linear orders (and only unary non-navigational predicates) is \ExpSpace-complete~\cite{fo21t:SchZ10}; with three linear orders, satisfiability and finite satisfiability are both undecidable~\cite{fo21t:Kie2011,fo21t:Otto01}. Also somewhat related
to $\LtoPOu$ is the propositional modal logic 
known as {\em navigational XPATH}, which features a signature of proposition letters interpreted over vertices of some finite, ordered tree, together with modal operators giving access to vertices standing in the relations of {\em daughter} and {\em next-sister}, as well as their transitive closures. It is known that, over finite trees, navigational XPATH has the same expressive power as
two-variable, first-order logic with a signature consisting of unary predicates (representing properties of vertices) together with binary `navigational' predicates
(representing the modal accessibility relations). The exact complexity of satisfiability for all natural variants of this logic is given in~\cite{fo21t:bbcklmw}.

To convey a sense of the expressive power of the logics we are working with, we give an example showing that the logic $\LtoPOu$ can force the existence of an infinite anti-chain: that is, an infinite collection of elements none of which is related to any other in the partial ordering. The example is due to E.~Kiero\'{n}ski (personal communication). In the following, we use $<$ as the distinguished binary predicate of $\LtoPOu$ (written using infix notation).
First of all, the formulas 
\begin{align*}
& \exists x . p(x) & & \forall x \forall y (p(x) \wedge p(y) \rightarrow (x< y \vee  x = y  \vee y < x))
\end{align*}
ensure that elements satisfying $p$ form a non-empty linear order. Pick some such element $a_1$. Now the formulas
\begin{align*}
& \forall x (p(x) \rightarrow \exists y (\neg x< y \wedge \neg y < x \wedge q(y)) & & 
  \forall x (q(x) \rightarrow \exists y (x< y \wedge p(y))
\end{align*}
ensure that, for every element, say $a_i$, satisfying $p$, there is an incomparable element, say $b_i$, satisfying $q$,
and, for every element $b_i$ satisfying $q$, there is a greater element, say $a_{i+1}$, satisfying $p$. Thus,
we generate sequences of elements $a_1, a_2, \dots, $ satisfying $p$, and $b_1, b_2, \dots, $ satisfying $q$.
A moment's thought shows that for all $i$, $a_i < a_{i+1}$, so that, by a simple 
induction, $a_i < b_j$ for all $i <j$. This immediately implies that the $b_j$ are all distinct, since $a_i$ and $b_i$ 
are, by construction,  incomparable. The formula
\begin{align*}
& \forall x \forall y (q(x) \wedge q(y) \rightarrow (\neg x< y \wedge \neg y < x)).
\end{align*}
Then secures the sought-after infinite anti-chain. That the formulas are satisfiable is shown by the partially-ordered structure
depicted in Fig.~\ref{fig:example}. We remark that, even under the assumption that structures are finite, $\LtoPOu$ can force doubly-exponential-sized models; this is demonstrated, for example, using the  construction of~\cite{fo21t:Kie03}.
\begin{figure}
\begin{center}
\begin{tikzpicture}
\filldraw (2,0) circle (0.05);
\draw[->-=0.6] (2,0) -- (4,0);

\filldraw (4,0) circle (0.05);
\draw[->-=0.6] (4,0) -- (6,0);

\filldraw (6,0) circle (0.05);
\draw[->-=0.6] (6,0) -- (8,0);

\filldraw (8,0) circle (0.05);

\filldraw (2,1) circle (0.05);
\draw[->-=0.6] (2,1) -- (4,0);
\draw[->-=0.6,dashed] (8,0) -- (9,0);

\filldraw (4,1) circle (0.05);
\draw[->-=0.6] (4,1) -- (6,0);

\filldraw (6,1) circle (0.05);
\draw[->-=0.6] (6,1) -- (8,0);

\filldraw (8,1) circle (0.05);
\draw[->-=0.6,dashed] (8,1) -- (9,0.5);

\filldraw[opacity=0.1] (1.75,-0.25) rectangle (10,0.25);
\filldraw[opacity=0.05] (1.75,0.75) rectangle (10,1.25);

\coordinate [label=center:{$p$}] (a) at (1.5,0);
\coordinate [label=center:{$q$}] (a) at (1.5,1);

\end{tikzpicture}
\end{center}
\caption{A linear order on the elements satisfying $p$, and an anti-chain on the elements satisfying $q$.} 
\label{fig:example}
\end{figure}
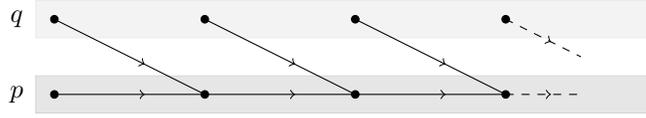

\section{Preliminaries}
\label{subsec:prelims}
We employ standard model-theoretic notation: structures are indicated by (possibly decorated) \textit{fraktur} letters $\fA$, $\fB$, \dots, and their domains by the corresponding Roman letters $A$, $B$, \dots \ .   In this paper, we adopt the non-standard assumption that all structures have cardinality at least 2. Thus, the formula $\forall x \exists y  (x \neq y)$ is for us a validity, and $\forall x \forall y  (x = y)$ a contradiction. 
This assumption does not represent a significant restriction:
over domains of size 1, first-order logic reduces to  propositional logic. 

A binary relation $R$ on some carrier set $A$ is \textit{transitive} if $aRb$ and $bRc$ implies $aRc$, \textit{reflexive} if $aRa$ always holds, \textit{irreflexive} if $a {R} a$ never holds, and {\em anti-symmetric} if $aRb$ and $bRa$ implies $a=b$. Every transitive, irreflexive relation is trivially anti-symmetric. A \textit{weak partial order} is a relation that is transitive, reflexive and anti-symmetric; a \textit{strict partial order} is a relation that is transitive and irreflexive. If $R$ is a weak partial order and $I$ the identity (diagonal) relation on $A$, then $R \setminus I$ is a strict partial order; moreover, all strict partial orders on $A$ arise in this way. Likewsie, if $R$ is a strict partial order then $R \cup I$ is a weak partial order; moreover, all weak partial orders on $A$ arise in this way. In the sequel, the unmodified phrase {\em partial order} will always mean \textit{strict} partial order. 

The \textit{two-variable fragment}, here denoted $\FOT$, is the fragment of first-order logic with equality but without function-symbols, in which only two variables, $x$ and $y$, may appear. There are no other syntactic restrictions. In particular, formulas such as $\forall x (p(x) \rightarrow \exists y (r(x,y) \wedge \exists x . s(y,x)))$, in which bound occurrences of a variable $u$ may appear within the scope of a quantifier $Qu$, are allowed. It is routine to show that predicates having arity other than 1 or 2 add no effective expressive power in the context of $\FOT$. It is likewise routine to show that individual constants add no effective expressive power
given the presence of the equality predicate. Henceforth, then, we shall take all signatures to consist only of unary and binary predicates. 

We define $\LtoT$ to be the set of formulas of $\FOT$ over any signature of unary and binary predicates which features a distinguished binary predicate $\ft$. The semantics of $\LtoT$ is exactly as for $\FOT$, except that the interpretation of $\ft$ is required to be a \textit{transitive relation}. Similarly, we define $\LtoPO$ to be the set of formulas of $\FOT$ over any signature of unary and binary predicates which features a distinguished binary predicate $<$ (written using infix notation). The semantics of $\LtoPO$ is exactly as for $\FOT$, except that the interpretation of $<$ is required to be a \textit{partial order}. Finally, we define $\LtoPOu$ to be the subset of $\LtoPO$ in which no binary predicates other than $=$ and $<$ appear. 

A formula of $\FOT$ is said to be {\em unary} if it features just one free variable. A unary formula $\zeta$ is generally silently assumed to have $x$ as its only free variable; if $\zeta$ is such a formula, we write
$\zeta(y)$ for the result of replacing $x$ in $\zeta$ by $y$. 
The unary formulas $\mu_1, \dots, \mu_n$ are \textit{mutually exclusive} if $\models \forall x (\mu_i \rightarrow \neg \mu_j)$ for all $i$ ($1 \leq i < j \leq n$). 
Any formula $\eta$ of $\FOT$ with two free variables is assumed to have those variables taken in the order $x,y$. Thus, we write
$\fA \models \eta[a,b]$, where $a$, $b$ are elements of $A$, to indicate that $\eta$ is satisfied in 
$\fA$ under the assignment $a \mapsto x$ and $b \mapsto y$. For the purposes of this paper, we may take the {\em size} of an $\FOT$-formula $\phi$, denoted $\sizeOf{\phi}$, to be the number of symbols it contains.

For any signature $\sigma$, a $\sigma$-\textit{atom} is a formula of the form $p(\bar{x})$ where $p$ is a predicate of $\sigma$ and $\bar{x}$ a tuple of variables of the appropriate arity. A $\sigma$-\textit{literal} is a $\sigma$-atom or a negated $\sigma$-atom. The reference to $\sigma$ is omitted if unimportant or clear from context.
A \textit{clause} is a disjunction of literals; we allow $\bot$ (the disjunction of no literals) to be a clause. A \textit{1-type} over $\sigma$ is a maximal consistent set of equality-free
$\sigma$-literals involving only the variable $x$; and a \textit{2-type} over $\sigma$ is a maximal consistent set of $\sigma$-literals involving the variables $x$ and $y$. (Thus, 1- and 2-types are what are sometimes called \textit{atomic} 1- and 2-types.) Consistency here is to be understood as taking into account the semantic constraints on distinguished predicates. Thus, if $\sigma$ contains $<$, then the 1-type over $\sigma$ contains the literals $x = x$ and $\neg x < x$, with similar restrictions applying to 2-types. Likewise, if $\sigma$ contains $\ft$, then any 2-type
containing the literals $\ft(x,y)$ and $\ft(y,x)$ also contains $\ft(x,x)$ and $\ft(y,y)$. We
usually identify 1- and 2-types with the conjunction of their literals.
If $\fA$ is a structure and $a, b \in A$, we write $\tp^\fA[a]$ for the unique 1-type satisfied in $\fA$ by $a$ and 
$\tp^\fA[a,b]$ for the unique 2-type satisfied by $\langle a, b \rangle$.

A formula $\phi$ of $\FOT$ (or of $\LtoPO$ or $\LtoT$) 
is said to be in {\em standard normal form} if it conforms to the pattern
\begin{equation}
\begin{split}
 \forall x \forall y (x = y \vee \eta) \wedge
       \bigwedge_{h=0}^{m-1} \forall x \exists y (x \neq y \wedge \theta_h), 
\end{split}
\label{eq:snf}
\end{equation}
where $\eta$,  $\theta_0, \dots,\theta_{m-1}$ are quantifier- and equality-free formulas, with 
$m \geq 1$. A formula is said to be in {\em weak normal form} if it conforms to the pattern
\begin{equation}
\begin{split}
\bigwedge_{\zeta \in Z} \exists x . \zeta \wedge 
 \forall x \forall y (x = y \vee \eta) \wedge
       \bigwedge_{h=0}^{m-1} \forall x \exists y (x \neq y \wedge \theta_h), 
\end{split}
\label{eq:nf}
\end{equation}
where $Z$ is a finite set of unary quantifier- and equality-free formulas and the other components are as in~\eqref{eq:snf}. We refer to the parameter $m$ in both~\eqref{eq:snf} and~\eqref{eq:nf} as the {\em multiplicity} of $\phi$.  

The following basic fact about $\FOT$ goes back, essentially, to~\cite{fo21t:scott62}, and is
widely used in studies of $\FOT$ and its variants~\cite[Lemma~8.1.2]{fo21t:BGG}. Remembering our general assumption that all structures have cardinality at least 2, we have:
\begin{lemma}
Let $\phi$ be an $\FOT$-formula. There exists a standard normal-form $\FOT$-formula $\phi'$ such that:
\textup{(i)} $\models \phi' \rightarrow \phi$; \textup{(ii)} every model of $\phi$ can be expanded to a model
of $\phi'$; and \textup{(iii)} $\sizeOf{\phi'}$ is
bounded by a polynomial function of $\sizeOf{\phi}$.
\label{lma:nfL2}
\end{lemma}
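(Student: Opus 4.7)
The plan is the standard Scott-normalization for $\FOT$, adapted to produce the $x = y$ versus $x \neq y$ split required by~\eqref{eq:snf}. I would process $\phi$ inside-out, abstracting every quantifier-rooted subformula by a fresh unary predicate, and then reshape the resulting axioms into universal and existential conjuncts of the required shape.

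For each subformula $\psi$ of $\phi$ of the form $Qy\,\chi(x,y)$ or $Qx\,\chi(x,y)$ with $\chi$ quantifier-free, introduce a fresh unary predicate $p_\psi$, replace $\psi$ in $\phi$ by $p_\psi(x)$ or $p_\psi(y)$ as appropriate, and adjoin $\forall x(p_\psi(x) \leftrightarrow \psi)$ as a new conjunct. Since $\FOT$ has only the variables $x$ and $y$, every quantifier-rooted subformula has at most one free variable, so a unary predicate suffices. Iterating innermost-first, after linearly many steps $\phi$ turns into $\phi^* \wedge \bigwedge_i A_i$, where $\phi^*$ is a quantifier-free sentence and each $A_i$ is of the form $\forall x(p(x) \leftrightarrow \forall y\,\theta(x,y))$ or $\forall x(p(x) \leftrightarrow \exists y\,\theta(x,y))$ with $\theta$ quantifier-free; the total size is linear in $\sizeOf{\phi}$, because each atom or Boolean connective of $\phi$ ends up in exactly one abstracted matrix.

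Next, split each biconditional $A_i$ into its two implications. The universal-style halves reshape trivially: $\forall x(p(x) \rightarrow \forall y\,\theta)$ is $\forall x\forall y(p(x) \rightarrow \theta)$, and $\forall x(\exists y\,\theta \rightarrow p(x))$ is $\forall x\forall y(\theta \rightarrow p(x))$. Conjoining all of these together with $\phi^*$ and placing the result inside $\forall x\forall y(x = y \vee \cdot)$ yields the universal conjunct, using that $\forall x\forall y(x=y \vee \phi^*) \equiv \phi^* \vee \forall x\forall y(x=y) \equiv \phi^*$ under the standing assumption $|A| \geq 2$, so that $\phi^*$ can be folded into the matrix $\eta$ as a conjunct. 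The existential-style halves $\forall x(p(x) \rightarrow \exists y\,\theta)$ and (by contraposition) $\forall x(\forall y\,\theta \rightarrow p(x)) \equiv \forall x\exists y(\neg\theta \vee p(x))$ yield conjuncts of the desired shape $\forall x\exists y\,\theta_h$ for suitable quantifier-free $\theta_h$.

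Finally, each $\forall x\exists y\,\theta_h$ must be reshaped to $\forall x\exists y(x \neq y \wedge \theta_h')$. The subtle point, which is the main obstacle, is that $y=x$ might itself witness the existential, so inserting $x \neq y$ directly would lose information. The standard remedy is to introduce a fresh unary predicate $q_h$ with defining axiom $\forall x(q_h(x) \leftrightarrow \theta_h(x,x))$, understood as $\theta_h$ with $y$ replaced by $x$; this axiom is universal and is absorbed into $\eta$, and the existential conjunct is replaced by $\forall x\exists y(x \neq y \wedge (q_h(x) \vee \theta_h))$. This is equivalent to $\forall x\exists y\,\theta_h$ under $|A| \geq 2$: if $q_h(x)$ holds then any $y \neq x$ satisfies the rewrite, while if $q_h(x)$ fails then any witness of $\exists y\,\theta_h$ must already be distinct from $x$. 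If no existential conjuncts remain, we add the trivially true $\forall x\exists y(x \neq y)$ to meet the requirement $m \geq 1$. Claims (i) and (ii) are then immediate from the construction, since each new predicate is defined explicitly by a subformula of $\phi$: every model of $\phi$ expands canonically to a model of $\phi'$ by interpreting each fresh predicate as its defining formula, and every model of $\phi'$ restricts back to a model of $\phi$; claim (iii) follows from the linearity of the whole abstraction and reshaping process.
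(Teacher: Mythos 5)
Your strategy is the standard Scott renaming, which is the route the cited literature takes, and your $q_h$ trick for the $y=x$ witness problem on the existential side is correct. However, there is a symmetric gap on the \emph{universal} side that breaks property (i). When you conjoin a matrix such as $p(x)\rightarrow\theta(x,y)$ into $\eta$ and then wrap $\eta$ in $\forall x\forall y(x=y\vee\cdot)$, you assert $p(x)\rightarrow\theta(x,y)$ only for distinct $x$ and $y$ and silently discard the diagonal instance $p(x)\rightarrow\theta(x,x)$---exactly the phenomenon you identified as ``the main obstacle'' on the existential side, now arising for the universal halves of the biconditionals. To see that this really matters, take $\phi=\exists x\forall y\,r(x,y)$, abstract $\forall y\,r(x,y)$ by the unary predicate $p$, and consider the two-element structure with domain $\{a,b\}$ in which $p$ holds of $a$ only and $r$ holds of $(a,b)$ only: your $\phi'$ is satisfied (interpreting the constant auxiliary predicates suitably), yet $\phi$ fails because $r(a,a)$ does not hold, so claim (i) is violated. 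The fix is analogous to what you did for the existentials: for each universal matrix $\gamma(x,y)$ placed in $\eta$, also add its diagonalization $\gamma(x,x)$ (the result of substituting $x$ for $y$) as a further conjunct of $\eta$; since $\gamma(x,x)$ has $x$ as its only free variable, the wrapper $\forall x\forall y(x=y\vee\cdot)$ together with the assumption $|A|\geq 2$ recovers $\forall x\,\gamma(x,x)$.

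Two smaller points also need tightening. After abstraction, $\phi^*$ cannot be a quantifier-free \emph{sentence} over a signature of unary and binary predicates only: the outermost quantified subformulas of $\phi$ are themselves sentences, and abstracting them produces unary predicates that are forced to be constant by their defining axioms, with $\phi^*$ ending up a quantifier-free formula having one free variable. It is the sentence $\forall x\,\phi^*$ that is then equivalent to $\phi$ under those axioms, and it is this that is folded into the universal conjunct. Finally, the normal form demands that $\eta$ and the $\theta_h$ be \emph{equality-free}, whereas your matrices may still contain equality atoms; under the guards $x=y$ and $x\neq y$, these can simply be replaced by the appropriate truth-values before reading off $\eta$ and the $\theta_h$.
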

Obviously, Lemma~\ref{lma:nfL2} applies without change to $\LtoPO$ and $\LtoPOu$. Under our general restriction to structures with at least 2 elements, $\exists x . \zeta$ is logically equivalent to $\forall x \exists y (x \neq y \wedge  (\zeta \vee \zeta(y)))$. Hence any formula in
weak normal form can be converted, in polynomial time, to a logically equivalent one in standard normal form. However, this process increases the
multiplicity of the formula in question: in the sequel, we shall sometimes need~\eqref{eq:nf} in full generality, in order to obtain 
finer control over this parameter.

\section{Unary two-variable logic with one partial order}
\label{sec:L21pou}
The purpose of this section is to show that the logic $\LtoPOu$ has the doubly 
exponential-sized finite model property (Theorem~\ref{theo:mainPOUnary}): if $\phi$ is a finitely satisfiable $\LtoPOu$-formula, then $\phi$ has a model
of size bounded by some fixed doubly exponential function of $\sizeOf{\phi}$. It follows that the finite satisfiability problem for $\LtoPOu$ is in \TwoNExpTime. 

All structures in this section interpret a signature of unary predicates, together with the
distinguished  predicate $<$. To make reading easier, we typically write $x > y$ for $y < x$ and
$x \sim y$ for $\neg (x = y \vee x < y \vee y < x)$. In practice, we will simply treat the symbols $>$ and $\sim$ as if they were binary predicates (subject to the obvious constraints on their interpretations). With this concession to informality, we see that, in
the logics $\LtoPO$ and $\LtoPOu$, 
any pair of distinct elements of a structure satisfies exactly one of the atomic formulas
$x < y$, $x > y$ or $x \sim y$. Where a structure $\fA$ is clear from context, we typically do not distinguish between the predicate $<$ and its interpretation in $\fA$, writing $a < b$ to mean $\langle a, b \rangle \in <^\fA$; similarly for $>$, $\sim$ and $=$.
We sometimes refer to the distinguished predicates $\ft$, $<$, $>$, $\sim$ and $=$ as {\em navigational} predicates. (The allusion here is to the terminology employed in XPATH.) A predicate that is not navigational is called {\em ordinary}.
A formula is navigation-free if it contains no navigational predicates. 
A formula is said to be \textit{pure Boolean} if it is quantifier-, and navigation-free---i.e.~if it is a Boolean combination of literals featuring ordinary predicates. 
Notice that all 1-types contain the conjuncts $\neg x < x$ and $x = x$, and hence are not, technically speaking,
pure Boolean formulas. However, they are of course logically equivalent to the pure Boolean formulas obtained
by deleting all navigational conjuncts.

In this section, we use the (possibly decorated) variables $\alpha$, $\beta$, $\gamma$, $\pi$ to range over 1-types, $\mu$ over unary pure Boolean formulas, and $\zeta$, $\eta$, $\theta$, $\phi$,  $\chi$, $\psi$ over other $\LtoPOu$-formulas.

\subsection{Basic formulas}
Structures interpreting $\LtoPOu$-formulas have a very simple form, and it will be convenient to diverge 
slightly from  standard model-theoretic terminology when discussing them. (Remember, all structures are taken to have cardinality at least 2 in this paper.)
Let $\Pi$ be a fixed set of 1-types over some unary signature $\sigma$.
A {\em typed partial order} (\textit{over} $\Pi$) is a triple $\fA = (X,<,\tp)$, where $X$ is a set of cardinality at least 2, $<$ a partial order on $X$, and
$\tp: X \rightarrow \Pi$ a function. We can regard $\fA$ as a structure 
interpreting $\LtoPOu$-formulas in the obvious way; and it is evident that all
structures interpreting $\LtoPOu$-formulas can be regarded as typed partial orders over some set of 1-types. This is what we shall do in the sequel, therefore. If $\fA = (X,<,\tp)$ is a typed partial order and $a \in X$, we call $a$ {\em maximal} if it a largest element of its 1-type, i.e.~there exists no $a'$ such that $\tp(a') = \tp(a)$ and $a < a'$; similarly, {\em mutatis mutandis}, for {\em minimal}. We call $a$ {\em extremal} if it is either maximal or minimal.

We begin by establishing a stronger normal form theorem for $\LtoPOu$. 
We call a formula {\em basic} if it has one of the forms
\begin{align}
& \forall x(\alpha \rightarrow \forall y (\alpha(y) \rightarrow x=y))
\tag{B1a}
\label{eq:b1a}
\\
& \forall x(\alpha \rightarrow \forall y (\beta(y) \rightarrow x=y))
\tag{B1b}
\label{eq:b1b}
\\
& \forall x(\alpha \rightarrow \forall y (\alpha(y) \wedge x\neq y \rightarrow  x \sim y))
\tag{B2a}
\label{eq:b2a}
\\
& \forall x(\alpha \rightarrow \forall y (\beta(y) \rightarrow  x \sim y))
\tag{B2b}
\label{eq:b2b}
\\
&\forall x(\alpha \rightarrow \forall y (\beta(y) \rightarrow  x < y))
\tag{B3}
\label{eq:b3}
\\
& \forall x(\alpha \rightarrow \forall y (\beta(y) \rightarrow  (x < y \vee x \sim y))
\tag{B4}
\label{eq:b4}
\\
& \forall x(\alpha \rightarrow \forall y (\alpha(y) \wedge x \neq y \rightarrow  (x < y \vee x > y))
\tag{B5a}
\label{eq:b5a}
\\
& \forall x(\alpha \rightarrow \forall y (\beta(y) \rightarrow  (x < y \vee x > y))
\tag{B5b}
\label{eq:b5b}
\\
& \forall x(\alpha \rightarrow \exists y (\mu(y) \wedge \neg \alpha(y) \wedge x < y))
\tag{B6}
\label{eq:b6}
\\
& \forall x(\alpha \rightarrow \exists y (\mu(y) \wedge \neg \alpha(y)\wedge x > y))
\tag{B7}
\label{eq:b7}
\\
& \forall x(\alpha \rightarrow \exists y (\mu(y)  \wedge x \sim y))
\tag{B8}
\label{eq:b8}
\\
& \forall x . \mu
\tag{B9}
\label{eq:b9}
\\
& \exists x . \mu,
\tag{B10}
\label{eq:b10}
\end{align}
where $\alpha$ and $\beta$ are distinct 1-types and $\mu$ is a unary pure Boolean formula.
We typically use the variable $\psi$ to range over basic formulas and $\Psi$ to range over finite sets of basic formulas. Formulas of the forms~\eqref{eq:b3} and~\eqref{eq:b5b} receive special treatment in the sequel, and will be referred to---for reasons that will become evident---as {\em factor-controllable} formulas. If $\Psi$ is any finite set of basic formulas, we denote by $\FC(\Psi)$ the set of factor-controllable formulas in $\Psi$.

\begin{lemma}
Let $\phi$ be a weak normal-form $\LtoPOu$-formula with multiplicity $m$ over a signature $\sigma$.
There exists an $\LtoPOu$-sentence $\phi^*$ over a signature $\sigma^*$, such that: \textup{(}i\textup{)} $\phi$ and $\phi^*$ are satisfiable over the same finite domains;
\textup{(}ii\textup{)} $|\sigma^*| = |\sigma|+3m$; and
\textup{(}iii\textup{)} 
$\phi^*$ is a conjunction of basic formulas.
\label{lma:nfPO}
\end{lemma}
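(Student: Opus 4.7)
The plan is to convert $\phi$ (in weak normal form) into a conjunction of basic formulas by processing each kind of conjunct separately. The $\exists x.\zeta$-conjuncts are immediate: each $\zeta$ reduces to a pure Boolean formula (its navigational literals on $x$ alone have fixed truth-values), yielding a B10 clause directly, with no signature expansion required.

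For the universal-pair conjunct $\forall x \forall y(x = y \vee \eta)$, I would perform a case analysis on 1-types over $\sigma$. Because $\LtoPOu$ has no ordinary binary predicates, the truth-value of $\eta$ at any pair $(x,y)$ with $x \neq y$ depends only on $\alpha = \tp(x)$, $\beta = \tp(y)$, and the unique navigational relation $R \in \{<,\;>,\;\sim\}$ between $x$ and $y$. For each pair $(\alpha,\beta)$, let $S(\alpha,\beta) \subseteq \{<,\;>,\;\sim\}$ be the set of permitted $R$'s. The eight possible subsets, combined with whether $\alpha = \beta$ (in which case $S$ must be symmetric under swapping $<$ and $>$), correspond exactly to basic forms B1a, B1b, B2a, B2b, B3, B4, B5a, B5b, with the full set $\{<,\;>,\;\sim\}$ contributing no clause. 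Residual single-variable constraints inside $\eta$ contribute B9-clauses.

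For each existential conjunct $\forall x \exists y(x \neq y \wedge \theta_h)$, I would introduce three fresh unary predicates $L_h, R_h, N_h$ (contributing the stipulated $3m$ increment), flagging whether $x$ has a $\theta_h$-witness that is above, below, or incomparable to $x$ respectively. The B9 clause $\forall x(L_h \vee R_h \vee N_h)$ together with three conditional $\forall\exists$-clauses---then split by the 1-type $\alpha$ of $x$ (over $\sigma^*$) carrying the relevant flag---reduces the conjunct to sub-formulas of the shape $\forall x(\alpha \to \exists y(x \mathop{R} y \wedge \mu(y)))$, where $\mu$ is the pure Boolean residue of $\theta_h$ obtained after fixing $\tp(x) = \alpha$ and the navigational relation $R$. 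The $\sim$-instance is B8 outright; the $<$- and $>$-instances match B6 and B7 respectively provided the side-condition $\neg \alpha(y)$ holds automatically, i.e.\ iff the unary part of $\alpha$ does not satisfy $\mu$. In the remaining case $\alpha \models \mu$, I would strengthen $\mu$ by conjoining the negation of the unary part of $\alpha$; the resulting B6/B7 clause is strictly stronger than the original $\forall\exists$-requirement, but remains equi-finite-satisfiable with it by a transitivity argument: in any finite partial order, every $\alpha$-element $x$ lies below (or equals) some $<$-maximal $\alpha$-element $x'$, whose required witness above $x'$ cannot itself be of type $\alpha$, and so---by transitivity of $<$---serves as a non-$\alpha$-witness above $x$ as well.

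Equi-finite-satisfiability of $\phi$ and $\phi^*$ then follows. For the forward direction, a finite model of $\phi$ expands to one of $\phi^*$ by choosing the flag interpretations according to any direction in which a $\theta_h$-witness exists (with the transitivity observation above ensuring the strengthened B6/B7 instances remain realisable); for the converse, every basic conjunct is a semantic consequence of $\phi$, so $\phi$ holds in the $\sigma$-reduct of any model of $\phi^*$. Clause counts and signature size yield claims (ii) and (iii) by inspection. I expect the chief technical obstacle to be precisely the case $\alpha \models \mu$ in forms B6/B7: the maximality-and-transitivity reduction is the non-obvious step, and will need to be verified in tandem with the basic formulas extracted from $\eta$, since those formulas (e.g.\ B2a forcing $\alpha$ into an antichain, or B5a forcing $\alpha$ into a chain) constrain how $\alpha$-elements can be arranged and hence whether the required maximal element actually exists.
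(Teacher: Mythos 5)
Your proposal is correct and follows the paper's proof in all essentials: the $\exists x.\zeta$-conjuncts are already of form \eqref{eq:b10}, the universal conjunct is unpacked by pairs of 1-types into \eqref{eq:b1a}--\eqref{eq:b5b}, and each $\forall\exists$-conjunct is handled by introducing three fresh flag predicates and splitting on the 1-type of $x$, with the finite-model maximality argument supplying the $\neg\alpha(y)$ conjunct required in forms \eqref{eq:b6} and \eqref{eq:b7}. The worry in your closing paragraph is unfounded: every realised 1-type in a finite partial order has a maximal element regardless of any constraints imposed by $\eta$ (chain, antichain, or otherwise), so the maximality step does not need to be ``verified in tandem'' with the basic formulas extracted from $\eta$.
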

\begin{proof}
Let $\phi$ be as given in~\eqref{eq:nf}. The conjuncts $\exists x. \zeta$ are already of the form~\eqref{eq:b10}, and so
require no action.
Consider next any conjunct $\chi_h = \forall x \exists y (x \neq  y \wedge \theta_h)$, where $0 \leq h < m$. Letting 
let $p_{h,<}$, $p_{h,>}$ and $p_{h,\sim}$ be fresh unary predicates, we may replace $\chi_h$ by the conjunction $\chi^*_h$ of the formulas
\begin{align}
& \forall x (p(x) \rightarrow p_{h,<}(x) \vee p_{h,>}(x) \vee p_{h,\sim}(x))
\label{eq:a2split}\\
& \forall x(p_{h,<}(x) \rightarrow \exists y (\theta_h \wedge x < y))
\label{eq:a2less}\\
& \forall x(p_{h,>}(x) \rightarrow \exists y (\theta_h \wedge x > y))
\label{eq:a2greater}\\
& \forall x(p^+_{h,\sim}(x) \rightarrow \exists y (\theta_h \wedge x \sim y)).
\label{eq:a2incomparable}
\end{align}
Obviously, $\models \chi^*_h \rightarrow \chi_h$; moreover, 
any model $\fA$ of $\chi_h$ can be expanded to a model $\fA'$ of $\chi_h$ by setting
$p_{h,<}^{\fA'} = 
\set{a \in A \mid \text{there exists $b < a$ s.t.~$\fA \models \theta_h[a,b]$}}$,
and similarly for $p_{h,>}$ and $ p_{h,\sim}$.
Carrying out this replacement for all $h$ ($0 \leq h < m$), let $\sigma^*$ denote the enlarged signature. Evidently, $|\sigma^*| = |\sigma| + 3m$. 

Formula~\eqref{eq:a2split} is of the form~\eqref{eq:b9}. 
Now replace any formula of the form~\eqref{eq:a2less}
by the conjunction of all formulas of the forms
$\forall x(\alpha(x) \rightarrow \exists y ([\theta_h/\alpha] \wedge x < y))$, where $\alpha$ ranges over the  set of 1-types (over $\sigma^*$) containing $p_{<,h}(x)$, and $[\theta_h/\alpha]$ denotes the result of
replacing each ordinary literal $q(x)$ in $\theta_h$ by $\top$ or $\bot$ as determined by $\alpha(x)$. 
Doing the same for 
\eqref{eq:a2greater} and~\eqref{eq:a2incomparable}, and replacing any navigational literals in $[\mu_h/\alpha]$
by $\top$ or $\bot$ in the obvious way yields logically equivalent conjunctions 
of formulas of the respective forms
\begin{align}
& \forall x(\alpha \rightarrow \exists y (\mu(y) \wedge x < y))
\tag{B6$'$}
\label{eq:b6prime}
\\
& \forall x(\alpha \rightarrow \exists y (\mu(y) \wedge x > y))
\tag{B7$'$}
\label{eq:b7prime}
\\
& \forall x(\alpha \rightarrow \exists y (\mu(y) \wedge x \sim y)),
\tag{B8$'$}
\label{eq:b8prime}
\end{align}
where $\mu$ is a quantifier and navigation-free formula not involving the variable $x$. Notice however that, over finite structures $\fA$, \eqref{eq:b6prime} entails $\forall x(\alpha \rightarrow \exists y (\mu(y) \wedge \neg \alpha(y) \wedge x < y))$. This is obvious since, if $\fA \models \alpha[a]$, let $a'$ be
a maximal element of 1-type $\alpha$ above $a$. That is: $a \leq a'$, $\fA \models \alpha[a']$ and
there does not exist $a''$ such that $a' < a''$ and $\fA \models \alpha[a'']$. By~\eqref{eq:b6prime}, 
let $b$ be such that $a' < b$ and $\fA \models \mu[b]$. But then $\fA \models \neg \alpha[b]$ and $a < b$, 
as required. Thus, \eqref{eq:b6prime} can be replaced by~\eqref{eq:b6}. Likewise,
\eqref{eq:b7prime} can be replaced by~\eqref{eq:b7}. Notice that~\eqref{eq:b8prime} is just~\eqref{eq:b8}.

Consider finally the conjunct $\chi = \forall x \forall y (x = y \vee \eta)$ of $\phi$.
Clearly, we may replace this formula by the conjunction $\chi^*$ of all formulas of the forms
\begin{align*}
& \forall x(\alpha \rightarrow \forall y (\beta(y) \wedge x \neq y \rightarrow [\eta/(\alpha,\beta)]),
\end{align*}
where $\alpha$ and $\beta$ range over the 
set of 1-types (over $\sigma^*$), and $[\eta/(\alpha,\beta)]$ denotes the result of
replacing each unary literal in $\eta$ by its truth-value as determined by $\alpha$ and $\beta(y)$. Clearly, $\models \chi \leftrightarrow \chi^*$. Furthermore, any sub-formula $[\eta/(\alpha,\beta)]$ features only the navigational predicates
$>$, $<$ and $\sim$,
and thus is logically equivalent to one of the forms
$\bot$, $x \sim y$, $x > y$, $x < y$, $(x > y \vee x \sim y)$, $(x <y \vee x \sim y)$,
$(x < y \vee x > y)$ or $\top$. 
Ignoring the trivial case $\top$, and exchanging the variables $x$ and $y$ if necessary, we obtain the forms 
\begin{align}
& \forall x(\alpha \rightarrow \forall y (\beta(y) \wedge x \neq y \rightarrow \bot))
\tag{B1$'$}
\label{eq:4.1}\\
& \forall x(\alpha \rightarrow \forall y (\beta(y) \wedge x \neq y \rightarrow x \sim y))
\tag{B2$'$}
\label{eq:4.2}\\
& \forall x(\alpha \rightarrow \forall y (\beta(y) \wedge x \neq y \rightarrow x < y))
\tag{B3$'$}
\label{eq:4.3}\\
& \forall x(\alpha \rightarrow \forall y (\beta(y) \wedge x \neq y \rightarrow (x <y \vee x \sim y)).
\tag{B4$'$}
\label{eq:4.4}\\
& \forall x(\alpha \rightarrow \forall y (\beta(y) \wedge x \neq y \rightarrow (x <y \vee x > y)).
\tag{B5$'$}
\label{eq:4.5}
\end{align}

We consider these forms in turn, according as
$\alpha$ and $\beta$ are identical or distinct. 
For~\eqref{eq:4.1}, we have~\eqref{eq:b1a} and ~\eqref{eq:b1b}.
For~\eqref{eq:4.2}, we obtain~\eqref{eq:b2a}  and~\eqref{eq:b2b}. 
For~\eqref{eq:4.3}, if $\alpha = \beta$, we have~\eqref{eq:b1a} again; if $\alpha \neq \beta$, we have~\eqref{eq:b3}. 
For~\eqref{eq:4.4}, if $\alpha = \beta$, we have~\eqref{eq:b2a} again; if $\alpha \neq \beta$, we have~\eqref{eq:b4}. 
For~\eqref{eq:4.5}, we obtain~\eqref{eq:b5a}  and~\eqref{eq:b5b}.
\end{proof}

\subsection{Factorizations}
\label{subsec:factorizations}

The following notion will play a crucial role in the sequel. Let 
$\fA =$ $(X,<,\tp)$ be a typed partial order. A {\em factorization} of $\fA$ is
a pair $\bbB = (\bB, \ll)$, where $\bB$ is a partition of $X$, and 
$\ll$ is a partial order on $\bB$ satisfying:
\begin{enumerate}[(F1)]
\item for all $B \in \bB$, there exists $\pi \in \Pi$, denoted
      $\tp(B)$, such that,
	  for all $b \in B$, $\tp(b) = \pi$; \label{enum:F1}
\item for all $\pi \in \Pi$, the set $\{B \in \bB \mid \tp(B) = \pi\}$ is linearly ordered by $\ll$;
\item for all $A, B \in \bB$, if $A \ll B$, then, for all $a \in A$ and all $b \in B$, $a < b$.
\end{enumerate}
We refer to the elements of $\bB$ as {\em blocks}, and to the ordering $\ll$ as the {\em block ordering} (in contradistinction to the {\em element ordering} $<$). Notice that, if $|\bB| \geq 2$, the triple $(\bB, \ll, \tp)$ is itself a typed partial order.
If $\tp(B) =\alpha$, we call $B$ an $\alpha$-block.
We say that a block $B$ is {\em of type $\alpha \vee \beta$} if it is either of type $\alpha$ or of type $\beta$, and
we call $B$ an $(\alpha \vee \beta)$-block. 

In the context of a factorization $(\bB, \ll)$, we use $A \gg B$ as an alternative to $B \ll A$. 
If $A$ and $B$ are blocks, we write $A \approx
B$ to mean that $A$ and $B$ are distinct and neither $A \ll B$ nor $B \ll A$.
Thus, $\approx$ stands in the same relation to $\ll$ as $\sim$ does to $<$.
Note that, if $A \approx B$, it is possible for there to be $a, a' \in A$ and $b, b' \in B$ such that
$a < b$ and $a' > b'$.
A block $B$ is \textit{maximal} if there exists no block $B'$ such that $\tp(B) = \tp(B')$ and $B \ll B'$; similarly for {\em minimal}. A block
is {\em extremal} if it is either maximal or minimal. 
We denote the set of extremal blocks of $\bB$ by $\bB^\times$.

Thus, a factorization of a typed partial order is an organization of its elements into blocks of uniform type, with a partial order on the blocks such that all blocks of a given type are linearly ordered, and, such that, whenever one block is less than another in the block ordering, every element of the first block is less than every element of the second in the element ordering. 
Fig.~\ref{fig:factorization} shows a factorization of a finite typed partial order over 1-types $\pi_1, \dots, \pi_N$, depicted
as an acyclic directed graph: the block order $\ll$ is the transitive closure of the edges; extremal
blocks are marked with thick boundaries. The shaded blocks and the line marked $\chi$ will be explained in
Sec.~\ref{subsec:blockCut}. 
	
It is important to realize that the factorization 
$(\bB, \ll)$ does not determine the partial order $(X,<)$. Indeed, any typed partial order $\fA$ has a factorization, namely, the trivial factorization 
in which the blocks are simply the non-empty sets $\set{a \in X \mid \tp(a) = \pi}$ for $\pi \in \Pi$, and the block-order is empty. The next two lemmas show that we can generally find more informative factorizations than this. Recall in this context that a  factor-controllable basic formula is one of either of the forms~\eqref{eq:b3} or~\eqref{eq:b5b}.

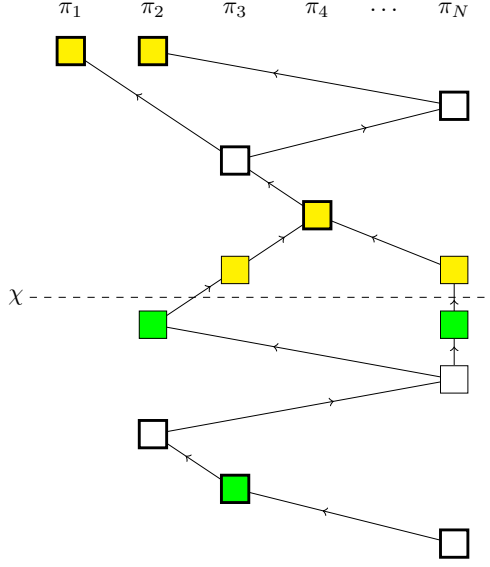
\begin{figure}
\begin{center}
\resizebox{!}{7.5cm}{\begin{tikzpicture}[scale=0.4]

\coordinate [label=center:{$\pi_1$}] (pi1) at (0.5,20);
\coordinate [label=center:{$\pi_2$}] (pi2) at (3.5,20);
\coordinate [label=center:{$\pi_3$}] (pi3) at (6.5,20);
\coordinate [label=center:{$\pi_4$}] (pi4) at (9.5,20);
\coordinate [label=center:{$\dots$}] (dots) at (12.0,20);
\coordinate [label=center:{$\pi_N$}] (piN) at (14.5,20);

\draw[->-=0.6] (14.5,16.5) -- (3.5,18.5);
\draw[->-=0.6] (6.5,14.5) -- (0.5,18.5);
\draw[->-=0.6] (6.5,14.5) -- (14.5,16.5);
\draw[->-=0.6] (9.5,12.5) -- (6.5,14.5);
\draw[->-=0.6] (6.5,10.5) -- (9.5,12.5);
\draw[->-=0.6] (14.5,10.5) -- (9.5,12.5);
\draw[->-=0.7] (3.5,8.5) -- (6.5,10.5);
\draw[->-=0.45] (14.5,8.5) -- (14.5,10.5);
\draw[->-=0.6] (14.5,6.5) -- (3.5,8.5);
\draw[->-=0.6] (14.5,6.5) -- (14.5,8.5);
\draw[->-=0.6] (3.5,4.5) -- (14.5,6.5);
\draw[->-=0.6] (6.5,2.5) -- (3.5,4.5);
\draw[->-=0.6] (14.5,0.5) -- (6.5,2.5);

\filldraw[very thick,fill=yellow] (0,18) rectangle (1,19);
\filldraw[very thick,fill=yellow] (3,18) rectangle (4,19);
\filldraw[very thick,fill=white] (14,16) rectangle (15,17);
\filldraw[very thick,fill=white] (6,14) rectangle (7,15);
\filldraw[very thick,fill=yellow] (9,12) rectangle (10,13);
\filldraw[fill=yellow] (6,10) rectangle (7,11);
\filldraw[fill=yellow] (14,10) rectangle (15,11);
\filldraw[fill=green] (3,8) rectangle (4,9);
\filldraw[fill=green] (14,8) rectangle (15,9);
\filldraw[fill=white] (14,6) rectangle (15,7);
\filldraw[very thick,fill=white] (3,4) rectangle (4,5);
\filldraw[very thick,fill=green] (6,2) rectangle (7,3);
\filldraw[very thick,fill=white] (14,0) rectangle (15,1);


\draw[dashed] (-1,9.5) -- (16,9.5);
\coordinate [label=center:{$\chi$}] (chi) at (-1.5,9.5);

\end{tikzpicture}}
\end{center}
\caption{Factorization of a finite typed partial order over 1-types $\pi_1, \dots, \pi_N$, with cut $\chi$, showing
$\bB^\times$ (thick lines), and $F^+(\chi)$ and $F^-(\chi)$ (shading).
}
\label{fig:factorization}
\end{figure}


\begin{lemma}
Let $\fA$ be a typed partial order over a set of types $\Pi$, with $\alpha, \beta \in \Pi$ distinct.
Suppose $\bbB$ is a factorization of $\fA$ in which every block of type $\alpha$ lies below every block of type $\beta$ in the block order. Then the formula~\eqref{eq:b3}, namely
\begin{equation*}
\forall x(\alpha(x) \rightarrow \forall y (\beta(y) \rightarrow  x < y)),
\end{equation*}
is true in $\fA$. Conversely, if~\eqref{eq:b3} is true in $\fA$, then there exists a factorization $\bbB$ of $\fA$  in which every block of type $\alpha$ lies below every block of type $\beta$ in the block order.
\label{lma:lmaBlockAlphaBetaGlobal}
\end{lemma}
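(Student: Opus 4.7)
The plan is to prove the two implications separately, with the trivial factorization (one block per realized 1-type) doing almost all of the work. For the forward direction, I would argue directly from the axioms of a factorization. Suppose $a,b \in X$ with $\fA \models \alpha[a]$ and $\fA \models \beta[b]$. By axiom (F\ref{enum:F1}) (and the fact that $\bB$ partitions $X$), $a$ belongs to some block $A \in \bB$ with $\tp(A) = \alpha$, and $b$ to some $B \in \bB$ with $\tp(B) = \beta$. The hypothesis on $\bbB$ then gives $A \ll B$, and axiom (F3) yields $a < b$, which is exactly~\eqref{eq:b3}. No real obstacle.

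For the converse, suppose~\eqref{eq:b3} holds in $\fA$. I would exhibit an explicit witness factorization. Let $\bB = \{X_\pi : \pi \in \Pi,\ X_\pi \neq \emptyset\}$, where $X_\pi = \{a \in X : \tp(a) = \pi\}$; this is the trivial factorization mentioned just before the lemma. Define $\ll$ on $\bB$ to be the single relation $X_\alpha \ll X_\beta$ (if both are non-empty; if either is empty, the factorization's conclusion holds vacuously over no $\alpha$- or $\beta$-blocks and we can simply leave $\ll$ empty). Because $\alpha \neq \beta$, the blocks $X_\alpha$ and $X_\beta$ are distinct, so $\ll$ is irreflexive and (vacuously) transitive, hence a strict partial order on $\bB$.

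The remaining work is a three-line axiom check: (F\ref{enum:F1}) is built into the definition of $X_\pi$; (F2) is trivial because each 1-type occurs in at most one block, so the restriction of $\ll$ to same-type blocks is a (vacuous) linear order; and (F3) reduces to the single instance $X_\alpha \ll X_\beta$, which asserts exactly that every $a$ with $\tp(a) = \alpha$ is $<$-below every $b$ with $\tp(b) = \beta$—and that is the content of the hypothesis~\eqref{eq:b3}. I do not expect any genuine obstacle; the lemma is really just the observation that global ``$\alpha$-below-$\beta$'' constraints of~\eqref{eq:b3}-form are precisely what block orderings on a one-block-per-type factorization can capture.
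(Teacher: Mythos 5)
Your proof is correct and follows essentially the same route as the paper: the forward direction is a direct unwinding of axiom (F3), and the converse uses the trivial one-block-per-type factorization with $\ll$ consisting of the single pair $\langle X_\alpha, X_\beta\rangle$ (empty if either set is empty), which is exactly the paper's construction. The only difference is that you spell out the axiom checks that the paper leaves implicit.
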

\begin{proof}
The first statement of the lemma is obvious. For the converse,
let $\bB$ consist of the non-empty sets $\set{a \in X \mid \tp(a) = \pi}$ for $\pi \in \Pi$. If
there is no $\alpha$-block or no $\beta$-block, let $\ll$ be the empty partial order. Otherwise,
 let $A \in \bB$ be the $\alpha$-block, let $B \in \bB$ be
the $\beta$-block, and let
$\ll = \set{\langle A, B \rangle}$. 
\end{proof}
If the typed partial order $\fA$ is clear from context, and $\bbB$ is a factorization of $\fA$, 
we write $\bbB \models \forall x(\alpha(x) \rightarrow \forall y (\beta(y) \rightarrow  x < y))$ to mean that every block of
type  $\alpha$ is less than every block of
type $\beta$ in the block order. The motivation for this
notation should be obvious from Lemma~\ref{lma:lmaBlockAlphaBetaGlobal}.
	
\begin{lemma}
Let $\fA$ be a typed partial order over a set of types $\Pi$, with $\alpha, \beta \in \Pi$ distinct.
Suppose $\bbB$ is a factorization of $\fA$ in which the set of blocks of type $\alpha \vee \beta$ is linearly ordered. 
Then the formula~\eqref{eq:b5b}, namely
\begin{equation*}
\forall x(\alpha(x) \rightarrow \forall y (\beta(y) \rightarrow  (x < y \vee y < x))),
\end{equation*}
is true in $\fA$. 
Conversely, if~\eqref{eq:b5b} is true in $\fA$, then there exists a factorization $\bbB$ of $\fA$ such that the set of blocks of type 
$\alpha \vee \beta$ is linearly ordered.
\label{lma:lmaBlockAlphaBetaAlternate}
\end{lemma}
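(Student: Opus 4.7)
The forward direction is immediate. Given any $a$ with $\tp(a) = \alpha$ and $b$ with $\tp(b) = \beta$, the blocks containing them are distinct (they carry different 1-types), hence by hypothesis $\ll$-comparable; clause~(F3) then forces $a < b$ or $b < a$.

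For the converse, the plan is to construct a factorization whose only non-trivial block structure lives on the $\alpha$- and $\beta$-elements. First I would partition the $\alpha$-elements by their $\beta$-profile: declare $a \sim_\alpha a'$ iff for every $b$ with $\tp(b) = \beta$ we have $a < b \Leftrightarrow a' < b$, and dually define $\sim_\beta$. For each $\sim_\alpha$-class $A$, write $L_\beta(A) := \{b \in X : \tp(b) = \beta \text{ and } a < b\}$, for any $a \in A$; dually define $L_\alpha(B)$ for $\sim_\beta$-classes. For each remaining type $\pi \in \Pi \setminus \{\alpha, \beta\}$, I simply take the single block consisting of all $\pi$-elements.

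The key step, which I expect to be the main obstacle, is showing that the $L_\beta$-values of distinct $\sim_\alpha$-classes are linearly ordered by strict inclusion, and that this inclusion reflects a block-wise separation in $<$. Given distinct classes $A, A'$, their $L_\beta$-values differ, so WLOG there is some $b \in L_\beta(A) \setminus L_\beta(A')$; hypothesis~\eqref{eq:b5b} applied to $b$ and any $a' \in A'$ yields $b < a'$, so $a < b < a'$ for every $a \in A$, showing that all of $A$ lies strictly below all of $A'$ in $<$. Using~\eqref{eq:b5b} again one rules out a cycle $a < b < a' < b' < a$ and concludes $L_\beta(A') \subsetneq L_\beta(A)$. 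A dual argument orders the $\sim_\beta$-classes.

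To finish, I declare $X \ll Y$ (for distinct blocks $X, Y$ both of type $\alpha$ or $\beta$) to hold iff every $x \in X$ lies below every $y \in Y$ in $<$, with no further $\ll$-relations. The previous step, together with hypothesis~\eqref{eq:b5b} for mixed $\alpha/\beta$ pairs, makes $\ll$ into a strict linear order on the $\alpha \vee \beta$ blocks and ensures~(F3). Conditions~(F1) and~(F2) then hold by construction: each block has uniform type, the $\alpha$- and $\beta$-blocks are linearly ordered by $\ll$, and every other type contributes only one block.
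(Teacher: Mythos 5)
Your proof is essentially the same as the paper's: both partition the $\alpha$- (respectively $\beta$-) elements by their cross-type comparability profile, observe that distinct classes of the same type must be elementwise separated in $<$ (using~\eqref{eq:b5b} to convert ``not $a' < b$'' into ``$b < a'$''), and take $\ll$ to be elementwise below-ness on $(\alpha \vee \beta)$-blocks with no further relations. The paper defines $\ll$ existentially (``some $c \in C$, $d \in D$ with $c < d$'') and then proves this coincides with the universal version you use directly, but that is only a cosmetic difference; your emphasis on the $L_\beta$-values being nested by strict inclusion is likewise just a repackaging of the paper's comparability argument via the intermediate element $b_0$. One small omission worth noting: you should handle the degenerate case where there are no $\alpha$-elements or no $\beta$-elements at all (the paper dispatches this by falling back to the trivial factorization), and you should say a word about why $\ll$ is irreflexive/acyclic on the whole of $\bB$, not just on the $(\alpha \vee \beta)$-blocks — both are easy, but they belong in a complete write-up.
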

\begin{proof}
The first statement of the lemma is obvious. 
For the converse, let $A_0 = \set{a \in X \mid \tp(a) = \alpha}$ and $B_0 = \set{a \in X \mid \tp(a) = \beta}$.
We may assume that both these sets are non-empty, since otherwise the trivial factorization satisfies the conditions of the lemma. 
Define an equivalence relation $\equiv$ on $A_0$ by setting $a \equiv a'$ if, for all $b \in B_0$, $a < b \Leftrightarrow a' < b$. Similarly, define an equivalence relation $\equiv$ on $B_0$ by setting $b \equiv b'$ if, for all 
$a \in A_0$, $b' <a  \Leftrightarrow b' < a$. 
Let $\bB$ be the partition of $X$ whose cells are: (i)
the equivalence classes of $\equiv$ in $A_0$, (ii) the equivalence classes of $\equiv$ in $B_0$, and (iii) the non-empty sets $\set{a \in X \mid \tp(a) = \pi}$, where 
$\pi \in \Pi \setminus \set{\alpha, \beta}$. For any $C, D \in \bB$ write $C \ll D$ just in case 
$C$ and $D$ are distinct $(\alpha \vee \beta)$-blocks such that
there exist $c \in C$ and $d \in D$ with $c < d$. 

To show that $\bbB = (\bB, \ll)$ has the desired properties, we first observe that, if $A$ is an $\alpha$-block and $B$ a $\beta$-block, then $A \ll B$ if and only if, for all $a \in A$ and $b \in B$,
$a < b$. Now suppose that $A$ and $A'$ are distinct $\alpha$-blocks. Let $b_0$ be some element 
such that there exist $a_0 \in A$ and $a_1 \in A'$ such that either $a_0 < b_0 < a_1$ or
$a_1 < b_0 < a_0$. In the former case, $a < a'$ for all $a \in A$ and $a' \in A'$; and in the latter,
$a > a'$ for all $a \in A$ and $a' \in A'$. Similar remarks apply to $\beta$-blocks. Thus,
for any 
$(\alpha \vee \beta)$-blocks $C$ and $D$, $C \ll D$ if and only if, for all $c \in C$ and $d \in D$,
$c < d$. It follows that $\ll$ is a partial order in which the collection of $(\alpha \vee \beta)$-blocks is linearly ordered. Therefore, the collection of 
$\alpha$-blocks and the collection of  $\beta$-blocks are also both linearly ordered. For $\gamma \in  \Pi \setminus\set{\alpha, \beta}$, there is at most one $\gamma$-block. That is, $(\bB,\ll)$ is a factorization of $\fA$.

\end{proof}
If the typed partial order $\fA$ is clear from context, and $\bbB$ is a factorization of $\fA$,  we write $\bbB \models \forall x(\alpha(x) \rightarrow \forall y (\beta(y) \rightarrow  (x < y \vee x > y))$ to mean that
the set of blocks of type  $\alpha \vee \beta$ is linearly ordered. The motivation for this
notation should be obvious from Lemma~\ref{lma:lmaBlockAlphaBetaAlternate}.

Suppose $\fA$ is a typed partial order and $\bbB_1 = (\bB_1, \ll_1)$, $\bbB_2= (\bB_2, \ll_2)$ are factorizations of $\fA$. We say that  $\bbB_2$ is a \textit{refinement} of $\bbB_1$ if, for all $A_2 \in \bB_2$, there exists a (necessarily unique) $A_1 \in \bB_1$ such that $A_2 \subseteq A_1$, and moreover, for all $A_2, B_2 \in \bB_2$,
and all $A_1, B_1 \in \bB_1$ such that $A_2 \subseteq A_1$ and $B_2 \subseteq B_1$, $A_1 \ll_1 B_1$ implies $A_2 \ll_2 B_2$.    
\begin{lemma}
Any two factorizations of a typed partial order have a common refinement.
\label{lma:commonRefinement}
\end{lemma}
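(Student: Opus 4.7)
My plan is to construct the common refinement as the coarsest partition that refines both partitions, together with the smallest block order forced on it by (F3). Let $\bbB_1 = (\bB_1, \ll_1)$ and $\bbB_2 = (\bB_2, \ll_2)$ be the given factorizations of $\fA = (X, <, \tp)$, and take
\begin{equation*}
\bB \ = \ \{\, A_1 \cap A_2 \mid A_1 \in \bB_1,\ A_2 \in \bB_2,\ A_1 \cap A_2 \neq \emptyset\,\}.
\end{equation*}
Define $C \ll D$ for $C, D \in \bB$ to mean that $C \neq D$ and $c < d$ holds for every $c \in C$ and every $d \in D$. Since $<$ is a strict partial order, $\ll$ is automatically irreflexive and transitive. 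Property (F1) is immediate: an element of $A_1 \cap A_2$ lies in both $A_1$ and $A_2$, so the type assignments forced by (F1) for $\bbB_1$ and $\bbB_2$ must agree. Property (F3) is built into the definition of $\ll$.

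The heart of the argument is establishing (F2), and this is also where I expect the main difficulty. Suppose $C = A_1 \cap A_2$ and $C' = A_1' \cap A_2'$ are distinct blocks of $\bB$ sharing a common type $\pi$. Then $A_1, A_1'$ are $\pi$-blocks of $\bbB_1$ and $A_2, A_2'$ are $\pi$-blocks of $\bbB_2$, so by (F2) applied to each of $\bbB_1$ and $\bbB_2$, the pairs $(A_1, A_1')$ and $(A_2, A_2')$ are each linearly ordered (or equal). If at least one of the two pairs consists of a single block, the argument reduces immediately to (F3) for the other factorization. The critical case is when $A_1 \neq A_1'$ and $A_2 \neq A_2'$: here the key observation is that if $A_1 \ll_1 A_1'$ while $A_2' \ll_2 A_2$, then picking any $c \in C$ and $c' \in C'$ yields simultaneously $c < c'$ and $c' < c$, contradicting anti-symmetry of $<$. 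Hence the orientations induced by $\bbB_1$ and $\bbB_2$ must agree, and in whichever direction they point, every element of one block is below every element of the other, giving comparability in $\ll$.

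Finally, I verify that $(\bB, \ll)$ refines each $\bbB_i$. Refinement of partitions is immediate from the definition of $\bB$. For the block order, suppose $A_2 \subseteq A_1 \in \bB_1$, $B_2 \subseteq B_1 \in \bB_1$, with $A_2 \neq B_2$ and $A_1 \ll_1 A_2$; then (F3) for $\bbB_1$ gives $a < b$ for all $a \in A_2$ and $b \in B_2$, so $A_2 \ll B_2$ by definition. The argument for $\bbB_2$ is symmetric.

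The only real subtlety is the one already flagged in (F2): coherence of the two linear orders on same-type blocks. All other properties follow transparently once the common refinement is defined in terms of the element ordering rather than by some ad hoc combination of $\ll_1$ and $\ll_2$, which would otherwise fail transitivity.
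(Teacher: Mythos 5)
Your proof is correct. It uses the same partition into pairwise intersections that the paper uses, but a different block order. The paper takes $\ll$ to be the transitive closure of the relation holding between $A_1 \cap A_2$ and $B_1 \cap B_2$ whenever $A_1 \ll_1 B_1$ or $A_2 \ll_2 B_2$; a short induction then shows this relation is contained in the element order $<$, giving irreflexivity, after which the factorization and refinement properties can be read off the generating relation. You instead define $C \ll D$ directly by the element-order condition that every element of $C$ lie below every element of $D$. This is the largest block order on the intersection partition compatible with (F3), whereas the paper's is essentially the smallest one witnessing the refinement; in general yours relates strictly more pairs, though both are perfectly valid common refinements. With your definition, irreflexivity, transitivity and (F3) come for free, but (F2) needs the case analysis you supply: when two distinct same-type intersection blocks $A_1 \cap A_2$ and $A_1' \cap A_2'$ have $A_1 \neq A_1'$ and $A_2 \neq A_2'$, the orientations induced by $\ll_1$ and $\ll_2$ must agree, which you establish by an anti-symmetry contradiction on $<$. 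That coherence argument is exactly the work the paper avoids by inducing its order from $\ll_1$ and $\ll_2$ rather than from $<$. Both routes are sound; yours trades a slightly harder (F2) for a definition with no closure operator and no induction. One typo: in your refinement check, ``$A_1 \ll_1 A_2$'' should read ``$A_1 \ll_1 B_1$''.
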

\begin{proof}
Let $\bbB_1 = (\bB_1, \ll_1)$ and  $\bbB_2 = (\bB_2, \ll_2)$ be factorizations of the typed partial order $\fA$. Define  $\bbB = (\bB, \ll)$ as follows:
let 
\begin{equation*}
\bB = \set{B_1 \cap B_2 \mid B_1 \in \bB_1, B_2 \in \bB_2} \setminus \set{\emptyset}; 
\end{equation*}
and 
let $\ll$ be the transitive closure of the relation 
\begin{equation*}
\set{\langle A, B  \rangle \mid A= A_1 \cap A_2, B = B_1 \cap B_2,
	A_1 \ll_1 B_1 \mbox{ or } A_2 \ll_2 B_2}.
\end{equation*}
A simple induction shows that, if $A$ is related to $B$ by $\ll$ then, for all $a \in A$ and all $b \in B$, 
$a < b$. It follows that $\ll$ is irreflexive and hence is a partial order. It is thus immediate
from the definition of $\ll$ that the partially ordered set $\bbB = (\bB, \ll)$ is a factorization of $\fA$ and moreover that it is a refinement of both $\bbB_1$ and $\bbB_2$.
\end{proof}
Refinements of block orders are useful because they preserve the properties featured in Lemmas~\ref{lma:lmaBlockAlphaBetaGlobal} and~\ref{lma:lmaBlockAlphaBetaAlternate}. The following Lemma is immediate.
\begin{lemma}
Let $\fA$ be a typed partial order and $\bbB$, $\bbB'$ factorizations of $\fA$ with $\bbB'$ a refinement of\/ $\bbB$. Let $\psi$ be a factor-controllable basic formula. If $\bbB \models \psi$, then $\bbB' \models \psi$.
\label{lma:refinePreserve}
\end{lemma}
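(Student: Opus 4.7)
The plan is to argue by cases on the form of $\psi$, which is either~\eqref{eq:b3} or~\eqref{eq:b5b}. Write $\bbB = (\bB, \ll)$ and $\bbB' = (\bB', \ll')$, and let $\alpha, \beta$ be the two distinct 1-types mentioned in $\psi$. By the refinement condition, every $C' \in \bB'$ is contained in a (unique) block $C \in \bB$, and by (F1) applied to both factorizations, $C'$ and $C$ share the same 1-type.

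\textbf{Case~\eqref{eq:b3}:} Here $\bbB \models \psi$ means every $\alpha$-block of $\bbB$ lies below every $\beta$-block of $\bbB$ in $\ll$. Pick an arbitrary $\alpha$-block $A' \in \bB'$ and an arbitrary $\beta$-block $B' \in \bB'$, and let $A, B \in \bB$ be their respective parents. Then $A$ is an $\alpha$-block and $B$ is a $\beta$-block, so $A \ll B$ by hypothesis, and the defining clause for refinement gives $A' \ll' B'$ immediately.

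\textbf{Case~\eqref{eq:b5b}:} Here $\bbB \models \psi$ means the $(\alpha \vee \beta)$-blocks of $\bbB$ form a chain under $\ll$. I need to show the same for $\bbB'$. Take distinct $(\alpha \vee \beta)$-blocks $C_1', C_2' \in \bB'$. If $\tp(C_1') = \tp(C_2')$, they are comparable in $\ll'$ by property (F2) of $\bbB'$ itself (no appeal to refinement needed). Otherwise, say $C_1'$ is an $\alpha$-block and $C_2'$ a $\beta$-block; their parents $C_1, C_2 \in \bB$ are then of distinct types as well, hence distinct, and by hypothesis one of $C_1 \ll C_2$ or $C_2 \ll C_1$ holds. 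The refinement clause then transfers this comparability to $C_1', C_2'$.

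There is essentially no obstacle: the only thing one has to be careful about is that the refinement condition only gives us comparability in $\ll'$ when the \emph{parent} blocks are comparable in $\ll$, which breaks down when two blocks of $\bbB'$ share a common parent. This situation can only arise in Case~\eqref{eq:b5b} for blocks of the same type, where it is handled by the ambient factorization axiom (F2) for $\bbB'$.
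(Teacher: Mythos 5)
Your proof is correct, and it fills in exactly the argument the paper leaves implicit when it declares the lemma ``immediate'': case \eqref{eq:b3} is a direct application of the refinement condition, and case \eqref{eq:b5b} splits into the cross-type subcase (handled by refinement plus the hypothesis on $\bbB$) and the same-type subcase (handled by axiom (F2) for $\bbB'$). Your closing remark correctly identifies the one point that needs care, namely that two distinct $(\alpha \vee \beta)$-blocks of $\bbB'$ sharing a parent must be of the same type, so that subcase is covered without refinement.
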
 

A {\em unit block} of $\bbB$ is a block containing exactly one element of $A$. Trivially, every unit block
is linearly ordered by $<$. We say that $\bbB$ is \textit{unitary} if every block of $\bB$ which is linearly ordered by $<$ is a unit block.
Combining all of the above lemmas, we have:
\begin{lemma}
Let $\fA$ be a typed partial order and 
$\Psi$ a finite set of basic formulas such that $\fA \models \Psi$. Then there is  a unitary factorization $\bbB$ of $\fA$ such that $\bbB \models \FC(\Psi)$.
\label{lma:factorize}
\end{lemma}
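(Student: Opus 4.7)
The plan is to produce the required factorization in three stages: first construct, for each factor-controllable conjunct of $\Psi$ separately, a witnessing factorization; next glue these together into a single common refinement; and finally atomize the linearly-ordered blocks of that refinement into singletons.

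More explicitly, for each $\psi \in \FC(\Psi)$ of the form~\eqref{eq:b3}, Lemma~\ref{lma:lmaBlockAlphaBetaGlobal} yields a factorization $\bbB_\psi$ of $\fA$ with $\bbB_\psi \models \psi$; and for each $\psi \in \FC(\Psi)$ of the form~\eqref{eq:b5b}, Lemma~\ref{lma:lmaBlockAlphaBetaAlternate} yields such a factorization. Since $\FC(\Psi)$ is finite, we can iteratively apply Lemma~\ref{lma:commonRefinement} to obtain a factorization $\bbB^*$ of $\fA$ that is a common refinement of all the $\bbB_\psi$. By Lemma~\ref{lma:refinePreserve}, $\bbB^* \models \psi$ for every $\psi \in \FC(\Psi)$, so $\bbB^* \models \FC(\Psi)$.

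It then remains to make $\bbB^*$ unitary. For each block $B$ of $\bbB^*$ that is linearly ordered by $<$ but not already a singleton, enumerate its elements as $b_1 < b_2 < \dots < b_k$ and replace $B$ with the $k$ unit blocks $\{b_1\}, \dots, \{b_k\}$. Extend the block order $\ll^*$ on the remaining blocks by declaring $\{b_i\} \ll \{b_j\}$ whenever $i < j$, and $A \ll \{b_i\} \ll C$ whenever $A \ll^* B \ll^* C$ in the original. Call the result $\bbB$.

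The main step to check is that $\bbB$ is indeed a factorization of $\fA$ and is a refinement of $\bbB^*$. Condition (F\ref{enum:F1}) is immediate since all the $b_i$ share the type of $B$. For (F2), the old chain of $\tp(B)$-blocks is simply lengthened in place by the chain $\{b_1\} \ll \dots \ll \{b_k\}$, and the side-conditions $A \ll^* B$, $B \ll^* C$ translate uniformly to every singleton; so the blocks of each type remain linearly ordered. For (F3), if $A \ll \{b_i\}$ comes from $A \ll^* B$, then every $a \in A$ satisfies $a < b_i$ by condition (F3) for $\bbB^*$; the case $\{b_i\} \ll C$ is symmetric; and $\{b_i\} \ll \{b_j\}$ is by construction just $b_i < b_j$. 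Transitivity and irreflexivity of $\ll$ then follow. Finally, $\bbB$ is a refinement of $\bbB^*$ by construction, so Lemma~\ref{lma:refinePreserve} gives $\bbB \models \FC(\Psi)$. The non-atomized blocks of $\bbB$ are exactly the non-linearly-ordered blocks of $\bbB^*$, so every linearly-ordered block of $\bbB$ is a singleton, i.e.~$\bbB$ is unitary. The only genuine obstacle in this argument is the verification of (F2) and (F3) after atomization, and both reduce, via the refinement relation, to properties already guaranteed by $\bbB^*$.
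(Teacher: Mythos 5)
Your proof is correct and follows the same strategy as the paper: apply Lemma~\ref{lma:lmaBlockAlphaBetaGlobal} or~\ref{lma:lmaBlockAlphaBetaAlternate} to each factor-controllable formula, take a common refinement via Lemma~\ref{lma:commonRefinement}, atomize the linearly ordered blocks into singletons, and invoke Lemma~\ref{lma:refinePreserve}. You simply spell out, at somewhat greater length, the verification of the factorization and refinement conditions for the final atomization step, which the paper treats as routine.
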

\begin{proof}
For each $\psi \in \FC(\Psi)$, we apply Lemmas~\ref{lma:lmaBlockAlphaBetaGlobal} or~\ref{lma:lmaBlockAlphaBetaAlternate} as appropriate, and take a common refinement of all the resulting factorizations by Lemma~\ref{lma:commonRefinement}. Now further refine by replacing all linearly ordered blocks with unit blocks having the
obvious block order. The result then follows by Lemma~\ref{lma:refinePreserve}.
\end{proof}

Let $\fA = (X,<, \tp)$ be a typed partial order and $\bbB = (\bB, \ll)$ a factorization of $\fA$. 
We have already observed that $\bbB$ does not contain all the information required to reconstruct the element order $<$. However, it very nearly does, in a sense that we can make precise.
Let us first overload the block-order $\ll$ by writing for all
$a, b \in X$, $a \ll b$ if there exist $A, B \in \bB$ such that $a \in A$, $b \in B$ and $A \ll B$. We might call $\ll$ the \textit{inter-block order} on $X$. It is obvious that the inter-block order is a partial order, and, from {(B3)}, that it is contained in the element order $<$. Now, for all $a, b \in X$, write $a <_0 b$ if $a <b$, and both $a$ and $b$ belong to the same block of $\bB$. Again, this is clearly a partial order: we call it the {\em intra-block order}. Finally, define $a <_\times b$ if $a <b$, and both $a$ and $b$ are both extremal elements of $(X,<)$;  
once again, $<_\times$ is clearly a partial order: we call it the {\em extremal order}. Now define the binary relation
$\lessdot$ on $X$ to be the transitive closure of
$(\ll \cup <_0 \cup <_\times)$. It is obvious that $\lessdot$ is a partial order no stronger than (i.e.~included in) $<$, but that, nevertheless, $\bbB$ is a factorization of the typed partial order
$(X,\lessdot, \tp)$. It is also obvious that, when restricted to elements of some fixed 1-type $\pi$,
$<$ and $\lessdot$ coincide.
We say that $\fA$ is {\em thin}  over $\bbB$ if $<$ and $\lessdot$ coincide over the whole of $X$. 
\begin{lemma}
	Suppose $\fA = (X,<, \tp)$ is a finite typed partial order and $\Psi$ a set of basic formulas such that $\fA \models \Psi$. Let 
	$\bbB$ be a factorization of $\fA$  such that $\bbB \models \FC(\Psi)$. Then there exists a typed partial order $\dot{\fA}$ over the domain $X$ such that $\bbB$ is a factorization of $\dot{\fA}$, 
	$\dot{\fA} \models \Psi$, and $\dot{\fA}$ is thin over $\bbB$.  
	\label{lma:loseFiveKG}
\end{lemma}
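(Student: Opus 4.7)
The natural candidate is $\dot{\fA} := (X, \lessdot, \tp)$, where $\lessdot$ is the partial order already constructed from $\fA$ and $\bbB$ in the discussion preceding the lemma. By that discussion, $\lessdot$ is a partial order contained in $<$, $\bbB$ is a factorization of $\dot{\fA}$, and $<$ and $\lessdot$ coincide on each single 1-type. For thinness I would observe that the three ingredients used to define $\lessdot$---the inter-block order $\ll$, the intra-block order, and the extremal order $<_\times$---are all unchanged on passing from $\fA$ to $\dot{\fA}$: blocks sit inside 1-types, and the extremal elements of each 1-type are determined by the restriction of the order to that 1-type. Hence rebuilding $\lessdot$ inside $\dot{\fA}$ returns $\lessdot$ itself.

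The rest of the argument is a case analysis on the basic forms. Those depending only on 1-types, or only on $<$ restricted to a single 1-type---namely $\eqref{eq:b1a}$, $\eqref{eq:b1b}$, $\eqref{eq:b2a}$, $\eqref{eq:b2b}$, $\eqref{eq:b5a}$, $\eqref{eq:b9}$, $\eqref{eq:b10}$---transfer from $\fA$ to $\dot{\fA}$ at once. The factor-controllable forms $\eqref{eq:b3}$ and $\eqref{eq:b5b}$ are handled using $\bbB \models \FC(\Psi)$ together with Lemmas~\ref{lma:lmaBlockAlphaBetaGlobal} and~\ref{lma:lmaBlockAlphaBetaAlternate}: the block order then supplies, via $\ll$, the required $\lessdot$-relation between elements of different 1-types. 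The universal forms $\eqref{eq:b4}$ and $\eqref{eq:b8}$, which respectively forbid a certain inequality or demand an incomparable witness, survive simply because $\lessdot$ is contained in $<$.

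The main obstacle is posed by the existential formulas $\eqref{eq:b6}$ and $\eqref{eq:b7}$; I sketch the plan for $\eqref{eq:b6}$, as $\eqref{eq:b7}$ is symmetric. Given $a \in X$ of type $\alpha$, choose a maximal $a^+$ of type $\alpha$ with $a \le a^+$; by finiteness it exists, it is extremal, and $a \lessdot a^+$ or $a = a^+$, since $<$ and $\lessdot$ agree within the 1-type $\alpha$. Applying $\eqref{eq:b6}$ in $\fA$ to $a^+$ gives some $b$ with $\mu(b)$, $\neg\alpha(b)$, and $a^+ < b$; set $\gamma := \tp(b) \neq \alpha$ and pick a maximal $b^+$ of type $\gamma$ with $b \le b^+$. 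Then $b^+$ is extremal, $a^+ < b^+$, and both endpoints being extremal we obtain $a^+ <_\times b^+$, hence $a^+ \lessdot b^+$ and therefore $a \lessdot b^+$. Since $\mu$ is a pure Boolean formula, $\mu(b^+)$ holds, and $\neg\alpha(b^+)$ because $\tp(b^+) = \gamma \neq \alpha$. The decisive point is the use of the extremal order $<_\times$: it is the only mechanism in the definition of $\lessdot$ that can bridge different 1-types, and passing up to maximal elements of the relevant types is what brings it into play.
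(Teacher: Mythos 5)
Your proof is correct and follows essentially the same route as the paper: take $\dot{\fA}=(X,\lessdot,\tp)$ with $\lessdot$ the transitive closure of $\ll\cup<_0\cup<_\times$, observe that the three ingredients (and hence $\lessdot$ itself) are invariant under passing to $\dot{\fA}$ so that thinness holds, and then run the case analysis, with the extremal order bridging 1-types for the existential forms~\eqref{eq:b6} and~\eqref{eq:b7} exactly as in the paper. The only blemish is a mis-filed case: \eqref{eq:b2b} involves two distinct 1-types, so it is not justified by ``$<$ restricted to a single 1-type'' but rather by $\lessdot\subseteq<$ (as you already use for \eqref{eq:b4} and \eqref{eq:b8}); this does not affect the soundness of the argument.
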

\begin{proof}
	Define $\lessdot$ to be the transitive closure of
	$(\ll \cup <_0 \cup <_\times)$, as just described, and let
	$\dot{\fA} = (X,\lessdot, \tp)$. Thus, $\bbB$ is a factorization of $\dot{\fA}$, with $\dot{\fA}$ thin over $\bbB$.
	We show that $\dot{\fA} \models \psi$, where $\psi \in \Psi$ is of each of the possible forms~\eqref{eq:b1a}--\eqref{eq:b10} in turn. 
		\begin{description}
		\item \eqref{eq:b1a}, \eqref{eq:b1b}, \eqref{eq:b9}, \eqref{eq:b10}: $\psi$ does not involve the ordering.
		\item \eqref{eq:b2a}, \eqref{eq:b2b}, \eqref{eq:b4}, \eqref{eq:b8}: $\lessdot$ is no stronger than $<$ \ .
		\item \eqref{eq:b3}, \eqref{eq:b5b}: $\bbB \models \psi$.
		\item \eqref{eq:b5a}: When restricted to elements of some fixed type,		
		$<$ and $\lessdot$ coincide.
		\item \eqref{eq:b6}: Suppose that $a \in X$ is of type $\alpha$. Since $<$ and $\lessdot$ coincide on elements of some fixed type, let $a^*$ be a maximal
		element of type $\alpha$ such that either $a = a^*$ or $a < a^*$ (equivalently: $a = a^*$ or $a \lessdot a^*$). But $\fA \models \psi$, so there exists an element $b$ satisfying $\mu$---say of type $\beta \neq \alpha$---such that $a^* < b$, and hence
		a maximal element $b^*$ of type $\beta$ such that $a^* <_\times b^*$. But then 
		$a \lessdot b^*$, whence $\dot{\fA} \models \psi$.
		\item \eqref{eq:b7}: Similar to~\eqref{eq:b6}.	
	\end{description}
\end{proof}

\subsection{Reducing the number of blocks}
\label{subsec:blockCut}
Suppose $\fA = (X,<,\tp)$ is a finite typed partial order with factorization
\mbox{$\bbB = (\bB, \ll)$}. The following notions will help us to reason about $\bbB$. 
Recall that $\bB^\times$ denotes the set of extremal blocks of $\bB$.
If $B \in \bB$, define the {\em depth} of $B$, denoted $d(B)$, to be the length $m$ of the longest path $B= B_0 \ll \cdots \ll B_m$.   
The {\em depth} of $\bbB$, denoted $d(\bbB)$, is the maximum value attained by $d(B)$ for $B \in \bB$. A {\em cut} is a number $\chi = i + 0.5$ where $0 \leq i < d(\bbB)$. If $\chi$ and $\chi'$ are cuts, we say $\chi'$ is \textit{above} $\chi$
(and $\chi$ is \textit{below} $\chi'$)
 if $\chi' < \chi$. (Depth increases as we go down.)  Similarly, if $B \in \bB$, we say that $B$ is {\em above}  $\chi$ if  $d(B) < \chi$, and {\em below} $\chi$ if $d(B) > \chi$. If $\chi'$ is also a cut of $\bbB$ with $\chi$ below $\chi'$, we say that 
$B$ is {\em between} $\chi$ and $\chi'$ if it is above $\chi$ and below $\chi'$. 

For any cut $\chi$, and any 1-type $\pi$, a  
a \textit{minimal} $\pi$-block \textit{above} $\chi$ is a block $B$ 
such that $\tp(B) = \pi$, $d(B) < \chi$ and, for all $B' \in \bB$ such that
$\tp(B') = \pi$ and $d(B') < \chi$, $d(B) \geq d(B')$. A \textit{minimal} block \textit{above} $\chi$ is a minimal $\pi$-block above $\chi$ for some $\pi$. 
The notion of {\em maximal} ($\pi$)-block {\em below} $\chi$ is defined analogously.
Denote by $F^+(\chi)$ the set of {\em minimal} blocks 
above $\chi$, and by $F^-(\chi)$ the set of {\em maximal} blocks 
below $\chi$. Note that $F^+(\chi)$ contains at most one block of each type, and similarly for $F^-(\chi)$. Let $F(\chi)= F^-(\chi) \cup F^+(\chi) \cup \bB^\times$; we call $F(\chi)$ the {\em frontier} of $\chi$. 
Fig.~\ref{fig:factorization} shows a cut $\chi = 4.5$ in a factorization of a typed partial order over 
$\pi_1, \dots, \pi_N$. The sets of blocks $F^+(\chi)$ and $F^-(\chi)$ are shown by shading.

If $\chi$ and $\chi'$ are cuts of $\bbB$, with $\chi$ below $\chi'$, we say that $\chi$ and $\chi'$ are \textit{equivalent} if 
there exists a function $f: F(\chi) \rightarrow F(\chi')$ satisfying the following conditions;
\begin{description}
\item (E1): $f$ maps $F^-(\chi)$ to $F^-(\chi')$, $f$ maps $F^+(\chi)$ to $F^+(\chi')$, and $f$ is the identity on $\bB^\times$;
\item (E2): $f: F(\chi) \rightarrow F(\chi')$ is a typed partial order isomorphism, i.e.,~$f$ is 1--1 and onto,
for all $B \in F(\chi)$, $\tp(B) = \tp(f(B))$, and
for all $A, B \in F(\chi)$, and $A \ll B \Leftrightarrow f(A) \ll f(B)$.
\end{description}
Suppose $\chi$ and $\chi'$ are equivalent cuts, with $\chi'$ above $\chi$.
Obviously, no extremal block can lie between $\chi$ and $\chi'$. For example, if 
$A$ is minimal, then $f(A) = A$, $A \in F^+(\chi)$ but $A \not \in F^+(\chi')$, contradicting the requirements of (E1); a similar argument applies if $A$ is maximal.
Equally obviously,
since $F^+(\chi')$ and $F^-(\chi')$ each contain at most one block of any given type,
the function $f$, if it exists, is unique by the requirements of (E2); we denote it by $f_{\chi,\chi'}$.
Observe finally that, for any block $B$ in $F^-(\chi) \cup F^+(\chi)$, the blocks $B$ and $f_{\chi,\chi'}(B)$ stand in the same relations ($\ll$, $\gg$ or $=$) to all extremal blocks.

Fixing $\fA = (X, <, \tp)$ and $\bbB = (\bB,\ll)$, suppose $\chi$, $\chi'$ are equivalent cuts in $\bbB$ with $\chi$ below $\chi'$. 
Let $\bB^- = \set{B \in \bB \mid \text{$d(B) > \chi$}}$ be the set of blocks below $\chi$ and 
$\bB^+ = \set{B \in \bB \mid \text{$d(B) < \chi'$}}$ the set of blocks above $\chi'$. Define $\bB^* = \bB^- \cup \bB^+$, and
define the relation $\ll^*$ on $\bB^*$ to be the transitive closure of the three relations
\begin{align*}
& \set{\langle A,B \rangle \in (\bB^-)^2 \mid A \ll B}\\
& \set{\langle A,B \rangle \in (\bB^+)^2 \mid A \ll B}\\ 
& \set{\langle B,f_{\chi,\chi'}(C) \rangle \mid B \in F^-(\chi), C \in F^+(\chi), B \ll C}.
\end{align*}
Denote by $\bbB^*$ the pair $(\bB^*,\ll^*)$.
Let $X^* = \bigcup \bB^*$, and let $\tp^*$ be the restriction of the function $\tp$ to $X^*$. Noting that $\bB^\times \subseteq \bB^*$, we see that the extremal order $<_\times$ is defined on $X^*$. Let $<^*_0$ be the restriction of the
intra-block order $<_0$ to $X^*$. As before, we overload the symbol $\ll^*$ so that it denotes the inter-block order on $X^*$ under $\bbB^*$ : $a \ll^* b$ if the blocks
$A, B \in \bB^*$ such that $a \in A$, $b \in B$ satisfy $A \ll^* B$. (Note that $\ll^*$ is not in general equal to the restriction to $X^*$ of the inter-block order on $X$ under $\bbB$.)
Finally, let 
$<^*$ be the transitive closure of $(\ll^* \cup <^*_0 \cup <_\times)$. We denote by $\fA^*$ the triple $(X^*,<^*,\tp^*)$.

For Lemmas~\ref{lma:cutBlockWeakening}--\ref{lma:cutBlockOrderPreserve},
we keep $\fA$, $\bbB$, $\chi$ and $\chi'$ fixed, with $\bbB^*$ and $\fA^*$ as defined above.
\begin{lemma}
For all $A, B \in \bB^*$, $A \ll^* B$ implies $A \ll B$. In addition,
for any 1-type $\alpha$, the $\alpha$-blocks of $\bB^*$ are
linearly ordered by $\ll^*$.
\label{lma:cutBlockWeakening}
\end{lemma}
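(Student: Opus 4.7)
My plan is to treat the two assertions separately, exploiting only the definition of $\ll^*$ as a transitive closure, the isomorphism properties of $f_{\chi,\chi'}$, and the fact from (F2) that same-type blocks in $\bB$ form a linear chain under $\ll$. Throughout I will use the basic observation that, by the definition of depth via longest ascending $\ll$-chains, $A\ll A'$ forces $d(A)>d(A')$, and therefore in the linear chain of blocks of any fixed type, the order along the chain coincides with decreasing depth.

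For the first assertion, since $\ll$ is transitive, it suffices to verify that each of the three generating relations of $\ll^*$ is contained in $\ll$. The first two are literal restrictions of $\ll$, so the only real work is the third: given $B\in F^-(\chi)$ and $C\in F^+(\chi)$ with $B\ll C$, I must show $B\ll f_{\chi,\chi'}(C)$. Write $\pi=\tp(C)=\tp(f_{\chi,\chi'}(C))$. Since $f_{\chi,\chi'}(C)\in F^+(\chi')$, it has depth strictly less than $\chi'<\chi$, and so is also a $\pi$-block above $\chi$; the defining maximality property of $F^+(\chi)$ then yields $d(f_{\chi,\chi'}(C))\le d(C)$. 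Combined with the observation above, this forces $C\preceq f_{\chi,\chi'}(C)$ in the $\pi$-chain (writing $\preceq$ for $\ll$ together with equality), and transitivity of $\ll$ delivers $B\ll f_{\chi,\chi'}(C)$.

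For the second assertion, let $A_1,A_2\in\bB^*$ be distinct $\alpha$-blocks. Since the $\alpha$-blocks of $\bB$ are linearly ordered by $\ll$, I may assume $A_1\ll A_2$. If $A_1,A_2$ both lie in $\bB^-$ or both lie in $\bB^+$, then $\langle A_1,A_2\rangle$ is itself a generator of $\ll^*$. The case $A_1\in\bB^+$, $A_2\in\bB^-$ is ruled out by a depth estimate: $A_1\ll A_2$ forces $d(A_1)>d(A_2)$, contradicting $d(A_1)<\chi'<\chi<d(A_2)$. In the remaining case $A_1\in\bB^-$, $A_2\in\bB^+$, let $B\in F^-(\chi)$ and $C\in F^+(\chi)$ be the $\alpha$-blocks of the two frontiers; these exist because $A_1$ witnesses an $\alpha$-block below $\chi$ and $A_2$ witnesses one above $\chi$. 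In the $\alpha$-chain of $\bB$ I then have $A_1\preceq B$ and $C\preceq A_2$, while part~1 gives $C\preceq f_{\chi,\chi'}(C)$ and the analogous maximality property of $F^+(\chi')$ applied to $A_2$ gives $f_{\chi,\chi'}(C)\preceq A_2$. Splicing $A_1\ll^* B$ (first generator), $B\ll^* f_{\chi,\chi'}(C)$ (third generator), and $f_{\chi,\chi'}(C)\ll^* A_2$ (second generator), with any coincidences collapsing harmlessly, yields $A_1\ll^* A_2$.

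The only delicate point in the entire argument is the depth comparison $d(f_{\chi,\chi'}(C))\le d(C)$ underpinning part~1, whose validity hinges crucially on $\chi'<\chi$ and the sameness of type ensured by the fact that $f_{\chi,\chi'}$ is a typed isomorphism sending $F^+(\chi)$ onto $F^+(\chi')$; once this is in hand, everything else is routine book-keeping with linear chains and the three generators of $\ll^*$.
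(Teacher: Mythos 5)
Your proof is correct and follows the same strategy as the paper's. Your careful verification that the third generating relation is contained in $\ll$ — via the depth comparison $d(f_{\chi,\chi'}(C))\le d(C)$ and linearity of the $\pi$-chain — is in fact needed to fully justify the paper's terse claim that all three generators lie inside $\ll$, and the rest matches the paper's argument, which likewise reduces the second assertion to linking the maximal $\alpha$-block of $F^-(\chi)$ to the minimal $\alpha$-block of $F^+(\chi')$ through the third generator and then splicing with the generators within $\bB^-$ and $\bB^+$.
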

\begin{proof}
The first assertion is immediate from the fact that $\ll^*$ is the transitive closure of three relations all contained in $\ll$.
For the second assertion, observe first that, if there are no $\alpha$-blocks below $\chi$ or above $\chi'$, the result is immediate; hence we may assume otherwise.
It suffices to show that, if $B \in F^-(\chi)$ is the maximal
$\pi$-block in $\bbB$ below $\chi$ and $A$ the minimal $\pi$-block above $\chi'$, then $B \ll^* A$. Let $C \in F^+(\chi)$ be the minimal $\pi$-block above $\chi$, so that $B \ll C$. But then $B \ll^* f_{\chi,\chi'}(C) = A$, and we are done. 
\end{proof}

\begin{lemma}
For all $a, b \in X^*$, $a <^* b$ implies $a < b$.
If, in addition, $\tp^*(a)= \tp^*(b)$, the converse implication holds.
\label{lma:cutWeakening}
\end{lemma}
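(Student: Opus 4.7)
The plan is to establish the two implications separately, relying heavily on the block-level work already done in Lemma \ref{lma:cutBlockWeakening}.

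For the forward direction, $a <^* b \Rightarrow a < b$, I would observe that since $<$ is itself transitive, it suffices to show that each of the three generators of $<^*$ --- namely the element-level $\ll^*$, $<^*_0$, and $<_\times$ --- is contained in $<$. Containment of $<^*_0$ in $<$ is immediate from its definition as a restriction of the intra-block order $<_0 \subseteq <$; containment of $<_\times$ in $<$ is likewise trivial. For the element-level $\ll^*$, if $a \ll^* b$ then their containing blocks $A, B \in \bB^*$ satisfy $A \ll^* B$; by Lemma \ref{lma:cutBlockWeakening} this implies $A \ll B$ in the original factorization $\bbB$, and then factorization condition (F3) yields $a < b$. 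Closing under transitivity gives $<^* \subseteq <$.

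For the reverse direction, assume $\tp^*(a) = \tp^*(b) = \pi$ and $a < b$, and let $A, B \in \bB^*$ be the blocks of $\bbB^*$ containing $a$ and $b$; by (F1) both are $\pi$-blocks. If $A = B$, then $a <_0 b$ by definition of the intra-block order, hence $a <^*_0 b$, and therefore $a <^* b$. Otherwise, I would invoke the second assertion of Lemma \ref{lma:cutBlockWeakening}: the $\pi$-blocks of $\bB^*$ are linearly ordered by $\ll^*$, so either $A \ll^* B$ or $B \ll^* A$. The latter alternative, combined with the forward direction just established, would give $b < a$, contradicting $a < b$ via the asymmetry of the strict partial order. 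Hence $A \ll^* B$, whence $a \ll^* b$, and finally $a <^* b$.

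The real substance of this lemma has been packaged into Lemma \ref{lma:cutBlockWeakening}; the present argument merely transfers block-level facts to the element level. The only delicacy I anticipate is keeping the overloaded uses of $\ll^*$ (block versus element) straight, and noting that the extremal elements --- whose order is carried by $<_\times$ rather than by $\ll^*$ --- are automatically included in $X^*$ because $\bB^\times \subseteq \bB^*$. I do not expect any genuine obstacle beyond careful unpacking of the definitions.
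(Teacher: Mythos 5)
Your proof is correct and follows essentially the same route as the paper: forward direction from the fact that all three generators of $<^*$ are contained in $<$, and converse direction by splitting on whether $a$ and $b$ lie in the same block and, if not, using the linear ordering of same-type blocks in $\bB^*$ (second assertion of Lemma~\ref{lma:cutBlockWeakening}) to rule out $B \ll^* A$. The only cosmetic difference is that you rule out $B \ll^* A$ via the forward implication just proved, whereas the paper rules it out via the first assertion of Lemma~\ref{lma:cutBlockWeakening} ($B \ll^* A \Rightarrow B \ll A \Rightarrow b < a$); these amount to the same thing.
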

\begin{proof}
The first assertion is immediate from the first assertion of\linebreak Lemma~\ref{lma:cutBlockWeakening}:
$<^*$ is the transitive closure of three relations all included in $<$.
For the second assertion, suppose $a < b$, and that $a$ and $b$ belong to the respective blocks $A$ and $B$.
If $A = B$, the result follows from the fact that $<^*$ extends the intra-block order $<^*_0$.
Otherwise, we have $A \ll B$, and hence $B \not \ll A$.
By Lemma~\ref{lma:cutBlockWeakening}, $A \ll^* B$ and so $a <^* b$ by the fact that 
$<^*$ extends the inter-block order $\ll^*$.
\end{proof}	

\begin{lemma}
$\fA^*$ is a typed partial order, and
$\bbB^*$ is a factorization of $\fA^*$. Moreover, $\fA^*$ is thin over $\bbB^*$.
\label{lma:cutFundamental}
\end{lemma}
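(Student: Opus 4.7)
The plan is to verify the three assertions in turn: that $<^*$ is a strict partial order on $X^*$ (giving $\fA^*$ as a typed partial order), that the factorization axioms (F1)--(F3) hold for $\bbB^*$ with respect to $\fA^*$, and that $<^*$ coincides with the canonical transitive closure that defines thinness. Lemmas~\ref{lma:cutBlockWeakening} and~\ref{lma:cutWeakening} already supply the weakening and same-type coincidence facts that do most of the work.

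For the partial order claim, transitivity of $<^*$ is immediate since it is defined as a transitive closure; irreflexivity follows from Lemma~\ref{lma:cutWeakening}, because $a <^* a$ would force $a < a$. A parallel argument using Lemma~\ref{lma:cutBlockWeakening} shows $\ll^*$ is a strict partial order on $\bB^*$. The cardinality requirement $|X^*| \geq 2$ is automatic since $\bB^\times \subseteq \bB^*$ contains the maximal and minimal blocks of every represented 1-type, yielding at least two elements of $X$.

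For the factorization claim, (F1) is inherited because each block of $\bB^*$ is itself a block of $\bB$. (F2) is precisely the second statement of Lemma~\ref{lma:cutBlockWeakening}. (F3) is immediate from the definition of $<^*$: the element-overloaded $\ll^*$ is one of the three relations whose transitive closure defines $<^*$, so $A \ll^* B$ at once gives $a <^* b$ for $a \in A$, $b \in B$.

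For thinness, let $\lessdot^*$ denote the transitive closure of the inter-block order of $\bbB^*$, the intra-block order of $\fA^*$, and the extremal order of $\fA^*$; we must show $<^* = \lessdot^*$. The inter-block order of $\bbB^*$ agrees with $\ll^*$ by overloading. Since any two elements sharing a block of $\bB^*$ share a 1-type, Lemma~\ref{lma:cutWeakening} gives $a <^* b \Leftrightarrow a < b$ on such pairs, identifying the intra-block order of $\fA^*$ with $<^*_0$. The critical step is identifying the extremal order of $\fA^*$ with $<_\times$, and this reduces to showing that the extremal elements of $\fA^*$ are exactly those of $\fA$. The forward direction is straightforward: every extremal element of $\fA$ lies in a block in $\bB^\times \subseteq \bB^*$ and remains extremal under the smaller relation $<^*$. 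The main obstacle is the converse, which simultaneously exploits the just-established (F2) and (F3) for $\bbB^*$: if $a \in X^*$ is maximal of type $\pi$ in $\fA^*$, then since the globally maximal $\pi$-block of $\bB$ sits in $\bB^\times \subseteq \bB^*$, (F2) and (F3) force $a$'s block to coincide with it, after which $a$ must be its maximum under $<_0$, hence maximal of type $\pi$ in $\fA$; a symmetric argument handles minimality. With extremal elements matched, $<_\times \subseteq <^*$ by definition and $<^* \subseteq <$ by Lemma~\ref{lma:cutWeakening} jointly identify the extremal order of $\fA^*$ with $<_\times$, giving $\lessdot^* = <^*$.
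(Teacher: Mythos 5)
Your proof is correct and relies on the same two key lemmas (Lemmas~\ref{lma:cutBlockWeakening} and~\ref{lma:cutWeakening}) as the paper's argument, so the overall approach is the same. The one place you go beyond the paper is the thinness claim: the paper dismisses it as ``immediate from the definition of $<^*$,'' whereas you carefully verify that the extremal elements of $\fA^*$ coincide with those of $\fA$. That verification is the right thing to worry about, since $\lessdot^*$ is built from the $\fA^*$-extremal order while $<^*$ is built from $<_\times$; your proof of both directions of the coincidence is correct. It is slightly more than strictly needed --- the inclusion $\lessdot^* \subseteq <^*$ holds because all three generating relations of $\lessdot^*$ lie in $<^*$, and for the reverse inclusion one only needs the easy direction ($\fA$-extremal implies $\fA^*$-extremal, which gives $<_\times \subseteq \lessdot^*$) --- but doing both directions does no harm and makes the claim $\lessdot^* = <^*$ transparent.
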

\begin{proof}
By the first assertions of Lemmas~\ref{lma:cutBlockWeakening} and~\ref{lma:cutWeakening}, both $\ll^*$ and $<^*$ are partial orders.
By construction, for all $B \in \bB^*$, every element $b \in B$ satisfies 
$\tp^*(b) = \tp(b) = \tp(B)$. Thus, we can write $\tp^*(B) = \tp(B)$ to denote
the 1-type of $B$ in $\bbB^*$.
The second assertion of Lemma~\ref{lma:cutBlockWeakening} ensures that
all blocks of any fixed 1-type in $\bbB^*$ are linearly ordered. Finally,
the requirement that $A \ll^* B$ implies $a <^* b$ for all $a \in A$, $b \in B$
is secured by the fact that $<^*$ extends the inter-block ordering on $X^*$. The final
statement of the lemma is immediate from the definition of $<^*$.
\end{proof}

\begin{lemma}
If all blocks of type $\alpha$ are less then all blocks of type $\beta$ in the block ordering $\bbB$, 
then the same is true of the block ordering $\bbB^*$. 	
\label{lma:cutPreserveClean}
\end{lemma}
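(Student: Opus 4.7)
The plan is to pick arbitrary blocks $A, B \in \bB^*$ with $\tp(A) = \alpha$ and $\tp(B) = \beta$ and to establish $A \ll^* B$ by splitting on the partition $\bB^* = \bB^- \cup \bB^+$. First I would dispatch the easy cases in which $A$ and $B$ lie on the same side: if both are in $\bB^-$ or both in $\bB^+$, then by the lemma's hypothesis $A \ll B$ in $\bbB$, and the first two clauses defining $\ll^*$ embed the restriction of $\ll$ to each of $\bB^-$ and $\bB^+$ directly into $\ll^*$, yielding $A \ll^* B$. The ``crossed'' configuration $A \in \bB^+$, $B \in \bB^-$ is ruled out by a depth comparison: $A \ll B$ forces $d(A) > d(B)$, whereas the assumed positions give $d(A) < \chi' < \chi < d(B)$.

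The substantive case is $A \in \bB^-$ and $B \in \bB^+$, and here the strategy is to route the required $\ll^*$-chain through the cut isomorphism $f_{\chi,\chi'}$. Let $A^* \in F^-(\chi)$ be the maximal $\alpha$-block below $\chi$, and $B^* \in F^+(\chi')$ the minimal $\beta$-block above $\chi'$; both exist and are unique since $A$ and $B$ witness non-emptiness and, for each fixed 1-type, the blocks of $\bbB$ are linearly ordered by $\ll$. Linearity then yields $A = A^*$ or $A \ll A^*$ in $\bbB$, and similarly for $B^*$ and $B$, so the easy cases already give $A \ll^* A^*$ (or equality) and $B^* \ll^* B$ (or equality). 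It remains to connect $A^*$ to $B^*$ across the collapsed strip; for this I would take $C = f_{\chi,\chi'}^{-1}(B^*)$, which lies in $F^+(\chi)$ and is a $\beta$-block because $f_{\chi,\chi'}$ preserves 1-types. The lemma's hypothesis applied in $\bbB$ gives $A^* \ll C$, and the third clause in the definition of $\ll^*$ then delivers $\langle A^*, f_{\chi,\chi'}(C)\rangle = \langle A^*, B^*\rangle \in \ll^*$; concatenating the three fragments produces $A \ll^* B$.

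The main obstacle I anticipate is a bookkeeping one: making sure the frontier blocks $A^*$ and $B^*$ are correctly located on either side of the two cuts (note in particular that $B^*$ lives in $F^+(\chi')$, not in $F^+(\chi)$, so the inverse isomorphism has to be invoked) and that the link $A^* \ll C$ really is an instance of the lemma's hypothesis rather than something weaker supplied by $\ll^*$. Once the positions of the witnesses have been pinned down, the rest is a mechanical unpacking of the definitions of $\ll^*$, $F^\pm$, and $\bB^\pm$.
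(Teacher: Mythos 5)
Your proof is correct, and it follows the same overall strategy as the paper: dispatch the same-side pairs via the first two clauses of $\ll^*$, rule out the $A\in\bB^+$, $B\in\bB^-$ case by a depth comparison, and bridge the remaining $A\in\bB^-$, $B\in\bB^+$ case by routing through the third clause via the frontier isomorphism $f_{\chi,\chi'}$, using linearity of the $\alpha$- and $\beta$-blocks to connect $A$ and $B$ to the appropriate frontier blocks $A^*$ and $B^*$. The paper's own argument is considerably terser---it records a global dichotomy (either all $\alpha$-blocks of $\bbB$ lie below $\chi$, or all $\beta$-blocks lie above $\chi'$, since otherwise the extremal $\alpha$-block $A_{\max}$ and the extremal $\beta$-block $B_{\min}$ would be on the wrong sides of the two cuts, contradicting $A_{\max}\ll B_{\min}$) and then simply asserts ``the result then follows from the definition of $\ll^*$''---so what you have written is essentially the missing unpacking of that final sentence, carried out pointwise on an arbitrary pair $(A,B)$ rather than by first establishing the global configuration. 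Both routes rely on the same two facts, and yours has the advantage of making the use of clause three of $\ll^*$ and the invocation of $f_{\chi,\chi'}^{-1}$ completely explicit.
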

\begin{proof}
We may assume that there exist $\alpha$-blocks and $\beta$-blocks, for otherwise
the lemma is trivial.
Since $\chi$ and $\chi'$ are equivalent, either all $\alpha$-blocks of $\bbB$ are below $\chi$, and there are both $\alpha$- and $\beta$-blocks below $\chi'$, or
all $\beta$-blocks are above $\chi'$, and there are both $\alpha$- and $\beta$-blocks above $\chi$. The result then follows from the definition of
$\ll^*$. 
\end{proof}
\begin{lemma}
If the blocks of type $(\alpha \vee \beta)$ are linearly ordered in $\bbB$, 
then the blocks of this type are linearly ordered in $\bbB^*$. 	
\label{lma:cutPreserveAlternating}
\end{lemma}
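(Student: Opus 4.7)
The plan is to show that any two $(\alpha \vee \beta)$-blocks $A, B \in \bB^*$ are comparable under $\ll^*$. Having fixed such $A, B$ with $A \neq B$, I would split into cases according to which of $\bB^-$ and $\bB^+$ each belongs to. In the two same-side cases, the hypothesis that $(\alpha \vee \beta)$-blocks are linearly ordered in $\bbB$ gives (WLOG) $A \ll B$, and then $\langle A, B \rangle$ lies in the first (respectively second) generating set in the definition of $\ll^*$, yielding $A \ll^* B$ directly.

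The real work is the mixed case $A \in \bB^-$ and $B \in \bB^+$ (the reverse is symmetric). I would first record a simple depth-monotonicity observation: $X \ll Y$ implies $d(X) > d(Y)$ (prepend $X$ to a longest chain descending from $Y$). From $d(A) > \chi > \chi' > d(B)$ this rules out $B \ll A$, so the hypothesis forces $A \ll B$ in $\bbB$. Next I locate intermediaries in the frontier: let $A' \in F^-(\chi)$ be the unique $\tp(A)$-block there and $B' \in F^+(\chi)$ the unique $\tp(B)$-block there. Linearity of $\tp(A)$-blocks in $\bbB$ gives $A \ll A'$ or $A = A'$, and since $f_{\chi,\chi'}(B') \in F^+(\chi')$ is the minimal $\tp(B)$-block above $\chi'$, linearity of $\tp(B)$-blocks gives $f_{\chi,\chi'}(B') \ll B$ or $f_{\chi,\chi'}(B') = B$.

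The crux is then to establish $A' \ll B'$ in $\bbB$. Since $A'$ and $B'$ are both $(\alpha \vee \beta)$-blocks, the hypothesis makes them $\ll$-comparable; $A' = B'$ is excluded because $A'$ lies below $\chi$ while $B'$ lies above; and $B' \ll A'$ is ruled out by the same depth-monotonicity observation, since $d(B') < \chi < d(A')$. Hence $A' \ll B'$. Chaining now gives $A \ll^* A'$ (first clause of $\ll^*$), $A' \ll^* f_{\chi,\chi'}(B')$ (third clause, using $A' \in F^-(\chi)$ and $B' \in F^+(\chi)$), and $f_{\chi,\chi'}(B') \ll^* B$ (second clause), and transitivity of $\ll^*$ delivers $A \ll^* B$.

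I expect the main obstacle not to be any single deep step but the careful bookkeeping of depth inequalities needed to pin down the correct orientation of each link, particularly the step $A' \ll B'$, which is the key place the hypothesis is used. The mild edge cases where $A = A'$ or where $B$ itself happens to lie in $F^+(\chi)$ (so $f_{\chi,\chi'}(B)$ is a distinct $\tp(B)$-block of $F^+(\chi')$ below $B$) are handled by dropping or inserting the appropriate identity or second-clause link in the chain, and the argument otherwise goes through unchanged. Structurally the proof mirrors that of Lemma~\ref{lma:cutPreserveClean}, the extra twist being that linear-orderedness, unlike strict below-ness, has no built-in orientation, so the depth bounds have to do the orienting work.
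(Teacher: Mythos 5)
Your proof is correct and follows essentially the same route as the paper's: reduce to a single comparison across the cut, and bridge it via the third generating clause of $\ll^*$ applied to blocks in $F^-(\chi)$ and $F^+(\chi)$. The paper is a touch slicker at the bridging step: it picks $A$ to be the maximal $(\alpha\vee\beta)$-block below $\chi$ and $B$ the minimal $(\alpha\vee\beta)$-block above $\chi$, notes that since $f_{\chi,\chi'}$ is a type- and $\ll$-preserving bijection $F(\chi)\to F(\chi')$ the image $f_{\chi,\chi'}(B)$ must coincide with the minimal $(\alpha\vee\beta)$-block $B'$ above $\chi'$, and then $A\ll B$ gives $A\ll^* B'$ in one step. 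You instead pass through the type-matched intermediaries $A'\in F^-(\chi)$ (max $\tp(A)$-block) and $B'\in F^+(\chi)$ (min $\tp(B)$-block), establish $A'\ll B'$ from linearity together with the explicit depth-monotonicity observation to fix orientation, and chain. Both arguments are valid; yours is slightly longer but makes explicit the depth bookkeeping that the paper treats as evident, and it avoids having to argue that $B$ and $B'$ share a 1-type. There is no gap.
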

\begin{proof}
It suffices to show that, if $A$ is the maximal
$(\alpha \vee \beta)$-block below $\chi$, and $B'$ the minimal $(\alpha \vee \beta)$ block above $\chi'$,
then $A \ll^* B'$. Let $B$ be the minimal 
$(\alpha \vee \beta)$-block above $\chi$. Since 
$\chi$ and $\chi'$ are equivalent, $B$ and $B'$ must be of the same type, so
$f_{\chi,\chi'}(B)=B'$. But $A \ll B$, whence $A \ll^* B'$, as required. 
\end{proof}

\begin{lemma}
Suppose $a \in X^*$ and $b \in X$ are such that $a \sim b$. Then there exists $b' \in X^*$ such that $a \sim^* b'$.
\label{lma:cutBlockOrderPreserve}
\end{lemma}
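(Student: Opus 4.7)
My plan is to case-split on whether $b \in X^*$. If $b \in X^*$, take $b' = b$: by Lemma~\ref{lma:cutWeakening}, $<^*$ is contained in $<$ on $X^*$, so the conditions $a \neq b$, $a \not< b$, $a \not > b$ directly transfer to $a \sim^* b$ in $\fA^*$.

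If $b \notin X^*$, the block $B \ni b$ lies strictly between the cuts, and the block $A \ni a$ belongs to $\bB^*$. Since $a \sim b$, (F3) implies $A \approx B$ in $\bbB$, and (F2) forces $\alpha := \tp(A) \neq \beta := \tp(B)$. By (F2) the $\beta$-blocks of $\bbB$ form a $\ll$-chain $B_1 \ll \cdots \ll B_K$, partitioned by $\bB^*$ into $B_1,\dots,B_p \in \bB^-$, a middle segment $B_{p+1},\dots,B_q$ (containing $B$), and $B_{q+1},\dots,B_K \in \bB^+$. By transitivity of $\ll$, the blocks $\approx A$ form a contiguous sub-interval $B_s,\dots,B_t$ containing $B$. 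The goal is then to find a $\beta$-block $B^* \in \bB^*$ with $A \approx^* B^*$ in $\bbB^*$ and extract a $b' \in B^*$ incomparable to the specific $a$ in $<^*$.

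If $[s,t]$ meets $\bB^*$ (i.e.\ $s \le p$ or $t \ge q+1$), take $B^* = B_r$ surviving in $\bB^*$: by Lemma~\ref{lma:cutBlockWeakening} ($\ll^*$ contained in $\ll$), $A \approx B_r$ in $\bbB$ yields $A \approx^* B_r$ in $\bbB^*$, and an intra-block argument inside $B_r$ exploiting this incomparability furnishes $b'$ with $a \sim b'$ in $\fA$, hence $a \sim^* b'$ via Lemma~\ref{lma:cutWeakening}. Otherwise $p+1 \le s \le t \le q$---the crux of the proof---so every $\beta$-block $\approx A$ is deleted. Here I would invoke the cut-equivalence: $B_{p+1} \in F^+(\chi)$ maps under $f_{\chi,\chi'}$ to $B_{q+1} \in F^+(\chi') \subseteq \bB^*$; although $A \ll B_{q+1}$ in $\bbB$, I would argue $A \not\ll^* B_{q+1}$ in $\bbB^*$ by observing that a cut-gap crossing in $\ll^*$ can occur only through the glue-clause $\{\langle B, f_{\chi,\chi'}(C)\rangle \mid B \in F^-(\chi),\, C \in F^+(\chi),\, B \ll C\}$, and that in this configuration every $\bbB$-chain witnessing $A \ll B_{q+1}$ passes through a deleted $\beta$-block $\approx A$ with no surviving single-glue alternative. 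A symmetric argument handles the lower side via $F^-(\chi)$, and element-level extraction then produces the desired $b'$. The core difficulty is verifying that the cut-gap truly severs the relevant $\ll^*$-chains, which depends on the restricted form of the glue-clause and the complete deletion of the $\approx A$ segment.
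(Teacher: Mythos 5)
There is a genuine gap in Case~2a, and Case~2b is only a sketch that misses the specific mechanism the paper relies on.

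In Case~2a you claim that from $A \approx B_r$ (block incomparability in $\bbB$) you can extract some $b' \in B_r$ with $a \sim b'$ in $\fA$, and then transport this to $\fA^*$ via Lemma~\ref{lma:cutWeakening}. But block incomparability does not give you element incomparability: the paper explicitly notes, right after defining $\approx$, that $A \approx B$ is compatible with $a<b$ holding for some $a\in A$, $b\in B$, and indeed it can happen that a particular $a\in A$ is below \emph{every} element of $B_r$. Nothing in the hypotheses rules this out, so the ``intra-block argument'' you gesture at does not exist. You cannot route through $\sim$ in $\fA$; the conclusion $a \sim^* b'$ must be established directly in $\fA^*$, and this is precisely where all the work is.

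The paper's proof does not split into your Cases~2a/2b. It uniformly takes $B$ to be the minimal $\beta$-block above $\chi$, shows $A \approx B$, sets $B' = f_{\chi,\chi'}(B)$ (which is always in $\bB^*$), and then establishes $A \approx^* B'$ by a short argument: if $A \ll^* B'$, then by construction of $\ll^*$ there is a ``glue'' step $C \ll D$ with $C \in F^-(\chi)$, $D \in F^+(\chi)$, $D' = f_{\chi,\chi'}(D) \ll B'$; since $f_{\chi,\chi'}$ is a type-preserving order isomorphism on frontiers, $D' \ll B'$ pulls back to $D \ll B$, giving $A \ll B$ and contradicting $A \approx B$. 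Your ``every $\bbB$-chain witnessing $A \ll B_{q+1}$ passes through a deleted block'' intuition is aimed in the right direction, but the actual mechanism is the pull-back along $f_{\chi,\chi'}$, not a chain-severing argument, and you have not isolated it. Finally, even with $A \approx^* B'$ in hand, the step you call ``element-level extraction'' is nontrivial and you leave it blank: since $<^*$ is the transitive closure of $\ll^* \cup <^*_0 \cup <_\times$, it is still possible a priori that $a <^* b'$ via a $<_\times$-step even though $A \not\ll^* B'$. The paper rules this out using that $B'$ is non-extremal (which in turn uses that no extremal block lies between equivalent cuts, via (E1)) and that extremal blocks are fixed by $f_{\chi,\chi'}$, concluding that any such chain would force $a < b$ in $\fA$, contradicting $a \sim b$. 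That argument is essential and is missing from your proposal.
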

\begin{proof}
Let $A$ be the block of $\bbB$ containing $a$ and $B_0$ the block containing $b$. 
If $B_0 \in \bB^*$, the 
result follows immediately from
Lemma~\ref{lma:cutWeakening} by setting $b' = b$. 
So we may suppose $B_0$ lies between $\chi$ and $\chi'$. 

Assume, for definiteness, that $A$ lies below $\chi$. Let $\beta = \tp(b) = \tp(B_0)$, and let $B$ be the minimal $\beta$-block above $\chi$. We remark that $\beta \neq \alpha$: for if $a$ and $b$ are of the same 1-type,
then they must be in the same block of $\bbB$, contradicting the supposition that $B_0$ is not in $\bB^*$.
We claim that $A \approx B$.
For either $B = B_0$ or $B \ll B_0$, and certainly $A \approx B_0$, so that $A \not \ll B$; on the other
hand no block $B$ above $\chi$ satisfies $B \ll A$, which proves the claim. 
Now let $B'$ be the minimal $\beta$-block above $\chi'$, so that $f_{\chi,\chi'}(B) = B'$. 
Since no extremal block can lie between the equivalent cuts $\chi$ and $\chi'$, $B$ is not extremal; hence
$B'$ is not extremal, by (E1) and the fact that $f_{\chi,\chi'}$ is injective.
We claim that $A \approx^* B'$. Certainly, $B' \not \ll^* A$, so it suffices to suppose
$A \ll^* B'$, and derive a contradiction. By the construction of $\ll^*$, 
there exist $C$, $D$, $D' = f_{\chi,\chi'}(D)$ such that $C$ is a maximal block below $\chi$,  $D$ is a minimal block above $\chi$, and such that 
$A \ll C$, $C \ll D$ and $D' \ll B'$. Since $B, D \in  F^+(\chi)$
and $B', D' \in  F^+(\chi')$, it follows from (E3) that $D \ll B$. But then $A \ll B$, which is the desired contradiction (see Fig.~\ref{fig:cutBlockOrderPreserve}).
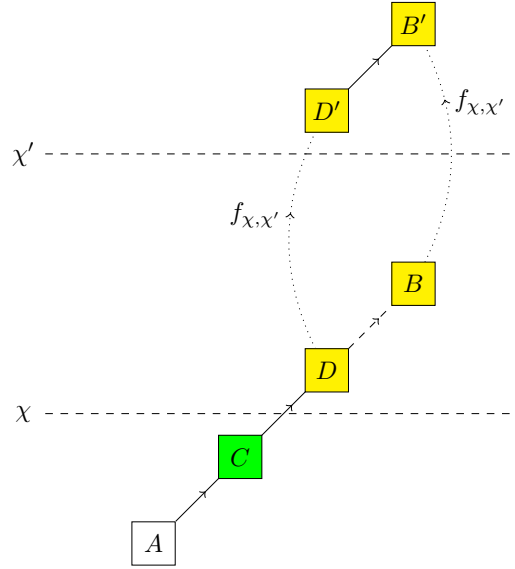
\begin{figure}
\begin{center}
\resizebox{!}{7.5cm}{\begin{tikzpicture}[scale=0.6]

\draw[->-=0.6] (0.5,0.5) -- (2.5,2.5);
\draw[->-=0.6] (2.5,2.5) -- (4.5,4.5);
\draw[->-=0.6,dashed] (4.5,4.5) -- (6.5,6.5);
\draw[->-=0.6] (4.5,10.5) -- (6.5,12.5);
\draw [dotted,->-=0.6, bend left] (4.5,4.5) to node [left,pos=0.6] {$f_{\chi,\chi'}$} (4.5,10.5);
\draw [dotted,->-=0.7, bend right] (6.5,6.5) to node [right,pos=0.7] {$f_{\chi,\chi'}$} (6.5,12.5);

\draw[dashed] (-2,3.5) -- (9,3.5);
\draw[dashed] (-2,9.5) -- (9,9.5);

\filldraw[fill=white] (0,0) rectangle (1,1);
\filldraw[fill=green] (2,2) rectangle (3,3);
\filldraw[fill=yellow] (4,4) rectangle (5,5);
\filldraw[fill=yellow] (6,6) rectangle (7,7);
\filldraw[fill=yellow] (4,10) rectangle (5,11);
\filldraw[fill=yellow] (6,12) rectangle (7,13);

\coordinate [label=center:{$A$}] (a) at (0.5,0.5);
\coordinate [label=center:{$C$}] (a) at (2.5,2.5);
\coordinate [label=center:{$D$}] (a) at (4.5,4.5);
\coordinate [label=center:{$B$}] (a) at (6.5,6.5);
\coordinate [label=center:{$D'$}] (a) at (4.5,10.5);
\coordinate [label=center:{$B'$}] (a) at (6.5,12.5);

\coordinate [label=center:{$\chi$}] (a) at (-2.5,3.5);
\coordinate [label=center:{$\chi'$}] (a) at (-2.5,9.5);
\end{tikzpicture}}
\end{center}
\caption{Proof of Lemma~\ref{lma:cutBlockOrderPreserve}: the claim that $A \approx^* B'$ in the case where 
$A$ is below $\chi$.}
\label{fig:cutBlockOrderPreserve}
\end{figure}

Recall now that $B'$ is not extremal, and that $a \in A$,
$b \in B_0$ with $a \sim b$. 
Pick any $b' \in B'$. From $a \sim b$ and $B \ll B'$, 
we know that  $b' \not < a$, whence, by Lemma~\ref{lma:cutWeakening},   
$b' \not <^* a$. So suppose, for contradiction, that $a <^* b'$.
By the definition of $<^*$, there exists a sequence
$a = a_0, \dots, a_m = b$ such that, for all $i$ ($0 \leq i < m$),
the pair $\langle a_i, a_{i+1}\rangle$ is in
$(\ll^* \cup <^*_0 \cup <_\times)$.  
But we already know that
$A \approx^* B'$, so there must be some $\ell$ ($0 \leq \ell < m$) such that 
$a_\ell <_\times a_{\ell+1}$. 
Take the largest such value of $\ell$, and let $C$ be the block containing $a_{\ell+1}$. 
Thus,
$C$ is extremal, and indeed,
since $B'$ is non-extremal, we have $\ell < m-1$, and $C \ll^* B'$, whence $C \ll B'$ by Lemma~\ref{lma:cutBlockWeakening}. 
By the fact that $B' = f_{\chi,\chi}(B)$, and $C$ is extremal, we have
$C \ll B$ and hence $C \ll B_0$. On the other hand, $a < a_{\ell+1} \in C$, contradicting the fact that $a \sim b$. Thus, $a \sim^* b'$, as required.
The case where $A$ lies above $\chi'$ proceeds similarly.
\end{proof}

Let us summarize. We started by taking any partial order $\fA = (X, < \tp)$ with factorization $\bbB = (\bB,\ll)$.
We supposed that there existed equivalent cuts $\chi$, $\chi'$ of $\bbB$, with $\chi$ below $\chi'$. We then constructed a new partial order $\fA^* = (X^*, <^* \tp^*)$ with
factorization $\bbB^* = (\bB^*, \ll^*)$,
as established by Lemma~\ref{lma:cutFundamental}.
Let us write $\fA/(\chi, \chi')$ for $\fA^*$ and
$\bbB/(\chi, \chi')$ for $\bbB^*$. Notice that the size of $\bbB/(\chi, \chi')$---i.e. the number of blocks it contains---is strictly smaller than that of $\bbB$. 

\begin{lemma}
Let  $\Psi$ be a conjunction of basic formulas, and suppose
$\fA$ is a typed partial order such that $\fA \models \Psi$.
Let $\bbB$ be a factorization of $\fA$ such that 
$\bbB \models \FC(\Psi)$, and suppose $\chi$, $\chi'$ are equivalent cuts in $\bbB$. Then $\fA/(\chi,\chi') \models \Psi$, 
$\bbB/(\chi,\chi') \models \FC(\Psi)$ and $\fA/(\chi,\chi')$ is thin over $\bbB/(\chi,\chi')$. Moreover, if $\bbB$ is unitary, then so is $\bbB/(\chi,\chi')$.
\label{lma:cutPreserve}
\end{lemma}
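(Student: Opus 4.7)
My plan is to dispatch the three assertions of the lemma in turn. The structural claims that $\fA/(\chi,\chi')$ is a typed partial order, that $\bbB/(\chi,\chi')$ is a factorization of it, and that $\fA/(\chi,\chi')$ is thin over $\bbB/(\chi,\chi')$, are all already supplied by Lemma~\ref{lma:cutFundamental}. The assertion $\bbB/(\chi,\chi') \models \FC(\Psi)$ reduces to two forms of factor-controllable formula: \eqref{eq:b3} is handled by Lemma~\ref{lma:cutPreserveClean} and \eqref{eq:b5b} by Lemma~\ref{lma:cutPreserveAlternating}. So the real content lies in checking that $\fA/(\chi,\chi') \models \psi$ for each basic $\psi \in \Psi$, by a case analysis on the ten forms \eqref{eq:b1a}--\eqref{eq:b10}, writing $\fA^*$ and $\bbB^*$ for $\fA/(\chi,\chi')$ and $\bbB/(\chi,\chi')$ as before.

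The easy cases fall to a few short observations. Forms~\eqref{eq:b1a}, \eqref{eq:b1b}, and~\eqref{eq:b9} are universal constraints on 1-types, and since $X^* \subseteq X$ they transfer verbatim. For~\eqref{eq:b10}, every 1-type realised in $\fA$ is realised by at least one extremal block in $\bB^\times \subseteq \bB^*$, so the existential assertion survives. Forms~\eqref{eq:b2a}, \eqref{eq:b2b}, and~\eqref{eq:b4} are handled by the inclusion $<^*\subseteq <$ granted by Lemma~\ref{lma:cutWeakening}: for example, for~\eqref{eq:b4}, given $a, b \in X^*$ of respective types $\alpha, \beta$, we have $b \not< a$ in $\fA$ and therefore $b \not<^* a$, forcing $a <^* b$ or $a \sim^* b$ as required. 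Form~\eqref{eq:b5a} uses the second clause of Lemma~\ref{lma:cutWeakening}, namely that $<$ and $<^*$ coincide on elements sharing a 1-type. Forms~\eqref{eq:b3} and~\eqref{eq:b5b} are then immediate from the already-established factor-controllability $\bbB^* \models \FC(\Psi)$ together with the fact that $<^*$ extends the inter-block order $\ll^*$.

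The main obstacle is the three existential forms~\eqref{eq:b6}, \eqref{eq:b7}, and~\eqref{eq:b8}, where the witness available in $\fA$ may lie in a block that has been removed. For~\eqref{eq:b6}, given $a \in X^*$ of type $\alpha$, my approach is to first pass to a maximal element $a^*$ of type $\alpha$ with $a \leq a^*$ in $\fA$; the block of $a^*$ must then be a maximal $\alpha$-block, hence extremal, hence in $\bB^*$. Applying $\fA \models \eqref{eq:b6}$ at $a^*$ yields some $b$ of a type $\beta \neq \alpha$ satisfying $\mu$ with $a^* < b$, and ascending once more to a maximal $b^*$ of type $\beta$ above $b$ lands in another extremal block. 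The pair $a^* <_\times b^*$ gives $a^* <^* b^*$ directly from the definition of $<^*$, and combining with $a <^* a^*$ (or $a = a^*$) via the same-type clause of Lemma~\ref{lma:cutWeakening} yields $a <^* b^*$; form~\eqref{eq:b7} is dual. For~\eqref{eq:b8}, Lemma~\ref{lma:cutBlockOrderPreserve} supplies a witness $b' \in X^*$ with $a \sim^* b'$, and inspection of its construction reveals $\tp(b') = \tp(b)$, so since $\mu$ is pure Boolean, $b'$ inherits $\mu$ from $b$. Finally, for the unitary claim, Lemma~\ref{lma:cutWeakening} implies that $<$ and $<^*$ coincide on each block $B \in \bB^*$ (whose elements share a common 1-type), so a block is $<^*$-linearly ordered precisely when it is $<$-linearly ordered, and unitarity transfers from $\bbB$ to $\bbB^*$.
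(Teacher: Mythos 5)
Your proof is correct and takes essentially the same approach as the paper: the same case analysis on the ten basic forms, invoking Lemma~\ref{lma:cutWeakening} for the universal forms and~\eqref{eq:b5a}, Lemmas~\ref{lma:cutPreserveClean} and~\ref{lma:cutPreserveAlternating} for the factor-controllable forms, the maximal-element/extremal-block trick for~\eqref{eq:b6} and~\eqref{eq:b7}, and Lemma~\ref{lma:cutBlockOrderPreserve} for~\eqref{eq:b8}. Your write-up is in fact a touch more careful than the paper's in two spots: you observe explicitly that the witness $b'$ produced by Lemma~\ref{lma:cutBlockOrderPreserve} has the same 1-type as $b$ (which is needed to carry over the pure Boolean formula $\mu$ but is not stated in that lemma), and you handle the degenerate case $a = a^*$ in~\eqref{eq:b6}.
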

\begin{proof}
Write $\fA = (X,<,\tp)$ and $\fA/(\chi,\chi') = (X^*,<^*,\tp^*)$.
We consider the various basic forms in turn.
\begin{description}
	\item \eqref{eq:b1a}, \eqref{eq:b1b}:  $X^* \subseteq X$. 
	\item \eqref{eq:b2a}, \eqref{eq:b2b}, \eqref{eq:b4}: $X^* \subseteq X$ and, by Lemma~\ref{lma:cutWeakening}, $<^*$ is no stronger than $<$. 
	\item \eqref{eq:b3}: By Lemma~\ref{lma:cutPreserveClean}, $\bbB/(\chi,\chi') \models \psi$.
	\item \eqref{eq:b5a}: By Lemma~\ref{lma:cutWeakening}, $<^*$ coincides 
	   with $<$ on $(\tp^*)^{-1}(\alpha)$.
	\item \eqref{eq:b5b}: By Lemma~\ref{lma:cutPreserveAlternating}, $\bbB/(\chi,\chi') \models \psi$.	
	\item \eqref{eq:b6}: Pick any $a \in X^*$ such that $\tp^*(a) = \alpha$.
	Let $a^*$ be a maximal $\alpha$-element of $X$ such that $a < a^*$. Since $\fA \models \psi$, let $b \in X$
	be such that $\fA \models \mu[b]$,
	$\tp[b] \neq \alpha$
	and $a^* < b$. Without loss of generality,
	we may assume that $b$ is a maximal element of its 1-type in $\fA$. Since $a^*$ and $b$ are maximal elements, we have $a^*,b \in X^*$, and indeed $a^* <^* b$. Finally, by the second statement of Lemma~\ref{lma:cutWeakening}, $a <^* a^*$, whence $a <^* b$, whence
	$\fA/(\chi,\chi') \models \psi$.
	\item \eqref{eq:b7}: Proceed symmetrically to the case (B6). 
	\item \eqref{eq:b8}: By Lemma~\ref{lma:cutBlockOrderPreserve}.
	\item \eqref{eq:b9}:  $X^* \subseteq X$. 
	\item \eqref{eq:b10}: All extremal blocks of $\bbB$ are blocks of $\bbB/(\chi,\chi')$. 
\end{description}
Lemmas~\ref{lma:cutPreserveClean} and~\ref{lma:cutPreserveAlternating} ensure that 
$\bbB(\chi,\chi') \models \FC(\Psi)$. The remaining statements of the lemma are obvious.
\end{proof}

\begin{lemma}
Suppose $\Psi$ is a finitely satisfiable conjunction of basic formulas. Then there is a model $\fA \models \Psi$ with unitary factorization $\bbB$ of size bounded by a doubly exponential function of
the size of the signature of $\Psi$, such that $\fA$ is thin over $\bbB$.
\label{lma:fewBlocks}
\end{lemma}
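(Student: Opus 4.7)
The plan is to start from any finite model of $\Psi$ and shrink it by iteratively contracting between equivalent cuts, using Lemma~\ref{lma:cutPreserve} as the workhorse.

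\textbf{Setup.} By hypothesis, let $\fA_0$ be any finite model of $\Psi$. Using Lemma~\ref{lma:factorize}, obtain a unitary factorization $\bbB_0$ of $\fA_0$ with $\bbB_0 \models \FC(\Psi)$. Then apply Lemma~\ref{lma:loseFiveKG} to pass to a structure $\fA$ on the same domain, still modelling $\Psi$, with $\bbB_0$ still a factorization, but with $\fA$ thin over $\bbB_0$. Write $\bbB$ for $\bbB_0$; so we begin with a thin, unitary factorized model satisfying $\Psi$ and with $\bbB \models \FC(\Psi)$.

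\textbf{Counting equivalence classes of cuts.} Let $\Pi$ be the set of $1$-types over the signature $\sigma$ of $\Psi$; note $|\Pi| \leq 2^{|\sigma|}$. By condition (F2), blocks of any fixed $1$-type are linearly ordered by $\ll$ and therefore have pairwise distinct depths, so each of $F^+(\chi)$ and $F^-(\chi)$ contains at most one block per $1$-type, and likewise $|\bB^\times| \leq 2|\Pi|$. Moreover $\bB^\times$ is intrinsic to the factorization and does not depend on $\chi$. The equivalence class of a cut $\chi$ is determined by the isomorphism type of $F(\chi)=F^-(\chi)\cup F^+(\chi)\cup \bB^\times$ as a typed partial order in which $\bB^\times$ is rigidly identified (by (E1)). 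Hence the number of equivalence classes is at most the number of typed partial orders on at most $4|\Pi|$ blocks whose types and $\ll$-relations to $\bB^\times$ have been chosen, which is bounded by $2^{O(|\Pi|^2)}$, and therefore doubly exponential in $|\sigma|$. Call this bound $N$.

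\textbf{Iterative reduction.} While $d(\bbB) > N$, the pigeonhole principle yields two equivalent cuts $\chi$ below $\chi'$. Apply Lemma~\ref{lma:cutPreserve} to replace $\fA$ by $\fA/(\chi,\chi')$ and $\bbB$ by $\bbB/(\chi,\chi')$. The output is again a model of $\Psi$, again thin, again unitarily factorized, and $\bbB/(\chi,\chi') \models \FC(\Psi)$; crucially, it contains strictly fewer blocks (every block strictly between $\chi$ and $\chi'$ is discarded, and no extremal block can lie strictly between equivalent cuts). Since the number of blocks is a nonnegative integer, the process terminates, leaving $d(\bbB)\leq N$.

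\textbf{Final size bound.} Because blocks of any fixed $1$-type have distinct depths, each depth hosts at most $|\Pi|$ blocks, so the terminal factorization has total size at most $(N+1)\,|\Pi|$, still doubly exponential in $|\sigma|$. This $\bbB$, with the corresponding $\fA$, witnesses the lemma.

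\textbf{Main obstacle.} The real content of the argument is already packaged in Lemma~\ref{lma:cutPreserve}; what remains is the counting step, and the one place that requires care is justifying that the relevant parameters for a cut's equivalence class are bounded independently of the current size of $\fA$. The two key facts enabling this are that $F^\pm(\chi)$ contain at most one block per $1$-type (from (F2)) and that $\bB^\times$ is fixed throughout all reductions (since equivalent cuts cannot sandwich an extremal block).
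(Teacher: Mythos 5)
Your proof follows the paper's argument essentially step for step: start from a finite model, apply Lemma~\ref{lma:factorize} (unitary factorization respecting $\FC(\Psi)$) and Lemma~\ref{lma:loseFiveKG} (thinness), then repeatedly apply Lemma~\ref{lma:cutPreserve} to contract between equivalent cuts until no two cuts are equivalent, and finally bound the number of surviving blocks by bounding the number of equivalence classes of cuts. Your write-up is slightly more explicit than the paper's on the counting step (at most $|\Pi|$ blocks per depth level, and frontiers of size at most $4|\Pi|$), but this is the same doubly-exponential bound the paper invokes when it says ``the frontier of any cut is at most exponential in size,'' so the argument is correct and not a genuinely different route.
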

\begin{proof}
Suppose $\fA_0$ is a finite typed partial order such that $\fA_0 \models \Psi$. By Lemma~\ref{lma:factorize}, let $\bbB_0$ be a factorization of $\fA_0$ such that $\bbB_0 \models \FC(\Psi)$.
By Lemma~\ref{lma:loseFiveKG}, we may assume that $\bbB_0$ is unitary and that $\fA_0$ is thin over $\bbB_0$.
Assuming $\fA_i$ and $\bbB_i$ have been defined, if $\bbB_i$ contains a pair of equivalent cuts, $\chi$ and $\chi'$, let $\fA_{i+1} = \fA_i/(\chi,\chi')$ and $\bbB_{i+1} = \bbB_i/(\chi,\chi')$. By 
Lemma~\ref{lma:cutPreserve}, $\fA_{i+1} \models \Psi$ and 
$\bbB_{i+1} \models \FC(\Psi)$; moreover, $\bbB_{i+1}$ is unitary, and $\fA_{i+1}$ is thin over $\bbB_{i+1}$.
Since the number of blocks in $\bbB_i$ is strictly decreasing, we eventually reach a structure $\fA_m$ with factorization $\bbB_m$, in which no two cuts are equivalent. Since the frontier of any cut is at most exponential in size, there can be at most doubly exponentially many cuts in $\bbB_m$, and hence at most 
doubly exponentially many blocks in $\bbB_m$. This proves the lemma.
\end{proof}

\subsection{Reducing the size of blocks}
\label{subsec:blockShrink}
With Lemma~\ref{lma:fewBlocks}, we have established that, if a collection $\Psi$ of basic formulas has a finite model, then it has a finite model $\fA$ with a small factorization $\bbB$, such that 
$\bbB$ guarantees the truth of all factor-controllable members of $\Psi$, and $\fA$ is thin over $\bbB$. However, while the \textit{number} of the blocks in $\bbB$ was bounded by a doubly exponential function of the size of the signature of $\Psi$, nothing at all was said about their \textit{size}.
In this section we show that the blocks themselves can bounded in size. 

Fix some finite typed partial order $\fA = (X,<,\tp)$ with unitary factorization $\bbB = (\bB, \ll)$, such that $\fA$ is thin over $\bbB$. 
Let us suppose that, for some finite set $\Psi$
of basic formulas, $\fA \models \Psi$ and $\bbB \models \FC(\Psi)$.
Our strategy in the sequel will be to divide up the blocks of $\bB$ into
{\em sub-blocks}, and then to replace each sub-block by a set of either one or two elements,
imposing a partial order on these elements which secures satisfaction of $\Psi$.
The difficulty is that,
in reducing the size of each block, we are in danger of creating connections
between elements arising from previously unrelated blocks, and in particular of creating unwanted cycles
in the partial order we are trying to define. A sub-block will be replaced by a singleton if the block that includes it is itself is a unit block; otherwise, it will be replaced by a pair of incomparable elements. The assumption that $\fA$ is thin over $\bbB$ underpins an inductive  argument in Lemma~\ref{lma:subBlockMonotonicity} crucial in showing that 
the order we eventually define contains no cycles. 
The assumption that $\bbB$ is unitary rules out the possibility that some sub-block is made to contain a pair of incomparable elements when the including block is required to be linearly ordered---in particular,
if $\Psi$ contains a basic formula of type~\eqref{eq:b5a}.
 
For $a \in X$, and $B \in \bB$, we say that $B$ is {\em below} $a$ if there exists $b \in B$ such that $b < a$; similarly, we say that $B$ is {\em above} $a$ if there exists $b \in B$ such that $a < b$. Notice that, if
$a \in B$, we may have $B$ above and indeed also below $a$. We define the \textit{sub-type} of $a$ to be the triple
$\langle \bB^-, B, \bB^+ \rangle$, where $B$ is the block containing $a$,
$\bB^-$ is the set of blocks below $a$ and $\bB^+$ the set of blocks above $a$. A {\em sub-block} is a maximal set $s$ of elements all having the same sub-type. We write $\tp(s) = \tp(B)$.
Thus, each block is partitioned into a finite number of sub-blocks, and all elements of a sub-block $s$ have 1-type $\tp(s)$. If $s$ and $t$ are sub-blocks contained in the respective blocks
$A$ and $B$, and $A \ll B$, we write $s \ll t$.

\begin{lemma}
Let $a, b \in X$ with $a < b$. Let the sub-type of $a$ be $s = \langle \bA^-, A, \bA^+ \rangle$ and 
the sub-type of $b$ be $t = \langle \bB^-, B, \bB^+ \rangle$, where $s \neq t$. Then
\textup{(i)} $\bA^- \subseteq \bB^-$;
\textup{(ii)}  $\bA^+ \supseteq \bB^+$; and
\textup{(iii)}  $\set{A} \cup \bA^+ \supseteq \set{B} \cup \bB^+$.
Furthermore, at least one of these inclusions is strict.
\label{lma:subBlockMonotonicity}
\end{lemma}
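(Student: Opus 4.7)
My plan is first to get the three weak inclusions by routine transitivity, and then to attack strictness by contrapositive, assuming all three inclusions collapse to equality.

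For the weak inclusions, (i) follows from transitivity: if $C \in \bA^-$ via $c \in C$ with $c < a$, then $c < b$ too, so $C \in \bB^-$; (ii) is symmetric. For (iii), note that $B \in \{A\} \cup \bA^+$ (either $B = A$, or the element $b$ witnesses $B \in \bA^+$), and any $C \in \bB^+$ lies in $\bA^+$ by (ii); so (iii) is in fact a consequence of (ii) together with this observation about $B$.

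For strictness, I would assume all three inclusions are equalities and show $s = t$. If $A = B$, this is immediate from $\bA^- = \bB^-$ and $\bA^+ = \bB^+$. The main obstacle will be ruling out $A \neq B$. In that case, the equality of (iii) forces $A \in \bB^+$, giving some $a' \in A$ with $b < a'$, while the equality of (i) (combined with the fact that $A \in \bB^-$, since $a < b$ and $a \in A$) forces $A \in \bA^-$, giving some $a^* \in A$ with $a^* < a$. So we would have $a^* < a < b < a'$ with $a^*, a, a' \in A$, and the task is to extract a contradiction.

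The essential extra ingredient must be the standing hypothesis that $\fA$ is thin over $\bbB$. Thinness provides a chain $a = a_0 < a_1 < \cdots < a_n = b$ in which each consecutive pair is related by $<_0$, $\ll$, or $<_\times$. Since $a \in A$ and $b \notin A$, pick the smallest $i$ with $a_i \notin A$; the step $a_{i-1} \to a_i$ is then either inter-block or extremal (not intra-block, as the blocks differ). The inter-block case gives $A \ll C_i$ where $C_i$ is the block of $a_i$, whence by (F3) every element of $A$---in particular $a'$---lies below $a_i$, and so below $b$, contradicting $b < a'$. In the extremal case, $a_{i-1}$ is a maximal or minimal element of type $\tp(A)$: maximality is contradicted by $a' \in A$ with $a' > b \geq a_i > a_{i-1}$, while minimality is contradicted either by $a^*$ (if $i = 1$, so $a_{i-1} = a$) or by $a$ itself (if $i > 1$, since then $a < a_{i-1}$ with both in $A$, hence both of type $\tp(A)$). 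All subcases collapse, so some strict inclusion must in fact hold.
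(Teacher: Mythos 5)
Your proof is correct. For the weak inclusions, both you and the paper treat them as routine transitivity; your remark that (iii) already follows from (ii) plus the membership $B \in \{A\} \cup \bA^+$ is accurate. For strictness, both arguments hinge on thinness of $\fA$ over $\bbB$ and a case analysis on whether a chain step is intra-block, inter-block, or extremal, but the organization differs. The paper reduces the whole lemma to the single-step case---the weak inclusions compose over the chain, so strictness at one step propagates---and then exhibits a strict inclusion directly for each step type: $a <_0 b$ gives (i) or (ii) strict from $s \neq t$; $a \ll b$ gives (iii) strict since $A \not\in \bB^+$; and $a <_\times b$ gives (iii) strict when $a$ is maximal and (i) strict when $a$ is minimal. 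You instead argue by contrapositive, extracting $a' \in A$ above $b$ and $a^* \in A$ below $a$ from the assumed equalities, then deriving an order contradiction by locating the first chain step that leaves the block $A$. This is valid and the inner case split (inter-block vs.\ extremal, maximal vs.\ minimal) exactly mirrors the paper's, but it is heavier: the paper observes $A \not\in \bB^+$ or $A \not\in \bA^-$ directly, whereas you route through the intermediate elements $a'$ and $a^*$. On the other hand, your version sidesteps the need to justify the paper's ``it evidently suffices to prove the result in the cases where one of $a <_0 b$, $a \ll b$, or $a <_\times b$ holds,'' which the paper leaves to the reader.
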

\begin{proof}
The inclusions themselves are immediate. To show strictness, therefore, since $\fA$ is thin over $\bbB$, it evidently suffices to prove the result in the cases where one of $a<_0 b$, $a \ll b$ or $a <_\times b$ holds. If $a<_0 b$, then $A = B$,
and strictness of either (i) or (ii) follows from the assumption $s \neq t$. If $a \ll b$, then $A \ll B$, and hence $A \not \in \bB^+$, since there certainly cannot exist $a' \in A$ with $b < a'$. Hence, inclusion (iii) is strict. Suppose, then that $a <_\times b$, so that $A \neq B$, and $a$ is either a maximal $\alpha$-element or 
a minimal $\alpha$-element, where $\alpha= \tp(a)$. If the former, then, since $a < b$, we again have $A \not \in \bB^+$, so that inclusion (iii) is strict. If the latter, then $A \not \in \bA^-$, so 
inclusion (i) is strict.
\end{proof}

For every sub-block $s$, let $\hat{s} = \set{\hat{s}(0), \hat{s}(1)}$, where
$\hat{s}(0)$ and $\hat{s}(1)$ are some objects. 
If $s$ is contained in (and hence is equal to) a unit block, we set 
$\hat{s}(0) = \hat{s}(1)$; otherwise, we set $\hat{s}(0) \neq \hat{s}(1)$. Thus,
each $\hat{s}$ has cardinality either 1 or 2. We call objects of
the form $\hat{s}(0)$ \textit{left-objects}, and those
of the form $\hat{s}(1)$ \textit{right-objects}.
For $s \neq t$, we insist that $\hat{s} \cap \hat{t} = \emptyset$. 

For each $B \in \bB$, let 
$\hat{B} = \bigcup \set{\hat{s} \mid \text{$s$ a sub-block of $B$}}$. Now
let $\hat{\bB} = \set{\hat{B} \mid B \in \bB}$, and $\hat{X} = \bigcup \hat{\bB}$. Define an relation $\prec$ on $\hat{X}$ to be the transitive closure of
$r_\exists \cup r_\forall$, where
\begin{align*}
r_\exists & = \{\langle \hat{s}(i),\hat{t}(i) \rangle \mid
\text{$i \in \{0,1\}$, $s \neq t$ and there exist $a \in s$, $b \in t$ such that $a < b$}\}\\
r_\forall & = \set{\langle \hat{s}(i),\hat{t}(j) \rangle \mid s \ll t,\ i,j \in \set{0,1}}
\end{align*}
Note that, when sub-blocks $s$ and $t$ contain elements related by $<$, we relate the corresponding left-objects and the corresponding right-objects by $r_\exists$; however, unless either of $s$ or $t$ has cardinality 1, we do not relate left-objects to right-objects via $r_\exists$ or vice versa. On the other hand, if the block order $\bbB$ enforces an ordering between the elements
of $s$ and $t$, we relate all elements of $\hat{s}$ to all elements of $\hat{t}$ via $r_\forall$. The idea is to keep left-hand elements from being related to right-hand elements by $\prec$ wherever possible. 
  
\begin{lemma}
	If $c \in \hat{s}$, $d \in \hat{t}$ and $c \prec d$, then $\hat{s} \neq \hat{t}$. Hence,
	$\prec$ is a partial order on $\hat{X}$. 
	\label{lma:isPO}
\end{lemma}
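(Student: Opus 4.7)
The plan is to establish the first assertion directly — that $c \prec d$ with $c \in \hat s$, $d \in \hat t$ forces $s \neq t$ (and hence $\hat s \cap \hat t = \emptyset$, by the disjointness convention on the $\hat{\cdot}$ sets). The second assertion then follows automatically, since $\prec$ is transitive by construction, and the first assertion rules out $c \prec c$.

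The core idea is to use Lemma~\ref{lma:subBlockMonotonicity} as a monotonicity invariant. To each sub-block $s \subseteq A$, associate the sub-type $\sigma(s) = \langle \bA^-, A, \bA^+\rangle$, and say $\sigma(s) \triangleleft \sigma(t)$ when the three inclusions $\bA^- \subseteq \bB^-$, $\bA^+ \supseteq \bB^+$, $\set{A}\cup\bA^+ \supseteq \set{B}\cup\bB^+$ all hold and at least one is strict. I would then show that every edge of $r_\exists \cup r_\forall$ advances $\sigma$ in this sense. For an edge in $r_\exists$, the distinctness $s \neq t$ and the existence of $a \in s$, $b \in t$ with $a < b$ are both built into the defining clause, so Lemma~\ref{lma:subBlockMonotonicity} applies verbatim. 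For an edge in $r_\forall$, the hypothesis $s \ll t$ means $s \subseteq A$, $t \subseteq B$ with $A \ll B$, forcing $s \neq t$; and by factorization condition~(F\ref{enum:F1}.3) every $a \in s$ and $b \in t$ satisfy $a < b$, so the lemma again applies.

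To finish, I would propagate $\triangleleft$ along the transitive closure. Each of the three component inclusions is transitive, and an equality at one link cannot cancel a strictness at another, so $\triangleleft$ is itself transitive and irreflexive. A straightforward induction on chain length therefore yields: if $c \prec d$ with $c \in \hat s$, $d \in \hat t$, then $\sigma(s) \triangleleft \sigma(t)$, whence $s \neq t$ and $\hat{s} \neq \hat{t}$, as required.

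The main subtle step will be verifying that Lemma~\ref{lma:subBlockMonotonicity} applies to $r_\forall$-edges — which amounts to unpacking the definition of $s \ll t$ through the factorization axioms — and ensuring that thinness of $\fA$ over $\bbB$, on which Lemma~\ref{lma:subBlockMonotonicity} itself depends, is still in effect; but no new machinery is required beyond the already-established results.
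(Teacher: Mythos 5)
Your proof is correct and follows the same route as the paper's: track the sub-types $\langle \bA^-_i, A_i, \bA^+_i \rangle$ along a $\prec$-chain, check that every $r_\exists$- or $r_\forall$-edge satisfies the hypotheses of Lemma~\ref{lma:subBlockMonotonicity} (for $r_\forall$-edges via $A\ll B$ and axiom (F3)), and observe that the three inclusions compose monotonically with at least one strict at every step, so the sub-type sequence cannot repeat, giving irreflexivity. The paper compresses this verification into the word ``immediate,'' whereas you spell it out; the underlying argument is identical.
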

\begin{proof}
	Suppose $c = c_0, \dots, c_m = d$ is a sequence of elements of $\hat{X}$, 
	where $\langle c_i, c_{i+1} \rangle \in r_\exists \cup r_\forall$ for all $i$ ($0 \leq i < m$).
	For all $i$ ($0 \leq i \leq m$), let $c_i \subseteq \hat{s}_i$, and let $\langle \bA^-_i, A_i, \bA^+_i 
	\rangle$ be the sub-type defining the sub-block $s_i$. It is immediate from Lemma~\ref{lma:subBlockMonotonicity} that this sequence of sub-types cannot contain repeated elements.
	It immediately follows that $\prec$ is irreflexive. 
\end{proof}
Define the function $\hat{\tp}$ on $\hat{X}$ by setting $\hat{\tp}(\hat{s}(i)) = \tp(s)$ for every sub-block $s$ and every $i \in \set{0,1}$. Now define $\hat{\fA}$ to be the typed partial order $(\hat{X},\prec, \hat{\tp})$.
Note that, if $\hat{B} \in \hat{\bB}$, the 1-type $\hat{\tp}(\hat{s}(i))$ is constant for all $\hat{s}(i) \in \hat{B}$; we
denote this value by $\hat{\tp}(\hat{B})$.
Finally, we define a partial order $\curlyeqprec$ on $\hat{\bB}$ by setting $\hat{A} \curlyeqprec \hat{B}$ just in case $A \ll B$, and define $\hat{\bbB} = (\hat{\bB},\curlyeqprec)$. 

The number of sub-types is bounded by $|\bB|^{2N+1}$, where $N$ is the number of 1-types. To see this, notice that, since blocks of any given type are linearly ordered,
the sets $\bB^-$ and $\bB^+$ are each specified by a sequence of at most $N$ blocks. 
At the same time, $|\hat{X}| \geq 2$. Indeed, if $|\bB| \geq 2$, this is immediate. If, on the
other hand, $\bB = \set{B}$, then $|B| = |X| \geq 2$, whence $|\hat{X}| = |\hat{B}| \geq 2$. 
Thus, $\hat{\fA}$ does not violate our general restriction to structures of cardinality at least 2.
We now prove a sequence of lemmas culminating in Lemma~\ref{lma:crunchTruth}, which states that $\hat{\fA} \models \Psi$. 
\begin{lemma}
	$\hat{\bbB}$ is a factorization of $\hat{\fA}$. Moreover, the mapping $B \mapsto \hat{B}$ is an isomorphism of typed partial-orders $(\bbB, \ll, \tp) \rightarrow (\hat{\bbB}, \curlyeqprec, \hat{\tp})$. 
\label{lma:isomorphicFactorization}
\end{lemma}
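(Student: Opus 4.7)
The plan is to verify the three factorization conditions (F1)--(F3) for $\hat{\bbB} = (\hat{\bB},\curlyeqprec)$ relative to $\hat{\fA} = (\hat{X},\prec,\hat{\tp})$, and then to read off the isomorphism statement essentially by construction. The preparatory observation that $\hat{\fA}$ is indeed a typed partial order is already in hand: $\prec$ is a partial order by Lemma~\ref{lma:isPO}, $\hat{\tp}$ is total on $\hat{X}$ by definition, and $|\hat{X}| \geq 2$ is noted immediately before the statement. Likewise $\curlyeqprec$ is a partial order on $\hat{\bB}$ because $B \mapsto \hat{B}$ is a bijection between $\bB$ and $\hat{\bB}$ (by the disjointness stipulation on the $\hat{s}$) and $\ll$ is a partial order on $\bB$.

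Conditions (F1) and (F2) are immediate from the construction. For (F1), every $\hat{s}(i) \in \hat{B}$ satisfies $\hat{\tp}(\hat{s}(i)) = \tp(s) = \tp(B)$, so we may legitimately write $\hat{\tp}(\hat{B}) = \tp(B)$. For (F2), fix a 1-type $\pi$; the set $\{\hat{B} \in \hat{\bB} \mid \hat{\tp}(\hat{B}) = \pi\}$ is the image under $B \mapsto \hat{B}$ of $\{B \in \bB \mid \tp(B) = \pi\}$, which is linearly ordered by $\ll$ since $\bbB$ is a factorization, and $\curlyeqprec$ is defined precisely to mirror $\ll$.

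The only clause needing a short argument is (F3). Suppose $\hat{A} \curlyeqprec \hat{B}$, so by definition $A \ll B$. Take any $\hat{a} \in \hat{A}$ and $\hat{b} \in \hat{B}$; then $\hat{a} = \hat{s}(i)$ and $\hat{b} = \hat{t}(j)$ for some sub-blocks $s$ of $A$ and $t$ of $B$ and some $i,j \in \{0,1\}$. Since $s \ll t$, the pair $\langle \hat{s}(i), \hat{t}(j)\rangle$ lies in $r_\forall$, and therefore $\hat{a} \prec \hat{b}$. This completes the verification that $\hat{\bbB}$ is a factorization of $\hat{\fA}$.

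For the isomorphism claim, the mapping $B \mapsto \hat{B}$ is a bijection $\bB \to \hat{\bB}$ by construction, it preserves 1-types since $\hat{\tp}(\hat{B}) = \tp(B)$, and it preserves the block-ordering in both directions since $\curlyeqprec$ was \emph{defined} by $\hat{A} \curlyeqprec \hat{B} \Leftrightarrow A \ll B$. I do not anticipate a genuine obstacle here: the real work in this subsection is pushed into Lemma~\ref{lma:isPO} (irreflexivity of $\prec$, which rests on the strict-monotonicity statement of Lemma~\ref{lma:subBlockMonotonicity}), and once that is granted, the present lemma is simply a bookkeeping exercise unpacking the definitions of $r_\forall$, $\curlyeqprec$ and $\hat{\tp}$.
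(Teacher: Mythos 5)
Your proof is correct and takes essentially the same approach as the paper, which simply asserts the result is ``immediate from the above construction''; you have spelled out the bookkeeping (verifying (F1)--(F3) and the isomorphism) accurately, correctly pushing the only real work back onto Lemma~\ref{lma:isPO}.
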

\begin{proof}
Immediate from the above construction.
\end{proof}

\begin{lemma}
Let $c \in \hat{s}$ and $d \in \hat{t}$, where $s$, $t$ are sub-blocks of $\bbB$, and let $s \subseteq A$, $t \subseteq B$, where $A$, $B$ are blocks of $\bbB$. If $c \prec d$, then: \textup{(i)} for all $a \in s$, there exists $b \in B$ such that $a < b$; and
\textup{(ii)} for all $b \in t$, there exists $a \in A$ such that $a < b$.  
\label{lma:crunch1}
\end{lemma}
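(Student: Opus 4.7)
The plan is to prove both parts simultaneously by induction on the length $m \geq 1$ of a chain $c = c_0, c_1, \dots, c_m = d$ witnessing $c \prec d$ via the relation $r_\exists \cup r_\forall$. Writing $c_i \in \hat{s}_i$ with $s_i$ a sub-block of the block $A_i$, we have $s_0 = s$, $s_m = t$, $A_0 = A$ and $A_m = B$.

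The base case $m = 1$ splits on the type of the single edge. If $\langle c, d\rangle \in r_\forall$, then $s \ll t$ gives $A \ll B$, so every element of $A$ lies below every element of $B$ and both parts are immediate. If $\langle c, d\rangle \in r_\exists$, we have explicit witnesses $a^* \in s$ and $b^* \in t$ with $a^* < b^*$. For (i), any $a \in s$ shares the sub-type of $a^*$, and in particular has the same set of blocks above; since $b^* \in B$ witnesses $B$ above $a^*$, we obtain some $b \in B$ with $a < b$. Part (ii) follows symmetrically from the ``below'' part of the common sub-type of $b$ and $b^*$.

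For the inductive step ($m \geq 2$) I prove (i) by examining the first edge $c_0 \mathrel{r} c_1$ and (ii) symmetrically from the last edge $c_{m-1} \mathrel{r} c_m$. If the first edge is of type $r_\forall$, then $A \ll A_1$; applying the inductive hypothesis to the sub-chain $c_1, \dots, c_m$ of length $m-1$ gives, for any fixed $a_1 \in s_1$, some $b \in B$ with $a_1 < b$, and then $a < a_1 < b$ for every $a \in A$. If the first edge is of type $r_\exists$, with explicit witnesses $a^* \in s$ and $v^* \in s_1$ satisfying $a^* < v^*$, then the inductive hypothesis applied to $c_1 \prec c_m$ furnishes some $b^* \in B$ with $v^* < b^*$, hence $a^* < b^*$; the same sub-type argument used in the base case then transports this to an arbitrary $a \in s$. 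Part (ii) follows in mirror fashion via the inductive hypothesis on $c_0 \prec c_{m-1}$.

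The main subtlety (and what dictates this particular structure of the induction) is the $r_\exists$ case in the inductive step. One cannot naively concatenate the base case applied to $c_0 \mathrel{r} c_1$ with the inductive conclusion on $c_1 \prec c_m$: the intermediate element of $A_1$ produced by the former is only guaranteed to lie in the block $A_1$, not in the specific sub-block $s_1$ on which the latter is anchored. The fix is to bridge the two stages at the canonical $r_\exists$-witness $v^* \in s_1$, and only then to propagate to an arbitrary $a \in s$ by appeal to Lemma~\ref{lma:subBlockMonotonicity}, which ensures that elements sharing a sub-block see identical sets of blocks above and below. (Termination of the chain-based argument is of course underwritten by Lemma~\ref{lma:isPO}.)
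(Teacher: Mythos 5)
Your proof is correct and takes essentially the same approach as the paper: induction on the chain length witnessing $c \prec d$, bridging at the explicit sub-block witness $v^* \in s_1$ furnished directly by the definition of $r_\exists$ (or trivially by $r_\forall$), and then spreading the conclusion to the whole sub-block $s$ by sub-type uniformity. (One minor slip in the closing remarks: the fact that elements sharing a sub-block see the same blocks above and below is definitional---a sub-block is by definition a maximal set of elements with the same sub-type---and is not Lemma~\ref{lma:subBlockMonotonicity}, which instead establishes strict monotonicity of sub-types along $<$ and is what powers Lemma~\ref{lma:isPO}.)
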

\begin{proof}
We may suppose $c = c_0, \dots, c_m = d$ are elements of $\hat{X}$
such that $(c_i, c_{i+1}) \in r_\exists \cup r_\forall$ for all $i$ ($0 \leq i < m$). We establish (i) by induction on $m$. If $m = 1$, from the definition of $r_\exists$ and $r_\forall$, there exist
$a \in s$ and $b \in t \subseteq B$ 
such that $a < b$. Since $s$ is a sub-block, for all $a \in s$, there exists $b \in B$
such that $a < b$. If $m > 1$, suppose the result holds for $c$, $d$ joined by shorter sequences.
Let $c_1 \in \hat{s}_1$.
From the definition of $r_\exists$ and $r_\forall$, there exist
$a \in s$ and $a' \in s_1$ such that $a < a'$. By inductive hypothesis, there exists $b \in B$ such that
$a' < b$, whence $a < b$. Since $s$ is a sub-block, for all $a \in s$, there exists $b \in B$
such that $a < b$. The proof of (ii) is similar.
\end{proof}
\begin{lemma}
Let $c_1 \in \hat{s}_1$, $c_2 \in \hat{s}_2$ and $c_3 \in \hat{s}_3$, where $s_1$, $s_2$, $s_3$ are sub-blocks of $\bbB$, with $s_2$ included in \textup{(} and hence equal to\textup{)} a unit block $B$. If $s_1 \prec s_2 \prec s_3$, 
then, for all $a \in s_1$ and all $b \in s_3$, $a < b$.  
\label{lma:crunch2}
\end{lemma}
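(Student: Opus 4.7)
My plan is to reduce this immediately to two applications of Lemma~\ref{lma:crunch1}, using the fact that $s_2$ is contained in a unit block to bridge them.

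First, I would observe that since $B$ is a unit block and $s_2 \subseteq B$ is itself a sub-block, the hypothesis forces $s_2 = B = \{a_2\}$ for some single element $a_2 \in X$. Hence any element $a' \in B$ arising in the conclusions of Lemma~\ref{lma:crunch1} must equal $a_2$.

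Next I would apply Lemma~\ref{lma:crunch1}(i) to the relation $c_1 \prec c_2$: taking $s = s_1$ and $t = s_2 \subseteq B$, the conclusion states that for every $a \in s_1$ there exists some $b \in B$ with $a < b$; because $B = \{a_2\}$, this reads simply $a < a_2$ for all $a \in s_1$. Symmetrically, I would apply Lemma~\ref{lma:crunch1}(ii) to the relation $c_2 \prec c_3$: taking $s = s_2 \subseteq B$ (playing the role of $A$ in that lemma) and $t = s_3$, the conclusion states that for every $b \in s_3$ there exists $a \in B$ with $a < b$, which, again by $B = \{a_2\}$, yields $a_2 < b$ for all $b \in s_3$.

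The conclusion then follows instantly from the transitivity of the partial order $<$ on $X$: for any $a \in s_1$ and any $b \in s_3$ we have $a < a_2 < b$, so $a < b$. There is no real obstacle here; the content of the lemma is entirely that the singleton nature of $s_2$ makes the ``there exists $b \in B$'' conclusion of Lemma~\ref{lma:crunch1} collapse to ``the unique element of $B$'', which is precisely what allows the two uses of Lemma~\ref{lma:crunch1} to be chained through a common witness.
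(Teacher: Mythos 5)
Your proof is correct and follows exactly the same approach as the paper: two applications of Lemma~\ref{lma:crunch1} (parts (i) and (ii) respectively), chained through the unique element of the unit block $B$, and finished by transitivity of $<$. Your write-up is, if anything, slightly more explicit than the paper's, which has a minor variable-naming slip ($b'$ versus $b$) in the same place.
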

\begin{proof}
By Lemma~\ref{lma:crunch1},
for all $a \in s_1$ there exists $b' \in B$ such that $a < b$, and,
for all $b \in s_3$ there exists $b' \in B$ such that $b' < b$. But $B$ is a singleton, whence $a < b$.
\end{proof}
\begin{lemma}
Let $c_1 \in \hat{s}_1$, $c_2 \in \hat{s}_2$, $c_3 \in \hat{s}_3$ and
$c_4 \in \hat{s}_4$ where $s_1$, $s_2$, $s_3$ $s_4$ are sub-blocks of $\bbB$. If $c_1 \prec c_2$, $\langle c_2, c_3 \rangle \in r_\forall$  and $c_3 \prec c_4$, 
then, for all $a \in s_1$ and all $b \in s_4$, $a < b$.  
\label{lma:crunch3}
\end{lemma}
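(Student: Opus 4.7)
The plan is to chain the two $\prec$-intervals across the middle $r_\forall$-link, using Lemma~\ref{lma:crunch1} at each end. First I would unpack the middle hypothesis: writing $s_2 \subseteq A_2$ and $s_3 \subseteq A_3$, the definition of $r_\forall$ gives $s_2 \ll s_3$, i.e., $A_2 \ll A_3$ in the block order. By clause (F3) of the definition of a factorization, every element of $A_2$ therefore lies below every element of $A_3$ in the element order $<$.

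Next I would apply Lemma~\ref{lma:crunch1} at both ends. From $c_1 \prec c_2$, clause (i) supplies, for each $a \in s_1$, some $b' \in A_2$ with $a < b'$; from $c_3 \prec c_4$, clause (ii) supplies, for each $b \in s_4$, some $a' \in A_3$ with $a' < b$. Since $b' \in A_2$, $a' \in A_3$ and $A_2 \ll A_3$, clause (F3) forces $b' < a'$, so transitivity of $<$ yields $a < b' < a' < b$, whence $a < b$.

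I do not anticipate any real obstacle: the statement is essentially a two-step composition of Lemma~\ref{lma:crunch1} mediated by clause (F3). The essential structural point is that the middle link is $r_\forall$, which relates \emph{all} elements of $\hat{s}_2$ to \emph{all} elements of $\hat{s}_3$ regardless of left/right indices; this is precisely what lets the two end-witnesses $b'$ and $a'$ be glued together without any unit-block assumption of the kind needed in Lemma~\ref{lma:crunch2}.
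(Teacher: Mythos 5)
Your argument is correct and is exactly what the paper means by ``similar reasoning to Lemma~\ref{lma:crunch2}'': apply Lemma~\ref{lma:crunch1}(i) to the segment $c_1 \prec c_2$ and Lemma~\ref{lma:crunch1}(ii) to the segment $c_3 \prec c_4$, then glue the two witnesses via the middle link. The only change from crunch2's proof is that the gluing step uses $s_2 \ll s_3$, hence $A_2 \ll A_3$ and (F3), in place of the singleton argument.
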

\begin{proof}
Similar reasoning to Lemma~\ref{lma:crunch2}.
\end{proof}

Now for the crucial lemma guaranteeing the existence of incomparable witnesses in the typed partial order $\hat{\fA}$.
For $c, d \in \hat{X}$, we write $c \asymp d$ to mean that $c \neq d$, $c \not \prec d$ and $d \not \prec c$. That is, $\asymp$ stands in the same relation to $\prec$ as $\sim$ does to $<$.
\begin{lemma}
Suppose $a, b \in X$ with $a \sim b$.
Let $s$ be the sub-block containing $a$ and $t$ the sub-block containing $b$. Then, for every $c \in \hat{s}$, there exists $d \in \hat{t}$ such that 
$c \asymp d$.   
\label{lma:crunchPreserve}
\end{lemma}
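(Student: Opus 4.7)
The plan is a case analysis on whether $s = t$, and, when $s \neq t$, on whether the blocks $A \supseteq s$ and $B \supseteq t$ are unit blocks. In each case I would exhibit a suitable $d \in \hat{t}$ and use $a \sim b$ to rule out both $c \prec d$ and $d \prec c$; the common structural engine is that whenever a chain witnessing $c \prec d$ is forced to touch a singleton bottleneck, Lemma~\ref{lma:crunch1} collapses an existential quantifier and produces a strict relation $a < b$.

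If $s = t$, then $a \neq b$ forces the common block to contain at least two elements, hence to be non-unit, so $|\hat{s}| = 2$. Choosing $d \in \hat{s} \setminus \{c\}$, Lemma~\ref{lma:isPO} gives $c \not\prec d$ and $d \not\prec c$ at once. If $s \neq t$ and $B$ is a unit block (so $B = \{b\}$), any chain witnessing $c \prec d'$ for some $d' \in \hat{t}$ would, by Lemma~\ref{lma:crunch1}(i), supply for every $a' \in s$ an element $b' \in B$ with $a' < b'$; since $B = \{b\}$, this reads $a < b$, contradicting $a \sim b$. The symmetric direction $d' \prec c$ is ruled out by Lemma~\ref{lma:crunch1}(ii), and the case $A$ a unit block is handled identically, so any $d \in \hat{t}$ works.

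The substantive case is $s \neq t$ with both $A$ and $B$ non-unit, so $|\hat{s}| = |\hat{t}| = 2$. Writing $c = \hat{s}(i)$, I would take $d = \hat{t}(1-i)$ and suppose for contradiction a chain $c = c_0, c_1, \ldots, c_m = d$ of $r_\exists \cup r_\forall$-edges exists. I would then split into three subcases. First, if some edge $c_k \to c_{k+1}$ is $r_\forall$ between blocks $U \ll V$, Lemma~\ref{lma:crunch1}(i) applied to the initial sub-chain and Lemma~\ref{lma:crunch1}(ii) applied to the terminal sub-chain, combined with factorization condition (F3) giving $u' < v'$ for all $u' \in U, v' \in V$, squeeze out $a < b$. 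Second, if all edges are $r_\exists$ but some intermediate $c_k$ lies in a sub-block contained in a unit block $\{x\}$, the same two applications of Lemma~\ref{lma:crunch1} with $S_k = \{x\}$ yield $a < x$ and $x < b$, so again $a < b$. Third, if neither of the above holds, every intermediate sub-block sits in a non-unit block, so sides are distinguishable throughout the chain; since $r_\exists$ preserves the side index, the chain would force $i = 1-i$, impossible. The argument for $d \not\prec c$ is symmetric, and $c \neq d$ follows from $\hat{s} \cap \hat{t} = \emptyset$.

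The main obstacle is the last case: the naive same-side choice $\hat{t}(i)$ can easily be $\prec$-comparable to $c$ via a single $r_\exists$ edge coming from some $a' < b'$ with $a' \in s$ and $b' \in t$ distinct from $a$ and $b$, so one has to pick the opposite-side object. That choice turns the side-preservation of $r_\exists$ among non-unit-block sub-blocks into a trap: any putative chain must change side, and that change must route either through an $r_\forall$ edge or through a unit-block sub-block. In both cases Lemma~\ref{lma:crunch1} collapses into the strict relation $a < b$ that $a \sim b$ forbids.
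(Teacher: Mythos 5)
Your proof is correct and follows the same strategy as the paper: choose the opposite-side object $\hat{t}(1-i)$, then argue that any $\prec$-chain from a left-object to a right-object must pass through an $r_\forall$-edge or a unit block, which (via Lemma~\ref{lma:crunch1}, or equivalently Lemmas~\ref{lma:crunch2}--\ref{lma:crunch3}) squeezes out $a<b$, contradicting $a\sim b$. You are somewhat more explicit than the paper's proof about the degenerate cases ($s=t$; $A$ or $B$ a unit block, where the parity argument is vacuous or unneeded) and you re-derive the content of Lemmas~\ref{lma:crunch2} and~\ref{lma:crunch3} inline rather than citing them, but the substance is identical.
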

\begin{proof}
Assume without loss of generality that $c = \hat{s}(0)$ is a left-element. We claim that $d= \hat{t}(1)$ is incomparable to $c$. For suppose $c \prec d$. Then there is a sequence $c = c_0, \dots, c_m = d$ of elements of $\hat{X}$
such that $(c_i, c_{i+1}) \in r_\exists \cup r_\forall$ for all $i$ ($0 \leq i < m$). Let $c_i \in \hat{s}_i$ and $s_i \subseteq A_i$ for all $i$. Since $c$ is a left-element and $d$ is a right element, either 
$A_i$ is a unit block for some $i$ ($0 \leq i \leq m$) or 
$(c_i, c_{i+1}) \in r_\forall$ for some $i$ ($0 \leq i < m$). It then follows from Lemmas~\ref{lma:crunch2} or~\ref{lma:crunch3} that, for all $a' \in s$,
and all $b' \in t$, $a' < b'$, contradicting the supposition that
$a \sim b$. Hence $c \not \prec d$. By a similar argument,
$d \not \prec c$.
\end{proof}

\begin{lemma}
$\hat{\fA} \models \Psi$.
\label{lma:crunchTruth}
\end{lemma}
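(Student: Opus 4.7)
The plan is to verify $\hat{\fA} \models \psi$ case by case for the basic forms (B1a)--(B10). The 1-type clauses (B9) and (B10) are immediate: every 1-type realized in $\hat{\fA}$ is realized in $\fA$ (so universal $\mu$-clauses carry over), and conversely every block $A$ produces a non-empty $\hat{A}$ of elements of the correct 1-type, so a witness for $\exists x.\mu$ transfers. The equality clauses (B1a) and (B1b) follow from the observation that the corresponding 1-types are either globally unique or mutually exclusive in $\fA$, and the construction of $\hat{X}$ preserves both properties.

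For the forbidding clauses (B2a), (B2b) and (B4), I argue by contradiction: if a pair $c = \hat{s}(i)$, $d = \hat{t}(j)$ violated the clause via $c \prec d$, then Lemma~\ref{lma:crunch1} would furnish $a \in s$ and $b \in t$ with $a < b$, contradicting $\fA \models \psi$. A subsidiary point is that two elements of a common set $\hat{s}$ of cardinality two are automatically $\prec$-incomparable, since by Lemma~\ref{lma:subBlockMonotonicity} the chain of sub-types along any $\prec$-path is strictly monotone (this is essentially the content of Lemma~\ref{lma:isPO}), so no $\prec$-chain can return to its own sub-block. For the factor-controllable forms (B3) and (B5b), the hypothesis $\bbB \models \FC(\Psi)$ together with the typed partial-order isomorphism of Lemma~\ref{lma:isomorphicFactorization} gives the required block-level ordering in $\hat{\bbB}$, and then the $r_\forall$ component of $\prec$ forces every relevant pair into comparability. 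Clause (B5a) is where unitarity enters: since $\fA \models$ (B5a) linearly orders the $\alpha$-elements, each $\alpha$-block is linearly ordered by $<$, hence unitarity forces it to be a unit block, making $\hat{A}$ a singleton; distinct $\alpha$-blocks are then ordered in $\hat{\bbB}$ by Lemma~\ref{lma:isomorphicFactorization}, and the corresponding singletons are $\prec$-comparable via $r_\forall$.

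The existential clauses are the heart of the argument. For (B6) (and symmetrically (B7)), given $c = \hat{s}(i) \in \hat{A}$ with $\tp(A) = \alpha$, pick $a \in s$ and apply $\fA \models$ (B6) to obtain $b$ satisfying $\mu$ with $\tp(b) \ne \alpha$ and $a < b$. Because $\tp(b) \ne \alpha$, the sub-block $t$ of $b$ sits in a block distinct from $A$, so $s \ne t$ and $\langle \hat{s}(i), \hat{t}(i)\rangle \in r_\exists$; hence $c \prec \hat{t}(i)$, and $\hat{t}(i) \models \mu$ since $\mu$ is pure Boolean and so depends only on the common 1-type $\tp(b)$ of $b$ and $\hat{t}(i)$.

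The main obstacle, and the reason for the two-object expansion $\hat{s} = \{\hat{s}(0), \hat{s}(1)\}$ with the left/right discipline in $r_\exists$, is clause (B8). For this, given $c = \hat{s}(i) \in \hat{A}$, pick $a \in s$ and use $\fA \models$ (B8) to obtain $b \in X$ with $a \sim b$ and $\mu(b)$. Lemma~\ref{lma:crunchPreserve} then delivers $d \in \hat{t}$, where $t$ is the sub-block of $b$, with $c \asymp d$; and $d \models \mu$ as before. The entire construction of $\hat{X}$ was engineered precisely to make Lemma~\ref{lma:crunchPreserve} available at this step.
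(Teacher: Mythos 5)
Your proof is correct and follows the paper's case-by-case verification, invoking Lemmas~\ref{lma:crunch1}, \ref{lma:crunchPreserve}, \ref{lma:isomorphicFactorization}, and (implicitly, via Lemma~\ref{lma:isPO}) Lemma~\ref{lma:subBlockMonotonicity} in the same roles. The only deviations are stylistic: for (B3) and (B5b) you argue directly through the $r_\forall$ component rather than routing through the first statements of Lemmas~\ref{lma:lmaBlockAlphaBetaGlobal} and~\ref{lma:lmaBlockAlphaBetaAlternate} as the paper does, and your gloss of (B1b) as ``mutual exclusivity'' would be more accurately stated as ``at most one of $\alpha,\beta$ is realized,'' though the intended argument is the paper's.
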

\begin{proof}
We consider the possible forms of $\psi \in \Psi$ in turn. 
\begin{description}
\item \eqref{eq:b1a}: $\fA \models \psi$ implies that 
there is just one block $A$ of $\bbB$ having 1-type $\alpha$, and $A = s$ is a unit block. But then there is only one element of $\hat{X}$ having 1-type $\alpha$, namely
$\hat{s}(0) = \hat{s}(1)$. Thus $\hat{\fA} \models \psi$.

\item \eqref{eq:b1b}: This formula is equivalent to $\forall x \neg \alpha \vee  \forall x \neg \beta$; but
the realized 1-types in $\fA$ and $\hat{\fA}$ are the same.
	
\item \eqref{eq:b2a}: Suppose $c \in \hat{s}$, $d \in \hat{t}$ be such that $\hat{\tp}(c) = \alpha$ 
and $\hat{\tp}(d) = \alpha$. Let $s$, $t$ be sub-blocks of the respective blocks $A$ and $B$. If $c \prec d$, then, by Lemma~\ref{lma:crunch1}, there exist $a \in A$ and $b \in B$
such that $a <b$, contradicting $\fA \models \psi$. Similarly if $d \prec c$. Thus, $\hat{\fA} \models \psi$.

\item \eqref{eq:b2b}: Similar to \eqref{eq:b2a}.

\item \eqref{eq:b3}, \eqref{eq:b5b}: By Lemma~\ref{lma:isomorphicFactorization}, $\hat{\bbB}$ is isomorphic to $\bbB$ (as a typed partial order), so that $\hat{\bbB} \models \psi$. But since $\hat{\bbB}$ is a factorization of $\hat{\fA}$, we have,
by the first statements of Lemmas~\ref{lma:lmaBlockAlphaBetaGlobal} and~\ref{lma:lmaBlockAlphaBetaAlternate}, $\hat{\fA} \models \psi$.

\item \eqref{eq:b4}: Let $c= \hat{s}(i)$ be of 1-type $\beta$ and $d= \hat{t}(j)$ be of 1-type $\alpha$. Let $t$ lie in the block $B$ of $\bbB$. Suppose, for contradiction, $c \prec d$. Pick any $a \in s$. By Lemma~\ref{lma:crunch1}, there exists 
$b \in B$ such that $a < b$. But $a$ is of type $\beta$ and $b$ of type $\alpha$, contradicting $\fA \models \psi$. Hence $\hat{\fA} \models \psi$.

\item \eqref{eq:b5a}: $\fA \models \psi$ implies that every block of $\bbB$ having 1-type $\alpha$ is linearly ordered, and
hence, by assumption, is in fact a unit-block. But then 
every block of $\hat{\bbB}$ having 1-type $\alpha$ is a unit-block, whence $\hat{\fA} \models \psi$.

\item \eqref{eq:b6}: Let $c= \hat{s}(i)$ be of 1-type $\alpha$, and pick any $a \in s$. Since $\fA \models \psi$, we have $b > a$ such that $\tp(b) \neq \alpha$ and $\fA \models \mu[b]$. Let $b$ be in the sub-block $t$, and let $d = \hat{t}(i)$. By construction, $c \prec d$
and $\hat{\fA} \models \mu[d]$, whence $\hat{\fA} \models \psi$. 

\item \eqref{eq:b7}: Similar to~\eqref{eq:b6}.

\item \eqref{eq:b8}:
Let $c \in \hat{s}$ be of 1-type $\alpha$, and pick any $a \in s$. Since $\fA \models \psi$, we have $b \sim a$ such that $\fA \models \mu[b]$. Let $b$ be in the sub-block $t$. By Lemma~\ref{lma:crunchPreserve} there exists $d \in \hat{t}$ such that $c \asymp d$. Hence $\hat{\fA} \models \psi$.

\item \eqref{eq:b9}, \eqref{eq:b10}: The realized 1-types in $\fA$ and $\hat{\fA}$ are the same.
\end{description}
\end{proof}

\begin{theorem}
Let $\phi$ be an $\LtoPOu$-formula in weak normal form with multiplicity $m$ over a signature $\sigma$. If $\phi$ has a finite model, then it has a model of size bounded by a doubly exponential function of $|\sigma| +m$.
Hence, any finitely satisfiable $\LtoPOu$-formula $\phi$ has a model 
of size bounded by a doubly exponential function of
$\sizeOf{\phi}$, and so $\FinSat(\LtoPOu)$ is in $\TwoNExpTime$.
\label{theo:mainPOUnary}
\end{theorem}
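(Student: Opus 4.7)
The plan is simply to chain together the main lemmas of the section, checking that the size bounds compose correctly. Given $\phi$ in weak normal form of multiplicity $m$ over signature $\sigma$, first invoke Lemma~\ref{lma:nfPO} to produce a conjunction $\Psi$ of basic formulas over an extended signature $\sigma^*$ with $|\sigma^*|=|\sigma|+3m$, where $\phi$ and $\Psi$ are satisfiable over the same finite domains. Assuming $\phi$ has a finite model, so does $\Psi$.

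Next, apply Lemma~\ref{lma:fewBlocks} to $\Psi$ to obtain a model $\fA$ of $\Psi$ together with a unitary factorization $\bbB = (\bB, \ll)$ such that $\fA$ is thin over $\bbB$, $\bbB \models \FC(\Psi)$, and $|\bB|$ is bounded by a doubly exponential function of $|\sigma^*|$, hence of $|\sigma|+m$. At this point the number of blocks is controlled, but individual blocks could still be huge, so I would apply the block-shrinking construction of Section~\ref{subsec:blockShrink} to $(\fA,\bbB)$, yielding a typed partial order $\hat{\fA}$ with factorization $\hat{\bbB}$. By Lemma~\ref{lma:isomorphicFactorization}, $\hat{\bbB}$ is isomorphic to $\bbB$ as a typed partial order (so $|\hat{\bB}|=|\bB|$), and by construction each block of $\hat{\bbB}$ has cardinality at most $2$. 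Lemma~\ref{lma:crunchTruth} then ensures $\hat{\fA}\models\Psi$; restricting the signature to $\sigma$ gives a $\sigma$-model of $\phi$ of cardinality at most $2|\bB|$, which is doubly exponential in $|\sigma|+m$, proving the first assertion.

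For the second assertion, apply Lemma~\ref{lma:nfL2} to convert an arbitrary $\LtoPOu$-formula $\phi$ to a standard-normal-form (hence weak-normal-form) formula $\phi'$ whose signature size and multiplicity are both polynomial in $\sizeOf{\phi}$; a satisfying expansion of any model of $\phi$ then yields a small model of $\phi$. The doubly exponential bound in $|\sigma|+m$ thus becomes doubly exponential in $\sizeOf{\phi}$. The complexity conclusion is immediate: a nondeterministic algorithm guesses a typed partial order of doubly exponential size and verifies $\phi$ by brute evaluation in time polynomial in the structure's cardinality, fitting within $\TwoNExpTime$.

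There is no genuine obstacle at this stage, since all the real technical work has been discharged in the earlier lemmas; the only thing to watch is that the constant factor $3$ in $|\sigma^*|=|\sigma|+3m$ and the constant factor $2$ from block duplication are absorbed into the doubly exponential bound, and that $\hat{\fA}$ satisfies the general convention of having at least two elements (which is noted in the remark preceding Lemma~\ref{lma:isomorphicFactorization}).
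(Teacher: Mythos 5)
Your overall plan mirrors the paper's proof exactly, and the chain Lemma~\ref{lma:nfPO} $\to$ Lemma~\ref{lma:fewBlocks} $\to$ Lemma~\ref{lma:crunchTruth} $\to$ Lemma~\ref{lma:nfL2} is precisely the intended one. However, there is a concrete error in your accounting for the size of $\hat{\fA}$, and it reflects a misunderstanding of the block-shrinking construction.

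You assert that ``by construction each block of $\hat{\bbB}$ has cardinality at most $2$'' and conclude $|\hat{X}|\le 2|\bB|$. That is not what the construction does. It is each \emph{sub-block} image $\hat{s}=\{\hat{s}(0),\hat{s}(1)\}$ that has cardinality at most $2$; the block $\hat{B}$ is defined as $\hat{B}=\bigcup\{\hat{s}\mid s\ \text{a sub-block of}\ B\}$, and a single block $B$ of $\bbB$ may split into many sub-blocks according to the sub-types $\langle \bB^-,B,\bB^+\rangle$ of its elements. The whole point of introducing sub-blocks is that a block's elements need not all look the same to the rest of the structure, so you cannot collapse $B$ to two elements outright. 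The correct count, as given in the paper, is that the number of sub-types is bounded by $|\bB|^{2N+1}$ (with $N$ the number of 1-types over $\sigma^*$, and using that blocks of a fixed type are linearly ordered so $\bB^-$ and $\bB^+$ are each described by at most $N$ blocks), hence $|\hat{X}|\le 2\,|\bB|^{2N+1}$. Your stated bound of $2|\bB|$ is simply wrong. It happens that the corrected bound is still doubly exponential in $|\sigma|+m$ (since $|\bB|$ is doubly exponential and $N$ is only singly exponential, and a doubly-exponential quantity raised to a singly-exponential power remains doubly exponential), so the theorem's conclusion is unaffected; but as written your justification for the size bound is incorrect and needs to be repaired by appealing to the sub-block count rather than a per-block cap of $2$.
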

\begin{proof}
For the first statement, by Lemma~\ref{lma:nfPO}, we may replace $\phi$ by a set $\Psi$ of basic formulas over a signature $\sigma^*$ of size at most $|\sigma| +3m$.
By Lemma~\ref{lma:fewBlocks}, let $\fA$ be a typed partial order with unitary factorization $\bbB = (\bB, \ll)$. such that
$\fA  \models \Psi$, 
$\bbB \models \FC(\Psi)$, $\bbB$ is of size doubly exponential in $|\sigma^*|$, and $\fA$ is thin over $\bbB$. 
Now let $\hat{\fA}$ be as defined before Lemma~\ref{lma:isPO}. By Lemma~\ref{lma:crunchTruth}, $\hat{\fA} \models \Psi$. But $\hat{\fA}$ is of size at most $2(|\bB|^{2N+1})$, where $N$ is the number of 1-types over $\sigma^*$. Thus, $\Psi$ is satisfiable over a domain doubly exponential in $|\sigma|+m$.
The remainder of the theorem follows by Lemma~\ref{lma:nfL2}.
\end{proof}

\section{Two-variable logic with one partial order}
\label{sec:L21po}
The purpose of this section is to show that the logic $\LtoPO$ has the doubly 
exponential-sized finite model property (Theorem~\ref{theo:mainPO}): if $\phi$ is a finitely satisfiable $\LtoPO$-formula, then $\phi$ has a model
of size bounded by some fixed doubly exponential function of $\sizeOf{\phi}$. It follows that the finite satisfiability problem for $\LtoPO$ is in \TwoNExpTime. We proceed by reduction to the
corresponding problem for weak normal-form $\LtoPOu$-formulas, paying particular attention to the size of the relevant signature, and the multiplicities of the formulas in question.
In this section, we continue to assume that all signatures contain the navigational  predicates $<$, $>$ and $\sim$, subject to the usual semantic constraints. We use the (possibly decorated) variable $\tau$ to range over 2-types, $\lambda$, $\mu$, $\nu$ over unary pure Boolean formulas
and $\zeta$, $\eta$, $\theta$, $\phi$, $\chi$, $\psi$, $\omega$ over arbitrary formulas.
Henceforth, for any integer $n$, we denote by $\lfloor n \rfloor$ the value $n$ modulo 3. 

A crucial step in our reduction is the definition of a specialized normal form for $\LtoPO$-formulas, from which it is easy to eliminate ordinary binary predicates. 
Say that an $\LtoPO$-formula is in {\em spread normal form} if it conforms to the pattern
\begin{equation}
\begin{split}
\bigwedge_{\zeta \in Z} \exists x . \zeta \wedge &
 \forall x \forall y (x = y \vee \eta) \wedge\\ 
&       \bigwedge_{k=0}^2 \bigwedge_{h=0}^{m-1} \forall x \exists y (\lambda_k \rightarrow  (\lambda_{\lfloor k+1 \rfloor}(y) \wedge \mu_h(y) \wedge \theta_h)), 
\end{split}
\label{eq:spreadNf}
\end{equation}
where: (i) $Z$ is a set of unary pure Boolean formulas; 
(ii) $\eta,  \theta_0, \dots,\theta_{m-1}$ are quantifier- and equality-free formulas, with $m \geq 1$; and (iii) 
$\lambda_0, \lambda_1, 
\lambda_2$ are mutually exclusive unary pure Boolean formulas;
and (iv) 
$\mu_0, \dots, \mu_{m-1}$ are mutually exclusive unary pure Boolean formulas. 
Spread normal form is---modulo insertion of harmless conjuncts $x \neq y$---a special case
of weak normal form~\eqref{eq:nf}. We take the {\em multiplicity} of the spread normal form formula~\eqref{eq:spreadNf}
to be the quantity $3m$. (Thus, the definitions of multiplicity for spread normal form 
and weak normal form agree.) Its distinguishing feature is that 
witnesses are required to be `spread' over disjoint sets of elements. Thus, suppose $\fA$ is a model
of the formula~\eqref{eq:spreadNf}, and 
$\fA \models \lambda_k[a]$ for some $a \in A$ and some $k$ ($0 \leq k < 3$). Then
there exist $b_0, \dots, b_{m-1} \in A$ such that, for
each $h$ ($0 \leq h < m$), $\fA \models \theta_h[a,b_h]$ and $\models \mu_h[b_h]$.
It follows that the $b_0, \dots, b_{m-1}$ are distinct; moreover, 
all of these elements satisfy $\lambda_{\lfloor k+1 \rfloor}(y)$, so that
{\em their} witnesses, which satisfy $\lambda_{\lfloor k+2 \rfloor}(y)$, cannot include $a$. Thus, the witnesses for an element of $\fA$ are never duplicated, and nothing is a witness of a witness of itself. 

In order to transform $\LtoPO$-formulas into spread normal form, we must first
establish a lemma allowing us to create copies of certain parts of structures without compromising
the truth of those formulas.
If $\fA$ is any structure interpreting a signature $\sigma$, we call any element of $a$ a {\em king} if it is the unique element of $A$ realizing its 1-type (over $\sigma$): $\tp^\fA[b] = \tp^\fA[a]$ implies $b = a$ for all $b \in A$. Elements which are kings are said to be {\em royal}.
The following lemma says that we may duplicate the non-royal elements of any structure 
any (finite) number of times.
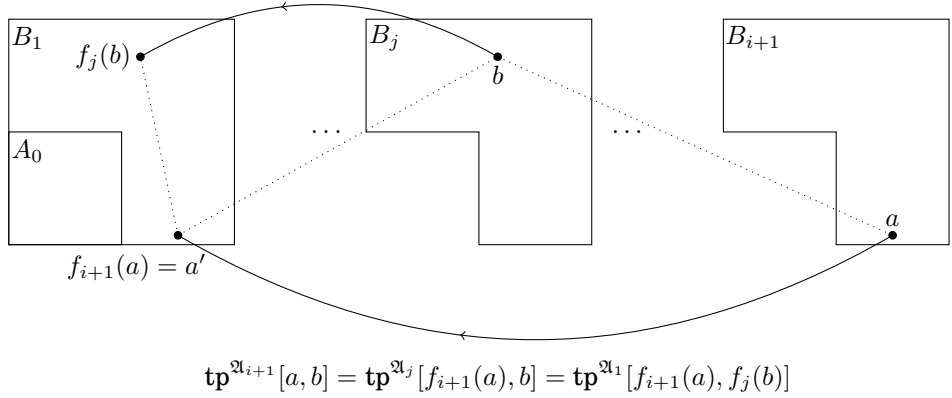
\begin{figure}
\begin{center}
\begin{tikzpicture}[scale= 0.5]

\draw[dotted] (4.5,3.75) -- (3.5,8.5);
\draw[dotted] (4.5,3.75) -- (13,8.5);
\draw[dotted] (13,8.5) -- (23.5,3.75);

\filldraw (0.5,9) node (B1) {$B_1$};
\filldraw (0.5,6) node (A0) {$A_0$};
\draw (0,3.5) rectangle (6,9.5);
\draw (0,3.5) rectangle (3,6.5);
\filldraw (3.5,8.5) node[left] {$f_j(b)$} circle (0.1);
\filldraw (4.5,3.75) circle (0.1);
\filldraw (3.4,3.5) node[below] {$f_{i+1}(a) = a'$};

\draw (8.5,6.5) node {$\dots$};

\filldraw (10,9) node (Bi) {$B_j$};
\draw (9.5,9.5) -- (15.5,9.5) -- (15.5,3.5)  -- (12.5,3.5) -- (12.5, 6.5) -- (9.5,6.5) -- cycle;
\filldraw (13,8.5) node[below] {$b$} circle (0.1);

\draw (16.5,6.5) node {$\dots$};

\filldraw (19.8,9) node (Bj) {$B_{i+1}$};
\draw (19,9.5) -- (25,9.5) -- (25,3.5)  -- (22,3.5) -- (22, 6.5) -- (19,6.5) -- cycle;
\filldraw (23.5,3.75) node[above]{$a$} circle (0.1);

\draw [->-=0.6, bend right] (13,8.5) to (3.5,8.5);
\draw [->-=0.6, bend left] (23.5,3.75) to (4.5,3.75);

\draw (13,0) node {$\tp^{\fA_{i+1}}[a,b] = \tp^{\fA_j}[f_{i+1}(a),b] = \tp^{\fA_1}[f_{i+1}(a),f_j(b)]$};
\end{tikzpicture}
\end{center}
\caption{Construction of the structure $\fA_{i+1}$ (Lemma~\ref{lma:betterCopyLemma}), where $a \in B_{i+1}$, $b \in B_j$, and $j \leq i$.}
\label{fig:betterCopyLemma}
\end{figure}
\begin{lemma}
Let $\fA_1$ be a structure over domain $A_1$,
$A_0$ the set of kings of $\fA_1$, and $B_1 = A_1 \setminus A_0$. 
There exists a family of sets $\set{B_i}_{i \geq 2}$, pairwise disjoint and disjoint from $A_1$, 
a family of bijections
$\set{f_i}_{i \geq 2}$, where $f_i: B_i \rightarrow B_1$,
and a sequence of structures $\set{\fA_i}_{i \geq 2}$, where
$\fA_i$ has domain $A_i = A_0 \cup B_1 \cup B_2 \cup \cdots \cup B_i$, such that, for all $i \geq 2$:
\begin{enumerate}[\textup{(}i\textup{)}]
\item $\fA_{i-1} \subseteq \fA_i$, and all 2-types realized in $\fA_i$ are realized in $\fA_{1}$;
\item for all $a \in B_i$ and all $b \in A_1$, if $f_i(a) \neq b$, then
$\tp^{\fA_{i}}[a,b] = \tp^{\fA_{1}}[f_i(a),b]$;
\item for all $a \in B_i$, all $j$ \textup{(}$2 \leq j \leq i$\textup{)} and all $b \in B_j$, if $f_i(a) \neq f_j(b)$, then 
$\tp^{\fA_{i}}[a,b] = \tp^{\fA_1}[f_i(a),f_j(b)]$;
\item $<^{\fA_i}$ is a partial order.
\end{enumerate}
\label{lma:betterCopyLemma}
\end{lemma}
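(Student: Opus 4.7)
The plan is to build the $\fA_i$ inductively on $i \geq 2$, at each stage adjoining a fresh set $B_{i+1}$ (disjoint from all previous $B_j$ and from $A_1$) together with a bijection $f_{i+1}: B_{i+1} \rightarrow B_1$. For bookkeeping, I would introduce a ``projection'' map $\pi: A_i \rightarrow A_1$ which is the identity on $A_0 \cup B_1$ and sends each $a \in B_j$ (for $2 \leq j \leq i$) to $f_j(a) \in B_1$. For any pair of distinct elements $a, b \in A_i$ with $\pi(a) \neq \pi(b)$, I would declare $\tp^{\fA_i}[a,b] = \tp^{\fA_1}[\pi(a), \pi(b)]$; this declaration immediately secures conditions (ii) and (iii), and ensures that every such 2-type is realized in $\fA_1$.

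The main obstacle is defining 2-types for \emph{degenerate} pairs, i.e., distinct $a,b \in A_i$ for which $\pi(a) = \pi(b) = c \in B_1$. Such $a$ and $b$ have the same 1-type $\alpha = \tp(c)$, and since $c$ is not a king, there exists some $c^* \in A_1 \setminus \{c\}$ with $\tp(c^*) = \alpha$; thus a 2-type $\tau_\alpha = \tp^{\fA_1}[c^*, c]$ between two distinct elements of type $\alpha$ is available as a template. I would fix such a $\tau_\alpha$ once and for all for each non-royal 1-type $\alpha$. The set of $\alpha$-\emph{copies} (that is, $\{c\} \cup \bigcup_{j \geq 2} f_j^{-1}(c)$) is then linearly ordered by block index (with $c$ itself assigned to block $1$), and 2-types between copies are set to match $\tau_\alpha$ or its swap according to this linear order: if $\tau_\alpha$ contains $x \sim y$ then all $\alpha$-copies form an antichain; if it contains $x < y$ (resp., $x > y$) then lower block index means smaller (resp., larger) in the constructed order.

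Verification of (i) is then straightforward: every realized 2-type is either of the form $\tp^{\fA_1}[\pi(a), \pi(b)]$ or one of the forms $\tau_\alpha$, all realized in $\fA_1$; and $\fA_{i-1} \subseteq \fA_i$ because the definitions on old pairs are untouched. Conditions (ii) and (iii) hold by construction. The substantive verification is (iv): irreflexivity of $<^{\fA_i}$ follows from (i) since $<^{\fA_1}$ is irreflexive. For transitivity, given $a <^{\fA_i} b <^{\fA_i} d$, I would case-split on which of $\pi(a), \pi(b), \pi(d)$ coincide. When all three are distinct, the chain reduces to transitivity in $\fA_1$. When exactly one coincidence occurs, say $\pi(a) = \pi(b) \neq \pi(d)$, the link $b <^{\fA_i} d$ gives $\pi(b) <^{\fA_1} \pi(d)$, hence $\pi(a) <^{\fA_1} \pi(d)$ and thus $a <^{\fA_i} d$; the case $\pi(b) = \pi(d)$ is symmetric. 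The case $\pi(a) = \pi(d) \neq \pi(b)$ is vacuous since it would force $\pi(a) <^{\fA_1} \pi(b) <^{\fA_1} \pi(a)$. Finally, the case $\pi(a) = \pi(b) = \pi(d)$ reduces to transitivity of the block-index order on copies. The most delicate point is that the orientation chosen for $\tau_\alpha$ must remain compatible with the generic rule whenever a chain mixes copies of $c$ with elements projecting elsewhere, but this compatibility is forced by irreflexivity of $<^{\fA_1}$, which rules out every configuration in which a genuine mismatch could occur.
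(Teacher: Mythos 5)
Your proof is correct and rests on the same underlying idea as the paper's: relate any two distinct elements whose ``projections'' into $A_1$ differ by copying the corresponding 2-type from $\fA_1$, and handle the degenerate pairs (two copies of the same non-royal element) by borrowing the 2-type realized by that element and some other element of the same 1-type. Your non-inductive, global formulation via the projection map $\pi$ and the case-split on which of $\pi(a),\pi(b),\pi(d)$ coincide is a cleaner packaging of what the paper does by a nested element-by-element induction with an explicit case analysis on which of the three elements in a putative chain is the newly added one; the constructions differ in inessential details (your template $\tau_\alpha$ depends only on the 1-type, and your linear order on copies is oriented differently), but all the verifications go through as you indicate.
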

\begin{proof}
Enumerate  $B_1$ as $\set{a^1, a^2, a^3, \dots}$. Let the set of indices of this enumeration (which may be finite or infinite) be $K$. For each $k \in K$, let $b^k \in B_1$ be such that $\tp^{\fA}[a^k]= \tp^{\fA}[b^k]$ but $a^k \neq b^k$. This is possible because $A_0$ is the set of kings of $\fA_1$. 

We prove the lemma by induction on $i$. The base case, $i =1$, is vacuous. Observe that the domain of $\fA_1$ is
the disjoint union of $A_0$ and $B_1$, and define the bijection $f_1: B_1 \rightarrow B_1$ to be the
identity map.
For the inductive case, suppose the sets $A_{i-1}$, $B_i$, the structure $\fA_i$, and the bijection $f_i: B_i \rightarrow B_1$ have been defined, such that the domain $A_i$ of $\fA_i$ is
the disjoint union of $A_{i}$ and $B_i$, and 
Statements (i)--(iv) hold whenever $i \geq 2$. 
We proceed to define $A_{i+1}$, $B_{i+1}$, $\fA_{i+1}$ and $f_{i+1}$, and establish the corresponding properties for these objects.

The definition employs a subsidiary induction.
Let $A^1_i = A_i$ and $\fA^1_i = \fA_i$. We shall construct a sequence of
structures $\set{\fA^k_i}_{k \in K}$ over the corresponding sequence of domains $\set{A^k_i}_{k \in K}$.
Assume that $\fA^k_i$ has been defined over domain $A^k_i$ and $k+1 \in K$. Let $a^k_{i+1}$ be a new element (not in $A^k_i$), 
and let $A^{k+1}_i = A^k_i \cup \set{a^k_{i+1}}$. (The indexing reflects the intuition that $a^{k}_{i+1}$ will form the $k$th new element in the structure $\fA_{i+1}$ when this is completed.) We extend $\fA^k_i$ to a structure $\fA^{k+1}_i$ over $A^{k+1}_i$ 
by setting:
\begin{align}
\tp^{\fA^{k+1}_i}[a^k_{i+1}] & = \tp^{\fA^{k}_i}[a^k] \label{eq:innerBuild1}\\
\tp^{\fA^{k+1}_i}[a^k_{i+1},a^k] & = \tp^{\fA^{k}_i}[b^k,a^k]\label{eq:innerBuild2}\\
\tp^{\fA^{k+1}_i}[a^k_{i+1},b] & = \tp^{\fA^{k}_i}[a^k,b] \quad \text{for all $b \in A^k_i \setminus \set{a^k}$}. \label{eq:innerBuild3}
\end{align}
From the fact that $\tp^{\fA^k_i}[a^k] = \tp^{\fA^{k}_i}[b^k]$, these type-assignments involve no clashes. Moreover, since $\fA^1_i \subsetneq \fA^2_i \subsetneq \cdots$, we may define $\fA_{i+1} = \bigcup_{k \in K} \fA^k_i$, taking $\fA_{i+1}$ to have domain $A_{i+1}$. 
Letting $B_{i+1} = \set{a^1_{i+1}, a^2_{i+1}, \dots}$, we see that $A_{i+1} = A_{i} \cup B_{i+1} = A_0 \cup B_0 \cup \cdots \cup B_{i+1}$. Intuitively,
$\fA_{i+1}$ is just like $\fA_{i}$ except that 
we have added an extra copy of the set $B_1$, relating the new elements to each other and to $\fA_i$ as specified by $\fA_1$. Define the bijection 
$f_{i+1}: B_{i+1} \rightarrow B_1$ by setting $f_{i+1}(a^k_{i+1}) = a^k$ for all $k \in K$. 

We need to secure Statements (i)--(iv) of the lemma, but with $i$ replaced by $i+1$.
For Statement~(i), it is immediate by construction that $\fA_i \subseteq \fA_{i+1}$ and from~\eqref{eq:innerBuild2}
and~\eqref{eq:innerBuild3}, via a subsidiary induction on $k$, we see that $\fA_{i+1}$ realizes only those 2-types realized in $\fA_i$, and hence, by inductive hypothesis, in $\fA_1$.
For Statement~(ii), it 
follows from~\eqref{eq:innerBuild3}, again via a subsidiary induction on $k$, that, 
for all $a \in B_{i+1}$ and all $b \in A_1$, if $f_{i+1}(a) \neq b$, then
$\tp^{\fA_{i+1}}[a,b] = \tp^{\fA_1}[f_{i+1}(a),b]$. 
For Statement (iii), we consider separately the cases $j = i+1$ and $j \leq i$.
The former is the simpler: observe that, for $k, \ell \in K$ with $k < \ell$, 
$\tp^{\fA^{\ell}_i}[a^k_{i+1},a^{\ell}_{i+1}] = \tp^{\fA_1}[a^k,a^{\ell}]$. Indeed,
$\tp^{\fA^{\ell}_i}[a^k_{i+1},a^{\ell}_{i+1}] = \tp^{\fA^{\ell-1}_i}[a^k_{i+1},a^\ell] = \tp^{\fA^{k}_i}[a^k_{i+1},a^\ell]
= \tp^{\fA^{k-1}_i}[a^k,a^\ell]= \tp^{\fA_1}[a^k,a^\ell]$. Thus, for distinct $a,b  \in B_{i+1}$, we have
\begin{equation*}
\tp^{\fA_{i+1}}[a,b] = \tp^{\fA_1}[f_{i+1}(a), f_{i+1}(b)].
\end{equation*}
This secures Statement (iii) for the case $j = i+1$.
The case $2 \leq j \leq i$ is illustrated in Fig.~\ref{fig:betterCopyLemma}. 
Writing $a' = f_{i+1}(a)$, by statement (ii) of the inductive hypothesis, if $f_j(b) \neq a'$, then
$\tp^{\fA_i}[a',b] = \tp^{\fA_1}[a',f_j(b)]$, and by construction of $\fA_{i+1}$, 
$\tp^{\fA_{i+1}}[a,b] = \tp^{\fA_{i}}[a',b]$. Thus, if $f_{i+1}(a) \neq f_j(b)$, then 
$\tp^{\fA_{i+1}}[a,b] = \tp^{\fA_{1}}[f_{i+1}(a),f_j(b)]$.

Turning to Statement (iv), it follows from~\eqref{eq:innerBuild1} that $<^{\fA^k_{i+1}}$ is not reflexive. We claim that,
in addition, this relation is transitive.
It evidently suffices to show that if $<^{\fA^k_{i}}$ is a transitive relation, then so is $<^{\fA^{k+1}_{i}}$.
Suppose, therefore that $<^{\fA^k_{i}}$ is transitive, and let $a$, $b$, $c$ be elements of $A^{k+1}_i$ such that
$\fA^{k+1}_i \models a < b$ and $\fA^{k+1}_i \models b < c$. We must show $\fA^{k+1}_i \models a < c$.
If $a, b, c \in A^{k}_i$, this is immediate. Moreover, if $a = b$ or $b = c$ there is nothing to show. 
On the other hand, if $a = c = a^k_{i+1}$ and $b \in A^{k}_i$, then either
$\tp^{\fA^{k+1}_i}[a,b] = \tp^{\fA^{k+1}_i}[c,b] = \tp^{\fA^{k}_i}[a^k,b]$ or 
$\tp^{\fA^{k+1}_i}[a,b] = \tp^{\fA^{k+1}_i}[c,b] = \tp^{\fA^{k}_i}[b^k,b]$, in either case contradicting the supposition
that $\fA^{k+1}_i \models a < b$ and $\fA^{k+1}_i \models b < c$. Moreover, an exactly similar
argument applies if $a = c \in A^k_i$ and $b= a^{k}_{i+1}$. Hence, we may assume that
the elements $a$, $b$ and $c$ are distinct, and that exactly one of them is equal to $a^k_{i+1}$. We therefore have three cases to consider. 

\bigskip

\noindent
Case 1: $a^k_{i+1} = a$. We claim first of all that $c \neq a^k$. For suppose $c = a^k \neq b$. Then 
$\fA^{k+1}_i \models b < c$ implies $\fA^{k}_i \models b < a^k$, whence $\fA^{k}_i \not \models a^k < b$, 
and therefore $\fA^{k+1}_i \not \models a^k_{i+1} < b$,
contradicting the supposition that $\fA^{k+1}_i \models a < b$. If, on the other hand, $b = a^k \neq c$, then $\fA^{k}_i \models b < c$ is the statement $\fA^{k}_i \models a^k < c$, which implies $\fA^{k+1}_i \models a < c$. Thus we may suppose that neither $b$ nor $c$ is equal to $a^k$. But then 
$\fA^{k}_i \models a^k < b$ and $\fA^{k}_i \models b < c$, whence $\fA^{k}_i \models a^k < c$, whence
$\fA^{k+1}_i \models a < c$. 

\bigskip

\noindent
Case 2: $a^k_{i+1} = b$. Suppose first that $a = a^k$. Then $\fA^{k+1}_i \models b < c$ implies 
$\fA^{k}_i \models a^k < c$, and hence $\fA^{k+1}_i \models a^k < c$, which is the required statement
$\fA^{k+1}_i \models a < c$. A similar argument applies if $c = a^k$. 
Thus we may suppose that neither $a$ nor $c$ is equal to $a^k$.
But then 
$\fA^{k}_i \models a < a^k$ and $\fA^{k}_i \models a^k < c$, whence $\fA^{k}_i \models a < c$, whence
$\fA^{k+1}_i \models a < c$.

\bigskip

\noindent
Case 3: $a^k_{i+1} = c$. The same as Case~1, but with the order reversed.

\bigskip
 
\noindent
This completes the induction.
\end{proof}

We now come to the lemma allowing us to transform any $\LtoPO$-formula in standard normal form into one in spread normal form.
We require some additional notation.  Let $ \bar{p} = p_1, \dots, p_n$ be
a sequence of unary predicates. For all $i$ ($0 \leq i < 2^n$), we abbreviate by $\bar{p}\langle i \rangle$ 
the unary, pure Boolean formula $\rho_1 \wedge \cdots \wedge \rho_n$, where, for all $j$
($1 \leq j \leq n$), $\rho_j$ is $p_j(x)$ if the $j$th bit in the $n$-digit binary representation of $i$ is 1, and $\neg p_j(x)$ otherwise. We call
$\bar{p}\langle i \rangle (x)$ the $i$th \textit{labelling formula} (\textit{over} $p_1, \dots, p_n$).
Evidently, if $A = \set{a_0, \dots, a_{M-1}}$ is a set of cardinality $M \leq 2^n$, then we can interpret
the predicates in $p_j$ ($1 \leq j \leq n$) over $A$ so as to ensure that, for all $i$ ($0 \leq i < M$),
$a_i$ satisfies $\bar{p}\langle i \rangle$. 

\begin{lemma}
Let $\phi$ be an $\LtoPO$-formula in standard normal form over a signature $\sigma$, having  multiplicity $m$.
There exists a formula $\phi^*$ in spread normal form over a signature $\sigma^*$ 
with the following properties:
\textup{(i)} $\models \phi^* \rightarrow \phi$; \textup{(ii)} if $\phi$ has a \textup{(}finite\textup{)} model then so has $\phi^*$; and \textup{(iii)} $|\sigma^*|$ is polynomially bounded as a function of  $|\sigma| + m$, and $\phi^*$ has multiplicity $3m$.
\label{lma:asymmetric}
\end{lemma}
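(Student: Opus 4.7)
The plan is to obtain $\phi^*$ by adjoining to $\sigma$ three fresh unary predicates $\lambda_0, \lambda_1, \lambda_2$ to serve as the $3$-colouring, $m$ further predicates $\mu_0, \ldots, \mu_{m-1}$ to serve as witness tags, and $O(|\sigma|)$ additional labelling predicates of the form $\bar{p}\langle i\rangle$, whose purpose is to distinguish the individual royal elements of any intended model. The universal conjunct of $\phi^*$ will be $\forall x\forall y(x = y \vee \eta^*)$, where $\eta^*$ strengthens $\eta$ by insisting that each of $x$ and $y$ carries exactly one $\lambda_k$ and at most one $\mu_h$; each original existential conjunct $\forall x \exists y(x \neq y \wedge \theta_h)$ of $\phi$ will be replaced by its three spread variants $\forall x \exists y(\lambda_k \rightarrow \lambda_{\lfloor k+1 \rfloor}(y) \wedge \mu_h(y) \wedge \theta_h)$, one for each $k \in \{0, 1, 2\}$. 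Part~(iii) is then immediate from counting. For~(i), the mutual exclusion of $\lambda_k$ and $\lambda_{\lfloor k+1 \rfloor}$ forces each spread-witness $y$ to differ from $x$, recovering $\forall x \exists y(x \neq y \wedge \theta_h)$; and $\eta^* \to \eta$ holds by construction.

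For~(ii), I would begin with a finite $\fA \models \phi$ and apply Lemma~\ref{lma:betterCopyLemma} with $i = 3$, obtaining $\fA_3$ whose domain decomposes as $A_0 \cup B_1 \cup B_2 \cup B_3$, where $A_0$ is the set of kings of $\fA$ and each $B_i$ is a disjoint copy of the non-royal elements. I would then colour $B_i$ uniformly with $\lambda_{i-1}$. For each non-royal $a \in B_i$ and each $h$, items~(ii)--(iii) of Lemma~\ref{lma:betterCopyLemma} guarantee that the copy of $\fA$'s $\theta_h$-witness $b_h(a)$ lying in $B_{(i \bmod 3) + 1}$ preserves its $2$-type to $a$; hence that copy is already a $\theta_h$-witness in $\fA_3$, and declaring it to satisfy $\mu_h$ fulfils the spread conjunct for $(\lambda_{i-1}, \mu_h)$ at $a$.

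The principal obstacle is the handling of the royal elements of $A_0$, which Lemma~\ref{lma:betterCopyLemma} cannot duplicate and which must each carry a single colour $\lambda_k$. Since the number of kings of $\fA$ is bounded by $2^{|\sigma|}$, only $O(|\sigma|)$ labelling predicates of the form $\bar{p}\langle i\rangle$ are needed to tag them individually, keeping $|\sigma^*|$ polynomial in $|\sigma| + m$. For each king $a$ of chosen colour $\lambda_{c(a)}$ and each $h$ I must supply a $\theta_h$-witness in $\fA^*$ of colour $\lambda_{\lfloor c(a) + 1 \rfloor}$ carrying $\mu_h$: when the $\fA$-witness $b_h(a)$ is non-royal any of its three copies is available (by Lemma~\ref{lma:betterCopyLemma}(ii)), but when $b_h(a)$ is itself royal the required colour and $\mu_h$-tag of $b_h(a)$ are essentially forced, and establishing the consistency of the resulting constraint system on the ``king graph''---exploiting that $\fA \models \phi$ already fixes a coherent system of witness choices between kings---will be the central verification. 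Once kings are coloured consistently, I would populate $Z$ with one unary pure Boolean formula for each $1$-type realised in the intended $\fA^*$, verify $\forall x \forall y(x = y \vee \eta^*)$ using Lemma~\ref{lma:betterCopyLemma}(i) for the $\sigma$-part of $\eta^*$ together with the explicit colouring for the extra clauses, and confirm each spread conjunct by the witness selection just described.
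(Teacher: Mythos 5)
Your plan runs into two genuine obstacles, and the paper's proof is specifically engineered to avoid both. First, applying Lemma~\ref{lma:betterCopyLemma} with $i=3$ and colouring $B_1, B_2, B_3$ with $\lambda_0, \lambda_1, \lambda_2$ does not leave enough room to assign the $\mu_h$-tags consistently: for a fixed $a \in B_i$, the $\theta_{h_1}$- and $\theta_{h_2}$-witnesses chosen in $\fA$ may well be the same element $c$, so the copy of $c$ in $B_{(i \bmod 3)+1}$ would have to carry both $\mu_{h_1}$ and $\mu_{h_2}$, contradicting mutual exclusion; worse, copies may also need clashing $\mu$-tags on behalf of different originals $a_1, a_2$. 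This is precisely why the paper builds $3m$ copies, re-indexed as $B_{h,k}$, and assigns $\mu_h$ uniformly to $\bigcup_k B_{h,k}$, so that the $\theta_h$-witness for an element of $B_{h',k}$ can be taken from the dedicated block $B_{h,\lfloor k+1\rfloor}$, which carries both the right colour and the right tag by construction.

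Second, the attempt to $3$-colour the kings is doomed: the constraint system you defer to a "central verification" is simply inconsistent in general. If two kings $a_0, a_1$ serve as each other's chosen $\theta_0$-witnesses, then their colours must satisfy $c(a_1) = \lfloor c(a_0)+1\rfloor$ and $c(a_0) = \lfloor c(a_1)+1\rfloor$, i.e.~$c(a_0)=\lfloor c(a_0)+2\rfloor$, which has no solution mod $3$; and even apart from colours, a single royal witness may be demanded by two different $\theta_h$'s, again forcing it to carry two exclusive $\mu$-tags. The paper circumvents this entirely by \emph{not} colouring the kings: $\psi_3$ says that every element either is one of the finitely many explicitly labelled kings (via $\bar{q}\langle i\rangle$, $i<S$) or satisfies some $\lambda_k$; the existence of the kings and their chosen witnesses---the "court"---is asserted directly by the conjunct $\chi$ in $Z$; the diagram formula $\psi_1$ records all $2$-types among court members, securing the kings' own witnesses; and the spread conjunct $\omega$ uses the weakened $\theta^*_h$, which is vacuous whenever $x$ is marked (via $\bar{q}^h\langle i\rangle$) as having a royal $\theta_h$-witness (the latter being guaranteed by $\chi \wedge \psi_2$). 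Without this "court" mechanism and the weakening $\theta^*_h$, part (ii) of the lemma---the construction of a model of $\phi^*$ from one of $\phi$---cannot be completed, and part (i) also breaks for kings, whose $\theta_h$-witnesses are not supplied by any spread conjunct.
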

\begin{proof}
Write $\phi$ as
\begin{equation*}
\forall x \forall y (x = y \vee \eta) \wedge \bigwedge_{h=0}^{m-1} \forall x \exists y (x \neq y \wedge \theta_h).
\end{equation*}
Suppose $\fA_1 \models \phi$, and let $A_0$ be the set of kings of $\fA_1$. By taking $\fA_1$ to interpret two fresh unary predicates if necessary, we may assume $|A_0| \geq 2$. 
Let $B_1 = A_1 \setminus A_0$ and
let $f_1: B_1 \rightarrow B_1$ be the identity map. 
Now take $\set{B_i}_{i \geq 2}$, $\set{f_i}_{i \geq 2}$ and $\set{\fA_i}_{i \geq 2}$
to be the series of sets, bijections and structures guaranteed by Lemma~\ref{lma:betterCopyLemma}. 
Let  
$\fA = \fA_{3m}$; thus, $\fA$ is finite if $\fA_1$ is. Finally,
re-index the sets $B_1, \dots, B_{3m}$ (in any order whatever) as
$B_{h,k}$, where $0 \leq h < m$ and $0 \leq k < 3$; and re-index the $f_1, \dots, f_{3m}$ correspondingly as$f_{h,k}$.

For each
$a \in A_0$ and each $h$ ($0 \leq h < m$) choose some $b \in A_1 \setminus \set{a}$ such that $\fA_1 \models \theta_h[a,b]$, and let $C_0$ consist of the elements of $A_0$ together with all of the (at most $m \cdot |A_0|$) elements thus selected. We refer to $C_0$ as the \textit{court} of $\fA$. Let us enumerate
$A_0$ as $c_0, \dots, c_{S-1}$ and the rest of $C_0$ as $c_{S}, \dots, c_{T-1}$. Thus, $0 \leq S \leq T 
\leq (m+1)2^{|\sigma|}$. Let $t = \lceil \log (T +1)\rceil $, and let $q_1, \dots, q_t$ be new unary predicates. Writing $\bar{q}\langle i \rangle$ for the $i$th labelling formula over $q_1, \dots, q_t$,
let $\fA$ be expanded to a structure $\fA'$ such that, for all $i$ ($0 \leq i < T$), $\fA' \models  \bar{q}\langle i \rangle [c_i]$, and $\fA' \models  \bar{q}\langle T \rangle [a]$ for all $a \in A \setminus C_0$.

Thus, under the interpretation $\fA'$, for $0 \leq i < S$, we may read $\bar{q}\langle i \rangle(x)$ as ``$x$ is the
$i$th king;'' and
for $0 \leq i < T$, we may read $\bar{q}\langle i \rangle$ as ``$x$ is the
$i$th member of the court.'' (Hence, the kings come before the non-royal courtiers in the numbering.) Now let $\chi$ be the formula 
\begin{equation*}
\bigwedge_{i= 0}^{T-1}\exists x . \bar{q}\langle i \rangle (x) 
\end{equation*}
and $\psi_1$ the formula
\begin{equation*}
\bigwedge_{i=0}^{T-2} \bigwedge_{j= i+1}^{T-1} \forall x \forall y (x = y \vee (\bar{q}\langle i(x) \rangle \wedge \bar{q}\langle j \rangle (y)) \rightarrow \tp^\fA[c_i,c_j]),
\end{equation*}
recording the diagram of $\fA$ over $C_0$. Obviously, $\fA' \models \chi \wedge \psi_1$. 
Conversely, in any model of $\chi \wedge \psi_1$, we see that for all $h$ ($0 \leq h < m$) and
for any element $a$ satisfying
$\bar{p}\langle i \rangle$ for some $i$ ($0 \leq i <S$), there exists 
$b \neq a$ such that the pair $\langle a, b \rangle$ satisfies $\theta_h$. 

Let $s = \lceil \log (S+1) \rceil$. For each $h$ ($0 \leq h < m$), let 
$q^h_1, \dots, q^h_s$ be new unary predicates, and write $\bar{q}^h\langle i \rangle (x)$ for the $i$th labelling formula over these predicates. Expand $\fA'$ to a model $\fA''$ as follows. For each
$a \in A \setminus A_0$, and each $h$ ($0 \leq h < m$), if there exists any $b \in A_0$ such that $\fA \models \theta_h[a,b]$, choose some such element, say, $c_i$ (with $i$ depending on $a$ and $h$), and interpret the
predicates $q^h_1, \dots, q^h_s$ so that
$\fA'' \models  \bar{q}^h\langle i \rangle [a]$; otherwise, interpret the
predicates $q^h_1, \dots, q^h_s$ so that
$\fA'' \models  \bar{q}^h\langle S \rangle [a]$.
Thus, under the interpretation $\fA''$, for $0 \leq i < S$, we may read $\bar{q}^h\langle i \rangle(x)$ as ``$x$ is an element such that the
$i$th king provides a $\theta_h$-witness for $x$.'' Now let $\psi_2$ be the formula 
\begin{equation*}
\bigwedge_{i=0}^{S-1} \bigwedge_{h=0}^{m-1} \forall x \forall y (x = y \vee 
(\bar{q}^h\langle i \rangle(x) \wedge \bar{q}\langle i \rangle (y) \rightarrow \theta_h)),
\end{equation*}
recording this fact. Obviously, $\fA'' \models \psi_2$.
Conversely, in any model of $\chi \wedge \psi_2$, we see that
for all $h$ ($0 \leq h < m$), and all elements
$a$ satisfying $\bar{q}^h\langle i \rangle (x)$ for some $i$ ($0 \leq i < S$),  
there exists
$b \neq a$ such that the pair $\langle a, b \rangle$ satisfies $\theta_h$. 

Finally, let $o_0, o_1, o_2$ and $p_0, \dots, p_{m-1}$ be new unary predicates, and
expand $\fA''$ to a structure $\fA'''$ by setting
\begin{align*}
(o_k)^\fA = & \bigcup_{h=1}^m B_{h,k}  \qquad \qquad  \text{for all $k$ ($0 \leq k <3$)}\\
(p_h)^\fA = & \bigcup_{k=0}^3 B_{h,k} \qquad \qquad  \text{for all $h$ ($1 \leq h \leq m$)}.
\end{align*}
Thus, we may read $o_k(x)$ as ``$x$ is in $B_{h,k}$ for some $h$'', and 
$p_h(x)$ as ``$x$ is in $B_{h,k}$ for some $k$''.
Let $\lambda_0 = o_0(x)$, $\lambda_1 = o_1(x) \wedge \neg o_0(x)$, $\lambda_2 = o_2(x) \wedge
\neg o_0(x) \wedge \neg o_1(x)$. Thus, $\lambda_0, \lambda_1, \lambda_2$ are mutually exclusive pure unary formulas. Similarly, let
$\mu_h(x) = q_h(x) \wedge \bigwedge_{h' = 0}^{h-1} \neg q_{h'}(x)$ for all $h$ ($0 \leq h < m$). Thus,
$\mu_0, \dots, \mu_{m-1}$ are also mutually exclusive unary pure Boolean formulas.

Now let $\psi_3$ be the formula 
\begin{equation*}
\forall x \forall y \left(x = y \vee \bigvee_{i=0}^{S-1} \bar{q}\langle i \rangle (x) \vee \bigvee_{k=0}^2 \lambda_k \right),
\end{equation*}
which, we note, is equivalent (over structures with cardinality at least 2) to
\begin{equation*}
\forall x \left(\bigvee_{i=0}^{S-1} \bar{q}\langle i \rangle (x) \vee \bigvee_{k=0}^2 \lambda_k \right).
\end{equation*}
It is immediate by construction that $\fA''' \models \psi_3$, since every $a \in A_0$ satisfies
$\bar{q}\langle i \rangle(x)$ for some $i$ ($0 \leq i < S$), and  every $a \in A \setminus A_0$ lies
in one of the sets $B_{h,k}$. In addition, let $\theta^*_h(x,y)$ be the formula
\begin{equation*}
\left(\bigwedge_{i=0}^{S-1} \neg \bar{q}^h\langle i \rangle (x)\right) \rightarrow \theta_h,
\end{equation*}
for all $h$ ($0 \leq h < m$), and let $\omega$ be the formula
\begin{equation*}
\bigwedge_{h=0}^{m-1} \bigwedge_{k=0}^2 
\forall x (\lambda_k \rightarrow
\exists y(\lambda_{\modt{k+1}}(y) \wedge  \mu_h(y) \wedge \theta^*_h)).
\end{equation*}
We claim that $\fA''' \models \omega$. To see this, fix $0 \leq h < m$ and $0 \leq k <3$, and
suppose $a \in A$ is such that $\fA''' \models \lambda_k[a]$. If $\fA''' \models \bar{q}^h\langle i \rangle[a]$ for some $i$ ($0 \leq i < S$), then we may pick any element $ b \in B_{h,\lfloor k+1 \rfloor}$ as a witness,
since $\fA''' \models \theta^*_h[a,b]$ holds by failure of the antecedent.
Otherwise, by the construction of $\fA'''$,
$a \in B_{h',k}$ for some $h'$ ($1 \leq h' \leq m$) and, moreover, there is no $b \in A_0$ for which $\fA \models \theta_h[a,b]$.
Now let $a' = f_{h' ,k}(a)$. Since $\tp^\fA[a,b] = \tp^\fA[a',b]$ for all $b \in A_0$, it follows that 
there is no $b \in A_0$ for which $\fA \models \theta_h[a',b]$. Since $\fA_1 \models \phi$, therefore,
let $b' \in B_1$ be such that $\fA \models \theta_h[a',b']$ and let $b \in B_{h,\modt{k+1}}$ be such that 
$f_{h,\modt{k+1}}(b) = b'$. Since $\tp^\fA[a,b] = \tp^\fA[a',b']$, we have $\fA \models \theta_h[a,b]$.
Moreover, by the construction of $\fA'''$, $\fA''' \models \lambda_{\modt{k+1}}[b]$ and $\fA''' \models \mu_h[b]$. Therefore,
$\fA''' \models \omega$ as claimed.
Conversely, in any model of $\omega$, we see that
for all $h$ ($0 \leq h < m$) and all elements $a$
satisfying $\lambda_k(x)$ but 
not satisfying $\bar{q}^h\langle i \rangle (x)$ for any $i$ ($0 \leq i < S$), there exists 
some $b \neq a$ such that the pair $\langle a, b \rangle$ satisfies $\theta_h$. 

Finally, let $\phi^*$ be the formula 
\begin{equation*}
\chi \wedge 
\forall x (x=y \vee \eta) \wedge \psi_1 \wedge \psi_2 \wedge \psi_3 \wedge \omega,
\end{equation*}
and let $\sigma^*$ be the signature of $\psi^*$. 
Thus, $\phi^*$ is in spread form, with multiplicity $3m$.
Moreover, the only new predicates in $\sigma^*$ are
$o_0, o_1, o_2$, $p_0, \dots, p_{m-1}$, $q_1,\dots q_t$, and the $q^h_1, \dots q^h_s$ ($0 \leq h < m$), 
so that
$|\sigma^*|$ is bounded by a polynomial
function of $|\sigma| +m$. Moreover, we have shown that, if $\fA \models \phi$, then 
$\fA''' \models \phi^*$, 
and, moreover, $\fA'''$ is finite if $\fA$ is. It remains to show that $\models 
\phi^* \rightarrow \phi$. So suppose $\fB \models \phi^*$, $a \in B$ and $0 \leq h < m$. 
As we have observed, if $\fB \models 
\bar{q}\langle i \rangle[a]$ for some $i$ ($0 \leq i < S$), then 
$\chi \wedge \psi_1$ guarantees the existence of some 
$b \in B \setminus \set{a}$ such that $\fB \models \theta_h[a,b]$. Otherwise, by $\psi_3$,
$\fB \models \lambda_k[a]$ for some $k$ ($0 \leq k < 3$). If, now $\fB \models 
\bar{q}^h\langle i \rangle[a]$ for some $i$ ($0 \leq i < S$),
$\chi \wedge \psi_2$ guarantees the existence of some $b \in B \setminus \set{a}$ such that $\fB \models \theta_h[a,b]$.
If, on the other hand,
$\fB \not \models \bar{q}^h\langle i \rangle[a]$ for any $i$ ($0 \leq i < S$),
$\omega$ guarantees the existence of some $b \in B \setminus \set{a}$ such that $\fA_1 \models \theta_h[a,b]$. 
Thus, $\fB \models \phi$.
\end{proof}

In the sequel, we employ terminology and techniques familiar from the area of automated theorem proving. In
particular, a {\em a clause} is a disjunction (possibly empty) of literals. The empty disjunction is written as $\bot$, and is taken to denote the falsum.
We use (possibly decorated) lower-case Greek letters $\gamma$, $\delta$, $\epsilon$ to range over clauses, and 
upper-case Greek letters $\Gamma$, $\Delta$ to range over finite sets of clauses. 
If $\Gamma$ is
a finite set of clauses, then we denote by $\Gamma^{-1}$ the result of transposing the variables $x$ and $y$ in $\Gamma$.
To avoid notational clutter, we frequently identify a finite set of clauses with its conjunction, writing, for example, $\Gamma$
when we actually mean $\bigwedge \Gamma$. It is a familiar fact that,
for any quantifier-free formula $\phi$ over relational signature, there exists a collection of clauses $\Gamma$ such that $\models \phi \leftrightarrow \Gamma$ (so-called conjunctive normal form). In general $|\Gamma|$ will be exponential in $\sizeOf{\phi}$; however, $\Gamma$ and $\phi$ employ the same signature. 

Let $\rho$ be an ordinary atomic formula featuring two distinct variables--i.e.~a formula of either of the forms $r(x,y)$ or $r(y,x)$, where $r$ is an ordinary binary predicate, and let $\gamma'$, $\delta'$ be clauses. 
Then, $\gamma = \rho \vee \gamma'$ and $\delta = \neg \rho \vee \delta'$ are also clauses,
as indeed is $\gamma' \vee \delta'$. 
In that case, we call $\gamma' \vee \delta'$ an {\em ordinary binary resolvent} of $\gamma$ and $\delta$, and we say that $\gamma' \vee \delta'$ is {\em obtained by ordinary binary resolution} from $\gamma$ and $\delta$ \textit{on} $\rho$, or simply: $\gamma$ and $\delta$ {\em resolve to form} $\gamma' \vee \delta'$.
Note that no unification of variables occurs in ordinary binary resolution: in fact, ordinary binary resolution
is just the familiar rule of propositional resolution restricted to the case where the resolved-on atom is of the form
$r(x,y)$  or $r(y, x)$, with $r$ an ordinary binary predicate. Observe that: 
(i) if $\gamma$ and $\delta$ resolve to form $\epsilon$, then $\models \gamma \wedge \delta \rightarrow \epsilon$;
(ii) the ordinary binary resolvent of two clauses may or may not
involve ordinary binary predicates; 
(iii) if the clause $\gamma$ involves no ordinary binary predicates, then it cannot undergo ordinary binary resolution at all.

If $\Gamma$ is a set of clauses, denote by $[\Gamma]^*$ the smallest set of clauses including $\Gamma$ and closed under ordinary binary resolution, in the sense that, if $\gamma, \delta \in [\Gamma]^*$, and $\epsilon$ is an ordinary binary resolvent of $\gamma$ and $\delta$, then $\epsilon \in [\Gamma]^*$. We further denote by $[\Gamma]^\circ$ the result of deleting
from $[\Gamma]^*$ any clause involving an atom $r(x,y)$ or $r(y,x)$, where $r$  is an ordinary binary predicate. 
Notice, incidentally, that $[\Gamma]^\circ$ may feature ordinary binary predicates: however, all occurrences of these must be in atoms of the forms $r(x,x)$ or $r(y,y)$.  

This last observation prompts the introduction of some additional notation and terminology that will be used in the next lemma. Call a literal \textit{diagonal} if it is of the form $\pm r(u,u)$, where $r$ is a binary predicate and $u$ a variable. Let $\sigma$ be a relational signature and $\sigma' \subseteq \sigma$ such that $\sigma \setminus \sigma'$ consists only of binary predicates. A \textit{semi-diagonal 2-type over} $(\sigma, \sigma')$ is a maximal consistent set of literals over $\sigma$ each one of which is either a literal over $\sigma'$ or a diagonal literal. If $\fA$ is a 
structure interpreting $\sigma$ and $a$, $b$ distinct elements of the domain $\fA$, we denote by $\tp^\fA_{/\sigma'}[a,b]$ the unique semi-diagonal 2-type over $(\sigma, \sigma')$ satisfied by the pair $\langle a,b \rangle$. Thus, 
$\tp^\fA_{/\sigma'}[a,b]$ is just like $\tp^\fA[a,b]$, except that it is silent on the question of which binary relations
in $\sigma \setminus \sigma'$ are satisfied by the pairs  $\langle a,b \rangle$ and $\langle b,a \rangle$.

The following lemma, which
will form the core of our reduction of\linebreak
$\FinSat(\LtoPO)$ to $\FinSat(\LtoPOu)$, is, in effect, nothing more than the familiar completeness theorem for (ordered) propositional resolution.
\begin{lemma}
	Let $\Gamma$ be a set of clauses, over a signature $\sigma$, let $\sigma^-$ be the signature obtained by
	removing all the ordinary binary predicates from $\sigma$, and let $\tau^-$ be a semi-diagonal 2-type over $(\sigma, \sigma^-)$.  If $\models \tau^- \rightarrow [\Gamma]^\circ$, then there exists a 2-type $\tau$ over the signature $\sigma$ 
	such that $\models \tau \rightarrow \tau^-$ and $\models \tau \rightarrow \Gamma$.
	\label{lma:resolution}
\end{lemma}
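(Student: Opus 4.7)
My plan is to reduce the statement to the completeness of propositional resolution. The key observation is that the 2-types $\tau$ over $\sigma$ extending $\tau^-$ correspond bijectively to Boolean assignments to the set $P = \{r(x,y), r(y,x) : r \in \sigma \setminus \sigma^-\}$ of ordinary binary atoms not already fixed by $\tau^-$: the navigational, unary and diagonal literals determined by $\tau^-$ impose no constraint on the $P$-atoms, so each may be flipped independently without inconsistency.

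For each clause $\gamma \in [\Gamma]^*$, I would split its literals into a part $\gamma_-$ whose truth is determined by $\tau^-$ (literals over $\sigma^-$, together with diagonal literals) and a part $\gamma_+$ consisting of $P$-literals. Define the \emph{reduct} $\gamma^*$ of $\gamma$ to be $\top$ if some literal in $\gamma_-$ is true under $\tau^-$, and otherwise to be $\gamma_+$ (with the false $\gamma_-$-literals simply discarded). Let $\Delta = \{\gamma^* : \gamma \in \Gamma,\ \gamma^* \neq \top\}$, viewed as a set of propositional clauses over the variables $P$. If $\Delta$ is satisfiable, any satisfying assignment, combined with $\tau^-$, yields the desired 2-type $\tau$; so suppose, for contradiction, that $\Delta$ is unsatisfiable. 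By the completeness of propositional resolution, there is a resolution refutation $\delta_1, \ldots, \delta_N = \bot$ of $\Delta$, each step resolving on some atom of $P$.

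The heart of the proof is to lift this refutation back into $[\Gamma]^*$. I would show by induction on $i$ that each $\delta_i$ can be lifted to a clause $\tilde{\delta}_i \in [\Gamma]^*$ whose $P$-part is exactly $\delta_i$ and whose $\gamma_-$-part consists solely of literals false under $\tau^-$. The base case follows from the definition of $\Delta$. For the inductive step, a propositional resolution of $\delta_j$ and $\delta_k$ on an atom $\rho \in P$ lifts to an \emph{ordinary binary resolution} of $\tilde{\delta}_j$ and $\tilde{\delta}_k$ on $\rho$ — this is precisely the closure operation generating $[\Gamma]^*$ — and the resulting $\tilde{\delta}_i$ has $P$-part equal to $\delta_j \setminus \{\rho\} \cup \delta_k \setminus \{\neg\rho\} = \delta_i$, while its $\gamma_-$-part is the union of those of $\tilde{\delta}_j$ and $\tilde{\delta}_k$, hence still entirely falsified by $\tau^-$. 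At $i = N$, $\delta_N = \bot$, so $\tilde{\delta}_N$ has empty $P$-part; this places $\tilde{\delta}_N$ in $[\Gamma]^\circ$, yet $\tau^- \not\models \tilde{\delta}_N$, contradicting the hypothesis $\tau^- \models [\Gamma]^\circ$.

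The main obstacle is ensuring the lifting is faithful at every step, i.e.\ that no extraneous $P$-literals creep in and that each propositional resolution genuinely corresponds to a legal ordinary binary resolution. Both facts hinge on the same bookkeeping: because the inductive invariant pins the $P$-parts of the lifted clauses to match the propositional clauses exactly, ordinary binary resolution on $\rho$ is applicable whenever propositional resolution on $\rho$ is, and it produces exactly the expected resolvent. Once this invariant is verified, the propositional refutation transports cleanly upward and the contradiction is immediate.
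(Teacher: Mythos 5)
Your proof is correct and takes a route that is related to, but structured differently from, the paper's. The paper argues directly: it enumerates the ordinary binary atoms $\rho_1, \dots, \rho_n$, extends $\tau^-$ one atom at a time, and shows (via the key claim that if both $\tau' \wedge \rho_{i+1}$ and $\tau' \wedge \neg\rho_{i+1}$ violate $[\Gamma]^*$, then so does $\tau'$, because the corresponding offending clauses can be resolved on $\rho_{i+1}$) that a satisfying level-$n$ extension exists. This is in effect an inline semantic-tree argument which re-proves the needed fragment of propositional resolution completeness. You instead proceed by contradiction: you project to a propositional clause set $\Delta$ over the $P$-atoms, invoke resolution completeness as a black box to obtain a refutation if no completion of $\tau^-$ satisfies $\Gamma$, and lift that refutation step by step into $[\Gamma]^*$, maintaining the invariant that the $P$-part of the lifted clause equals the propositional clause and the remaining literals are all falsified by $\tau^-$; the lifted $\bot$ then lands in $[\Gamma]^\circ$ and is falsified by $\tau^-$, contradicting the hypothesis. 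Both are sound. The paper's version is more self-contained and avoids the bookkeeping of the lifting invariant; yours is more modular in that it factors cleanly through the classical completeness theorem, which makes the role of $[\Gamma]^\circ$ as the "propositional residue" of $[\Gamma]^*$ especially transparent. One small point you should make explicit in a final write-up: clauses need to be treated as sets (or resolution must delete all occurrences of the resolved-on literal) for your invariant that the $P$-part is \emph{exactly} $\delta_i$ to hold under the paper's definition of ordinary binary resolution, which only syntactically peels off a single disjunct.
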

\begin{proof}
	Enumerate the formulas of the forms $r(x,y)$ and $r(y,x)$, where $r$ is an ordinary binary predicate in $\sigma$, as 
	$\rho_1, \dots, \rho_n$. 
	Define a \textit{level-$i$ extension}
	of $\tau^-$ inductively as follows: (i) $\tau^-$ is a level-0 extension of $\tau^-$; (ii) if $\tau'$ is a level-$i$ extension of $\tau^-$ ($0 \leq i < n$), then $\tau' \wedge \rho_{i+1}$  and $\tau' \wedge \neg \rho_{i+1}$ are 
	level-$(i+1)$ extensions of $\tau^-$. Thus, the level-$n$ extensions of $\tau$ are exactly 
	the 2-types over $\sigma$ entailing
	$\tau^-$. 
	If $\tau'$ is a level-$i$ extension of $\tau^-$ ($0 \leq i < n$), we say that $\tau'$ {\em violates} a clause $\delta$
	if, for every literal in $\delta$, the opposite literal is in $\tau'$; we say that $\tau'$ {\em violates} a set of clauses
	$\Delta$ if $\tau'$ violates some $\delta \in \Delta$.
	Suppose now that $\tau'$ is a level-$i$ extension of $\tau^-$ ($0 \leq i < n$). We claim that, if both 
	$\tau' \wedge \rho_{i+1}$  and $\tau' \wedge \neg \rho_{i+1}$ violate $[\Gamma]^*$, then so does $\tau^-$. For otherwise, there must be a clause $\neg \rho_{i+1} \vee \gamma' \in [\Gamma]^*$ violated by $\tau' \wedge \rho_{i+1}$ and a clause $\rho_{i+1} \vee \delta' \in [\Gamma]^*$ violated by $\tau' \wedge \neg \rho_{i+1}$. But in that case $\tau'$ violates the ordinary binary resolvent $\gamma' \vee \delta'$,  contradicting the
	supposition that $\tau'$ does not violate $[\Gamma]^*$. This proves the claim. Now, since $\tau^-$ 
	by hypothesis entails $[\Gamma]^\circ$, it certainly does not violate $[\Gamma]^\circ$. Moreover, since it involves no atoms of the form $r(x,y)$ or $r(y,x)$ for $r$ an ordinary binary predicate, 
	$\tau^-$ does not violate $[\Gamma]^*$ either. By the above claim, then,
	there must be at least one level-$n$ extension $\tau$ of $\tau^-$ which does not
	violate $[\Gamma]^* \supseteq \Gamma$. Since $\tau$ is a 2-type, this proves the lemma.
\end{proof}

The next lemma allows us to eliminate atoms of the forms $r(x,y)$ and $r(y,x)$, where $r$ is an
ordinary binary predicate, from spread-form $\LtoPO$-formulas. 
Recall that, if $\Gamma$ is
a finite set of clauses, $\Gamma^{-1}$ denotes the result of transposing the variables $x$ and $y$ in $\Gamma$.
\begin{lemma}
Let $\phi$ be the spread-form $\LtoPO$-formula
\begin{equation*}
\begin{split}
\bigwedge_{\zeta \in Z} \exists x . \zeta \wedge &
 \forall x \forall y (x = y \vee \Gamma) \wedge\\ 
&       \bigwedge_{k=0}^2 \bigwedge_{h=0}^{m-1} \forall x \exists y (\lambda_k \rightarrow (\lambda_{\lfloor k+1 \rfloor}(y) \wedge \mu_h(y) \wedge \Delta_h)).
\end{split}
\end{equation*}
Here, $Z$ is a set of pure unary formulas;
$\lambda_0, \lambda_1, 
\lambda_2$ are mutually exclusive pure unary formulas;
$\mu_0, \dots, \mu_{m-1}$ are mutually exclusive pure unary formulas \textup{(}with $m \geq 1$\textup{)};
and $\Gamma, \Delta_1, \dots, \Delta_m$ are sets of clauses. Let $\phi^\circ$ be the corresponding formula
\begin{equation*}
\begin{split}
\bigwedge_{\zeta \in Z} \exists x . \zeta \wedge &
 \forall x \forall y (x = y \vee [\Gamma \cup \Gamma^{-1}]^\circ) \wedge\\ 
&       \bigwedge_{k=0}^2 \bigwedge_{h=0}^{m-1} \forall x \exists y (\lambda_k \rightarrow (\lambda_{\lfloor k+1 \rfloor}(y) \wedge \mu_h(y) \wedge [\Delta_h \cup \Gamma \cup \Gamma^{-1}]^\circ)). 
\end{split}
\end{equation*}
Then $\models \phi \rightarrow \phi^\circ$, and, moreover, if $\phi^\circ$ has a model over some domain $A$, then so has $\phi$.
\label{lma:spread}
\end{lemma}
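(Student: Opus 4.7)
The forward implication $\models \phi \rightarrow \phi^\circ$ is a soundness argument. Any model $\fA$ of $\phi$ satisfies $\forall x \forall y (x = y \vee \Gamma^{-1})$ simply by renaming bound variables, hence $\forall x \forall y (x = y \vee (\Gamma \cup \Gamma^{-1}))$; soundness of ordinary binary resolution then yields $\forall x \forall y (x = y \vee [\Gamma \cup \Gamma^{-1}]^\circ)$. For the witness conjuncts, suppose $\fA \models \lambda_k[a]$ and $b$ is the chosen $\Delta_h$-witness. Mutual exclusivity of $\lambda_k$ and $\lambda_{\lfloor k+1 \rfloor}$ gives $a \neq b$, so the universal conjunct forces $\fA \models \Gamma[a,b]$ and $\fA \models \Gamma^{-1}[a,b]$, and a further appeal to soundness yields $\fA \models [\Delta_h \cup \Gamma \cup \Gamma^{-1}]^\circ[a,b]$.

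For the converse, suppose $\fA \models \phi^\circ$ over domain $A$. The plan is to build an expansion $\fB$ on the same domain by redefining, for each ordinary binary predicate $r$, the values $r^\fB(a,b)$ with $a \neq b$; unary predicates, navigational predicates, and diagonal $r$-values are copied from $\fA$. For each $a$ satisfying some $\lambda_k$ and each $h$, select $b_{a,h}$ such that $\fA \models \lambda_{\lfloor k+1 \rfloor}[b_{a,h}] \wedge \mu_h[b_{a,h}] \wedge [\Delta_h \cup \Gamma \cup \Gamma^{-1}]^\circ[a,b_{a,h}]$, and call any such triple $(a,b_{a,h},h)$ a \emph{witness triple}. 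The crucial feature of spread normal form is that no unordered pair $\{a,b\}$ of distinct elements gives rise to more than one witness triple: if $(a,b,h)$ is a witness triple then $a$ is in $\lambda_k$ and $b$ in $\lambda_{\lfloor k+1 \rfloor}$, so the witnesses of $b$ lie in $\lambda_{\lfloor k+2 \rfloor}$, which is disjoint from $\lambda_k$; meanwhile, $h$ is uniquely pinned down by $\mu_h(b)$. Moreover, $a = b_{a,h}$ is ruled out by the same mutual exclusivity.

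Now, for every unordered pair $\{a,b\}$ of distinct elements, apply Lemma~\ref{lma:resolution} to the semi-diagonal 2-type $\tau^- = \tp^\fA_{/\sigma^-}[a,b]$: if $(a,b,h)$ is a witness triple for some $h$, take $\Gamma' = \Delta_h \cup \Gamma \cup \Gamma^{-1}$; otherwise take $\Gamma' = \Gamma \cup \Gamma^{-1}$. Since clauses in $[\Gamma']^\circ$ mention ordinary binary predicates only on the diagonal, their truth under $\fA$ at $(a,b)$ is already determined by $\tau^-$, so $\fA \models \phi^\circ$ gives $\tau^- \models [\Gamma']^\circ$; Lemma~\ref{lma:resolution} then supplies a 2-type $\tau_{a,b}$ extending $\tau^-$ and entailing $\Gamma'$. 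Use $\tau_{a,b}$ to set $r^\fB(a,b)$ and $r^\fB(b,a)$ for every ordinary binary $r$; diagonal values are consistent across pairs because every $\tau_{a,b}$ agrees with $\fA$ on them. The resulting $\fB$ satisfies $\phi$: each $\exists x . \zeta$ is purely unary and inherited from $\fA$; the universal conjunct holds because $\tau_{a,b} \models \Gamma$ on every distinct pair; and each witness axiom is discharged by $b_{a,h}$ since $\tau_{a,b_{a,h}} \models \Delta_h$. The main obstacle, handled by the spread-uniqueness observation above, is preventing a single pair from being saddled with two incompatible $\Delta_h$-requirements at once; without the disjointness of $\lambda_k$ and $\lambda_{\lfloor k+2 \rfloor}$ and the mutual exclusivity of the $\mu_h$, this collision would not be avoidable.
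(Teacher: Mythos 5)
Your proof is correct and follows essentially the same strategy as the paper: prove the forward implication by soundness of ordinary binary resolution, then, for the converse, process each pair of distinct elements by applying Lemma~\ref{lma:resolution} to the appropriate clause set (augmenting $\Gamma \cup \Gamma^{-1}$ with $\Delta_h$ on witness pairs), relying on the spread-form disjointness of the $\lambda_k$'s and $\mu_h$'s to ensure that no pair is assigned conflicting 2-type requirements. The paper phrases the non-clashing argument procedurally (fixing $a$, iterating over $h$, then over $a$), while you package it as a ``witness triple'' uniqueness claim stated once upfront, but the underlying reasoning is the same.
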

\begin{proof}
It is immediate that $\models \phi \rightarrow \phi^\circ$, by the validity of resolution. 
Now suppose
$\fA$ is a structure such that $\fA \models \phi^\circ$; we define a structure $\fA'$ over the same domain as $\fA$, such that $\fA \models \phi$. 
Fix $a \in A$ and $h$ ($0 \leq h < m$). If $a$ satisfies one (hence: exactly one) of the formulas $\lambda_0$,
$\lambda_1$, $\lambda_2$, there exists $b$ such that $\fA \models \lambda_{\modt{k+1}}[b]$,
$\fA \models \mu_h[b]$ and
$\fA \models [\Delta_h \cup \Gamma \cup \Gamma^{-1}]^\circ[a,b]$. 
Let $\tau^- = \tp_{/\sigma'}^\fA[a,b]$
Since $[\Delta_h \cup \Gamma \cup \Gamma^{-1}]^\circ$ involves
no atoms of the forms $r(x,y)$ or $r(y,x)$, where $r$ is an ordinary binary predicate, 
we have $\models \tau^-  \rightarrow [\Delta_h \cup \Gamma \cup \Gamma^{-1}]^\circ$, and therefore, by Lemma~\ref{lma:resolution}, there is a 2-type $\tau$ such that $\models \tau \rightarrow \tau^-$ and $\models \tau \rightarrow
(\Delta_h \cup \Gamma \cup \Gamma^{-1})$. So set the interpretations of the ordinary binary predicates of $\phi$ such that
$\fA' \models \tau[a,b]$. Keeping $a$ fixed, carry out the above procedure for all values of $h$, thus
choosing $m$ witnesses for $a$. Since, in each case, the
chosen element $b$ satisfies $\mu_h$, these witnesses are all distinct, and so no clashes arise when setting 2-types in $\fA'$. Now 
carry out the above procedure for all values of $a$. If $a$ satisfies $\lambda_k$, then any $b$ chosen as a witness for $a$ satisfies $\lambda_{\modt{k+1}}$, so that $a$ could not previously have been chosen as a witness for $b$.  
Again, therefore, no clashes arise when setting 2-types in $\fA'$. At this stage, although $\fA'$ is not completely defined, we know that, however the construction of $\fA'$ is completed, for all $a \in A$ 
and $k$ ($0 \leq k < 3$) such that $\fA' \models \lambda_k[a]$,  and all $h$ ($0 \leq h < m$),
there will exist $b \in A \setminus \set{a}$ such that $\fA' \models (\Delta_h \wedge \Gamma \wedge \Gamma^{-1})[a,b]$.
Finally, suppose $a$, $b$ are distinct elements of $A$ 
for which $\tp^{\fA'}[a,b]$ has not yet been defined, and let $\tau = \tp_{/\sigma'}^{\fA}[a,b]$.
Since $\fA \models \psi$, it follows that $\models \tau^- \rightarrow [\Gamma \cup \Gamma^{-1}]^\circ$, and
hence by
Lemma~\ref{lma:resolution} that there exists a 2-type $\tau$ such that $\models \tau \rightarrow \tau^-$ and $\models \tau \rightarrow
(\Gamma \cup \Gamma^{-1})$. Again, set the interpretations of the ordinary binary predicates so that
$\fA' \models \tau[a,b]$; and repeat the process until $\fA'$ is completely defined. At the end of this process,
for any distinct $a$, $b$ of $A$, $\fA \models \psi$, $\tau \models \Gamma[a,b]$. Thus, $\fA' \models \phi$,
as required.
\end{proof}

\begin{theorem}
Let $\phi$ be an $\LtoPO$-formula in standard normal form with multiplicity $m$ over a signature $\sigma$. If $\phi$ has a finite model, then it has a model of size bounded by a doubly exponential function of $|\sigma| +m$.
Hence, any finitely satisfiable $\LtoPO$-formula $\phi$ has a model 
of size bounded by a doubly exponential function of
$\sizeOf{\phi}$, and so $\FinSat(\LtoPO)$ is in $\TwoNExpTime$.
\label{theo:mainPO}
\end{theorem}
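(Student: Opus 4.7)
My plan is to chain together Lemmas~\ref{lma:nfL2}, \ref{lma:asymmetric}, and \ref{lma:spread}, and then invoke Theorem~\ref{theo:mainPOUnary}, tracking the signature size and multiplicity carefully at each step so that the final bound remains doubly exponential in $|\sigma|+m$. The key observation is that each transformation only increases these two parameters polynomially, even when the sheer formula-size grows.

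First I would use Lemma~\ref{lma:nfL2} to reduce to the case where $\phi$ is a standard normal-form $\LtoPO$-formula over a signature $\sigma$ with multiplicity $m$, noting that this step is polynomial in all relevant parameters. Next I would apply Lemma~\ref{lma:asymmetric} to obtain an equi-(finitely-)satisfiable spread normal-form formula $\phi^*$ over a signature $\sigma^*$ with $|\sigma^*|$ polynomial in $|\sigma|+m$ and with multiplicity exactly $3m$. Because spread normal form is (after inserting the harmless conjuncts $x\neq y$) a special case of weak normal form, $\phi^*$ is already of the shape to which Lemma~\ref{lma:spread} applies.

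The third step is to apply Lemma~\ref{lma:spread} to $\phi^*$, producing the formula $\phi^\circ$ obtained by replacing each clause set $\Gamma$, $\Delta_h$ by its closure under ordinary binary resolution with the negated ordinary-binary literals projected out. By construction $\phi^\circ$ contains no atoms of the forms $r(x,y)$ or $r(y,x)$ for ordinary binary $r$; the only surviving appearances of ordinary binary predicates are diagonal atoms $r(x,x)$, $r(y,y)$, which are essentially unary. I would therefore introduce, for each such $r$, a fresh unary abbreviation $p_r$ (adding at most $|\sigma^*|$ new predicates), rewrite $\phi^\circ$ accordingly, and regard the result as an $\LtoPOu$-formula in weak normal form. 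The multiplicity is still $3m$, and the signature size is still polynomial in $|\sigma|+m$. By Lemma~\ref{lma:spread}, $\phi^\circ$ has a (finite) model on domain $A$ iff $\phi^*$ does, and hence iff $\phi$ does.

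Finally, I would apply Theorem~\ref{theo:mainPOUnary} to $\phi^\circ$, obtaining a model of size doubly exponential in $|\sigma^*|+3m$, and hence doubly exponential in $|\sigma|+m$. Composing with the reduction to standard normal form yields the stated bound in terms of $\sizeOf{\phi}$, and then guessing such a model and verifying it directly gives $\FinSat(\LtoPO)\in\TwoNExpTime$. The main thing to watch is not a genuine obstacle but a bookkeeping subtlety: the resolution closure in Lemma~\ref{lma:spread} can blow up the number of clauses exponentially, so one must be careful that this blowup affects only the formula length passed to the verification stage, not the signature size or multiplicity that feed into the doubly exponential bound from Theorem~\ref{theo:mainPOUnary}. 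Since Theorem~\ref{theo:mainPOUnary} only sees the signature and multiplicity, this is automatic, and the rest of the argument is a straightforward composition.
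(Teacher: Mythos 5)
Your proposal follows the paper's proof essentially exactly: chain Lemma~\ref{lma:asymmetric} (spread normal form), Lemma~\ref{lma:spread} (eliminate ordinary binary atoms via resolution closure), replace diagonal atoms by fresh unary predicates, and invoke Theorem~\ref{theo:mainPOUnary}, with the crucial observation that only signature size and multiplicity feed into the doubly exponential bound. The one step you elide is that the sub-formulas $\eta, \theta_h$ of $\phi^*$ must first be rewritten in conjunctive normal form before Lemma~\ref{lma:spread} can be applied (spread normal form does not by itself make these sub-formulas clause sets), but since CNF conversion leaves the signature and multiplicity unchanged, this does not affect your argument.
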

\begin{proof}
We prove the first statement of the theorem. The remainder then follows by Lemma~\ref{lma:nfL2}.
By 
Lemma~\ref{lma:asymmetric}, let $\phi^*$ 
be an $\LtoPO$-formula in spread normal form~\eqref{eq:spreadNf} with multiplicity $3m$ over a signature $\sigma^*$ having the following properties:
\textup{(i)} $\models \phi^* \rightarrow \phi$; \textup{(ii)} if $\phi$ has a \textup{(}finite\textup{)} model then so has $\phi^*$; and \textup{(iii)} $|\sigma^*|$ is polynomially bounded as a function of  $|\sigma| + m$.
By rewriting the sub-formulas $\eta, \theta_0, \dots, \theta_{m-1}$ of $\phi^*$ in conjunctive normal form, we may assume that $\phi^*$ has the form required for Lemma~\ref{lma:spread}. This re-writing will not affect the signature or multiplicity of $\phi^*$.  By Lemma~\ref{lma:spread}, there is an $\LtoPOu$-formula
$\phi^\circ$ in weak normal form over the same signature as $\phi^*$, having the same multiplicity, and
satisfiable over the same domains, in which all occurrences of 
ordinary binary predicates are in atoms of the forms $r(x,x)$ or $r(y,y)$. Let $\phi'$ be
the result of replacing any such atoms in $\phi^\circ$ with the respective atoms $\hat{r}(x)$, $\hat{r}(y)$, where $\hat{r}$ is a fresh unary predicate for each ordinary binary predicate $r$. It is obvious that $\phi^\circ$ and $\phi'$ are satisfiable over the same domains. Moreover, given that the formulas 
$\lambda_0$, $\lambda_1$ and $\lambda_2$ are mutually exclusive, we may insert the condition $x \neq y$
in all $\forall \exists$-conjuncts of $\phi'$. Thus,
$\phi'$ is an $\LtoPOu$-formula in weak normal form over some signature $\sigma'$ with multiplicity $m'= 3m$ such that $|\sigma'|$ is polynomially bounded as function of $|\sigma| + m$. 
By Theorem~\ref{theo:mainPOUnary}, if $\phi'$ has a finite model, then it has a model of size bounded by a doubly exponential function of 
$|\sigma'| + m'$. Therefore, $\phi$ has a model of size bounded by a doubly exponential function of $|\sigma| + m$. 
\end{proof}

\section{Two-variable logic with one transitive\\ relation}
\label{sec:L21t}

The purpose of this section is to show that the logic $\LtoT$ has the triply 
exponential-sized finite model property (Theorem~\ref{theo:mainT}): if $\phi$ is a finitely satisfiable
$\LtoT$-formula, then $\phi$ has a model
of size bounded by some fixed triply exponential function of $\sizeOf{\phi}$. It follows that the finite satisfiability problem for $\LtoT$ is in \ThreeNExpTime. We proceed by reduction 
to the
corresponding problem for standard normal-form $\LtoPO$-formulas, but over signatures of  
exponential size, and with exponentially large multiplicities. 
Recall that, in $\LtoT$, we have a distinguished binary predicate, $\ft$, which must be interpreted as a transitive relation. When speaking about
2-types, we take the assumed transitivity of $\ft$ into account: specifically, if a 2-type
contains the literals $\ft(x,y)$ and $\ft(y,x)$, then it must also contain $\ft(x,x)$ and
$\ft(y,y)$. 

Let $A$ be a set and $T$ a transitive relation on $A$. A subset $B \subseteq A$ is {\em strongly connected} if, for
all distinct $a, b \in B$, $aTb$. It is obvious that the maximal strongly-connected
subsets of $A$ form a partition: we refer to the cells of this partition as
the $T$-{\em cliques} of $A$. If $C$ is a $T$-clique of $A$ and $|C| >1$, then $T \supseteq C \times C$; if, however, $C = \set{a}$, then $a$ may or may not be related to itself by $T$.
If $C$ and $D$ are distinct $T$-cliques of $A$, then we write: 
(i) $C <_T D$ if, for all $a \in C$ and $b \in D$,
$aTb$ but not $bTa$; (ii)
$C >_T D$ if, for all $a \in C$ and $b \in D$, 
$bTa$ but not $aTb$; and (iii)
$C \sim_T D$ if,  for all $a \in C$ and $b \in D$, neither $aTb$ nor $bTa$. It is routine to show:
\begin{lemma}
Let $A$ be a set and $T$ a transitive relation on $A$. Then the relation $<_T$ is a partial order on the set of $T$-cliques of $A$. Moreover, if $C$ and $D$ are distinct $T$-cliques, then
$C <_T D$ if and only if $D >_T C$ and, furthermore,
exactly one of $C <_T D$, $C >_T D$ and $C \sim_T D$ obtains.
\label{lma:po}
\end{lemma}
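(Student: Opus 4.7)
The plan is first to prove a \textbf{uniformity lemma}: for distinct $T$-cliques $C$ and $D$, if $aTb$ holds for some $a \in C$ and $b \in D$, then $a'Tb'$ for every $a' \in C$ and $b' \in D$. This follows by two applications of transitivity, using the strong connectedness of $C$ to get $a'Ta$ (or else $a'=a$) and of $D$ to get $bTb'$ (or else $b=b'$). A symmetric statement holds if $bTa$ for some representatives.

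Next I would observe that for distinct cliques $C, D$ and any $a \in C$, $b \in D$, the conjunction $aTb \wedge bTa$ is impossible: it would make $\{a,b\}$ strongly connected, placing $a$ and $b$ in a common clique by maximality, contrary to $C \neq D$. Combining this with the uniformity lemma gives the trichotomy directly from the definitions, since uniformity shows that whichever of the three possible relationships ($aTb$ only, $bTa$ only, or neither) holds between some representative pair holds between every representative pair. The equivalence $C <_T D \Leftrightarrow D >_T C$ then drops out of the definitions of $<_T$ and $>_T$.

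For the partial-order claim, irreflexivity is built into the definition of $<_T$ (distinct cliques). For transitivity, assume $C <_T D$ and $D <_T E$, and pick representatives $a \in C$, $d \in D$, $e \in E$. From $aTd$ and $dTe$ we get $aTe$. It remains to rule out $C = E$: if $C = E$, then either $e = a$ (so $aTd$ and $dTa$ place $a,d$ in a common clique, contradicting $C \neq D$) or $e \neq a$, in which case the strong connectedness of $C=E$ gives $eTa$; combined with $aTd$ this yields $eTd$, and together with $dTe$ puts $d$ and $e$ in a common clique, contradicting $D \neq E$. Hence $C \neq E$, and applying the trichotomy to $C, E$ together with the established $aTe$ rules out $C \sim_T E$ and $C >_T E$ (the latter would assert $\neg aTe$), forcing $C <_T E$.

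The argument is entirely mechanical once the uniformity lemma is in place; there is no real obstacle here, only the need to be careful in the transitivity step about which pair of cliques could accidentally be forced by maximality to coincide.
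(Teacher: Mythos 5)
The paper offers no proof of this lemma at all---it simply asserts ``It is routine to show''---so there is no argument in the text to compare against. Your proof is correct and complete. The uniformity lemma you begin with is the natural pivot: together with the observation that $aTb \wedge bTa$ for $a\in C$, $b\in D$ would force $C=D$ by maximality, it yields the trichotomy and the bi-implication $C <_T D \Leftrightarrow D >_T C$ directly, and your handling of transitivity (including the need to rule out $C=E$) is sound. One small streamlining is available in the transitivity step: once the trichotomy for the pair $(C,D)$ is in hand, the hypothesis $C=E$ gives both $C <_T D$ and $D <_T C$, i.e.\ both $C <_T D$ and $C >_T D$, which already contradicts the ``exactly one'' clause; the inner case split on whether $e=a$ or $e\neq a$ is then unnecessary.
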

If $\fA$ is a structure interpreting a distinguished binary predicate $\ft$ as a transitive
relation over a domain $A$, we refer to the $\ft^\fA$-cliques, simply, as the {\em cliques} of $\fA$. 
We employ the following abbreviations:
\begin{align}
& \ft_\equiv (x,y) := \ft(x,y) \wedge \ft(y,x) \wedge x \neq y 
& & \ft_< (x,y) \equiv \ft(x,y) \wedge \neg \ft(y,x) \nonumber\\
& \ft_\sim (x,y) \equiv \neg \ft(x,y) \wedge \neg \ft(y,x) \wedge x \neq y 
& & \ft_> (x,y) \equiv \neg \ft(x,y) \wedge \ft(y,x). 
\label{eq:orderDefs}
\end{align}
It is then easy to see that the following validity holds:
\begin{equation}
\models \forall x \forall y (x = y \vee \ft_\equiv(x,y) \vee \ft_<(x,y) \vee \ft_>(x,y) \vee \ft_\sim(x,y)).
\label{eq:orderValidity}
\end{equation}

A formula of $\LtoT$ is said to be in {\em transitive normal form} if it conforms to the pattern
\begin{equation}
\begin{split}
\bigwedge_{\relVar \in \set{\equiv,<,>,\sim}} & \hspace{-6mm}
\forall x \forall y (\ft_{\relVar} (x,y) \rightarrow \eta_{\relVar}) 
\ \ \wedge\\
& \bigwedge_{h=0}^{m-1}   \bigwedge_{\relVar \in \set{\equiv,<,>,\sim}} \hspace{-6mm}
\forall x 
     \exists y (p_{h,\relVar}(x) \rightarrow (\ft_{\relVar} (x,y) \wedge \theta_{h,\relVar})).
\end{split}
\label{eq:nfT}
\end{equation}
where $m \geq 1$, the $p_{h, \relVar}$ are unary predicates, and the  $\eta_{\relVar}$ and $\theta_{h,\relVar}$ 
quantifier- and equality-free formulas not featuring either of the atoms $\ft(x,y)$ or $\ft(y,x)$.
Transitive normal form is---modulo trivial logical manipulation---a special case
of standard normal form~\eqref{eq:snf}. 
Note that, in the above definition,
the sub-formulas $\eta_{\relVar}$ and $\theta_{h,\relVar}$ may contain the atoms $\ft(x,x)$ or $\ft(y,y)$.

We have the following normal-form theorem for $\LtoT$.
\begin{lemma}
Let $\phi$ be a $\LtoT$-formula. There exists an $\LtoT$-formula $\phi^*$ in transitive normal form such that:
\textup{(i)} $\models \phi^* \rightarrow \phi$; \textup{(ii)} every model of $\phi$ 
can be expanded to a model
of $\phi^*$; and \textup{(iii)} $\sizeOf{\phi^*}$ is bounded by a polynomial function of $\sizeOf{\phi}$.
\label{lma:nfT}
\end{lemma}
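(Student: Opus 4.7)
The plan is to first apply Lemma~\ref{lma:nfL2} to convert $\phi$ into a standard normal-form $\LtoT$-formula
\[\forall x \forall y (x = y \vee \eta) \wedge \bigwedge_{h=0}^{m-1} \forall x \exists y (x \neq y \wedge \theta_h),\]
with polynomial blow-up. From there everything is a case split, for pairs of distinct elements, on which of the four mutually exclusive relations $\ft_\equiv$, $\ft_<$, $\ft_>$, $\ft_\sim$ holds, using the validity~\eqref{eq:orderValidity}. For any quantifier-free formula $\chi$ and any $\relVar \in \set{\equiv,<,>,\sim}$, let $\chi_\relVar$ denote the result of replacing each occurrence of $\ft(x,y)$ and $\ft(y,x)$ in $\chi$ by $\top$ or $\bot$ as dictated by $\relVar$; note that $\chi_\relVar$ no longer contains $\ft(x,y)$ or $\ft(y,x)$ (though it may still contain $\ft(x,x)$ or $\ft(y,y)$), so it is legal in the slots of~\eqref{eq:nfT}.

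For the $\forall\forall$-conjunct $\forall x \forall y (x = y \vee \eta)$, I would replace it by the four conjuncts $\forall x \forall y (\ft_\relVar(x,y) \rightarrow \eta_\relVar)$ for $\relVar \in \set{\equiv,<,>,\sim}$, matching the first block of~\eqref{eq:nfT}. Logical equivalence is immediate from~\eqref{eq:orderValidity} together with the observation that, whenever $\ft_\relVar(x,y)$ holds, $\eta$ and $\eta_\relVar$ agree.

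For each $\forall\exists$-conjunct $\forall x \exists y (x \neq y \wedge \theta_h)$ with $0 \leq h < m$, I would introduce four fresh unary predicates $p_{h,\equiv}$, $p_{h,<}$, $p_{h,>}$, $p_{h,\sim}$, and replace the conjunct by
\[\forall x \exists y \bigl(p_{h,\relVar}(x) \rightarrow (\ft_\relVar(x,y) \wedge \theta_{h,\relVar})\bigr)\]
for $\relVar \in \set{\equiv,<,>,\sim}$. To force these new predicates to be jointly exhaustive, I would further conjoin the disjunction $\bigvee_\relVar p_{h,\relVar}(x)$ to each of the four $\eta_\relVar$ (for every $h < m$). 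Modulo our global assumption that structures have cardinality at least $2$, this is equivalent to $\forall x \bigvee_\relVar p_{h,\relVar}(x)$, and it keeps the whole formula in the syntactic shape~\eqref{eq:nfT}.

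For property~(i), in any model of $\phi^*$ every pair of distinct elements falls under exactly one $\ft_\relVar$ by~\eqref{eq:orderValidity}, so the universal conjuncts recover $\forall x \forall y (x = y \vee \eta)$; and each element satisfies some $p_{h,\relVar}$ by the added disjunction, yielding a distinct witness with $\ft_\relVar(x,y) \wedge \theta_h$. For property~(ii), given a model $\fA$ of $\phi$, one chooses, for each $a \in A$ and each $h$, a $\theta_h$-witness $b \neq a$, and sets $p_{h,\relVar}^{\fA'}(a)$ to hold exactly when $\fA \models \ft_\relVar[a,b]$; the new $\forall\exists$-conjuncts then hold in $\fA'$. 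Property~(iii) is a direct count: we add $4m$ unary predicates and produce $4 + 4m$ conjuncts, each of size at most linear in the corresponding original conjunct. I do not anticipate real obstacles here; the only mildly delicate point is repackaging the unary-universal constraint $\forall x \bigvee_\relVar p_{h,\relVar}(x)$ into the form required by~\eqref{eq:nfT}, which is precisely where the cardinality-$\geq 2$ assumption is used.
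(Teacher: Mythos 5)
Your proposal is correct and follows essentially the same route as the paper's proof: normalize via Lemma~\ref{lma:nfL2}, introduce $p_{h,\relVar}$ for each $h$ and $\relVar \in \{\equiv,<,>,\sim\}$, relativize $\eta$ and $\theta_h$ to each $\ft_\relVar$, and use the cardinality-$\geq 2$ convention to recast the unary exhaustiveness constraint $\forall x \bigvee_\relVar p_{h,\relVar}(x)$ as a binary-universal conjunct. The only cosmetic deviations are that you fold that constraint directly into the $\eta_\relVar$ slots of~\eqref{eq:nfT} (the paper keeps it as a separate conjunct $\psi_1$ and invokes its ``modulo trivial logical manipulation'' remark), and that you define $p_{h,\relVar}^{\fA'}(a)$ by a single chosen witness rather than by existence of a witness; both choices work.
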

\begin{proof}
By Lemma~\ref{lma:nfL2}, we may without loss of generality assume $\phi$ to be in standard normal form:
\begin{equation*}
\forall x \forall y (x = y \vee \eta) \wedge
\bigwedge_{h=0}^{m-1} \forall x \exists y (x \neq y \wedge \theta_{h}).
\end{equation*}
where $\eta, \theta_0, \dots, \theta_{m-1}$ are equality- and quantifier-free.
Suppose $\fA \models \phi$. 
For all $h$ ($0 \leq h < m$), all $\relVar \in \set{\equiv, <, >, \sim}$, let $p_{h,\relVar}$
be a  fresh unary predicate, and expand $\fA$ to an interpretation $\fA'$ by setting $\fA' \models p_{h,\relVar}[a]$
if there exists $b \in A \setminus \set{a}$ such that $\fA \models \ft_{\relVar} [a,b]$ and $\fA \models \theta_h[a,b]$. 
Further, set $\theta_{h,\relVar}$ to be the result of replacing all atoms of the forms $\ft(x,y)$ or $\ft(y,x)$ in $\theta_h$ by either $\top$ or $\bot$ as specified by $\ft_{\relVar}(x,y)$. Thus, 
setting $\omega$ to be the formula
\begin{equation*}
\bigwedge_{h=0}^{m-1}  \bigwedge_{\relVar \in \set{\equiv,<,>,\sim}} 
\forall x (p_{h,\relVar}(x) \rightarrow 
     \exists y (\ft_{\relVar}(x,y) \wedge \theta_{h,\relVar})),
\end{equation*}
we see by construction of $\fA'$ that $\fA' \models \omega$. Observe that 
none of the $\theta_{h,\relVar}$ contains either of the atoms $\ft(x,y)$ or $\ft(y,x)$.
Let $\psi'_1$ be the formula
\begin{equation*}
 \forall x \bigwedge_{h=0}^{m-1} \bigvee_{\relVar \in \set{\equiv, <, >, \sim}}  p_{h,\relVar}(x).
\end{equation*}
Since $\fA \models \bigwedge_{h=0}^{m-1}\forall x \exists y (x \neq y \wedge \theta_h)$, 
 and bearing in mind the validity~\eqref{eq:orderValidity},
it follows that $\fA' \models \psi'_1$. Moreover, under our general assumption that 
all domains have cardinality at least 2, $\psi'_1$ is logically equivalent to the formula
$\psi_1$ given by:
\begin{equation*}
 \forall x \forall y \left( x = y \vee  \left(\bigwedge_{h=0}^{m-1} \bigvee_{\relVar \in \set{\equiv, <, >, \sim}}  p_{h,\relVar}(x) \right)\right).
\end{equation*}
For all $\relVar \in \set{\equiv,<,>,\sim}$,
let $\eta_{\relVar}$ be the result of replacing all atoms of the forms $\ft(x,y)$ or $\ft(y,x)$ in $\eta$ by either $\top$ or $\bot$ as specified by $\ft_{\relVar}(x,y)$; and let $\psi_2$ be the formula
\begin{equation*}
\fA \models \bigwedge_{\relVar \in \set{\equiv,<,>,\sim}}
\forall x \forall y (\ft_{\relVar} (x,y) \rightarrow (x = y \vee \eta_\relVar)).
\end{equation*}
Since $\fA \models \forall x \forall y (x = y \vee \eta)$, we have $\fA' \models \psi_2$. Observe that 
none of the $\eta_{\relVar}$ contains either of the atoms $\ft(x,y)$ or $\ft(y,x)$.

Let $\phi^* = \psi_1 \wedge \psi_2 \wedge \omega$. Thus, $\sizeOf{\phi^*}$ is bounded by a polynomial function of $\sizeOf{\phi}$. We have shown that, if $\phi$ has a model, so does 
$\phi^*$. Moreover, it follows easily from~\eqref{eq:orderValidity} that $\models \phi^* \rightarrow \phi$.
\end{proof}

The following lemma, taken from~\cite{fo21t:KO05}, gives us a simple way to replace a collection $B$ of elements
in some structure $\fA$ interpreting a purely relational signature $\sigma$ with a `small' set of elements $B'$ in 
such a way that formulas of $\FOT$ do not notice the difference. 
We employ the following notation where $\fA$ is a structure 
and $B, B' \subseteq A$. We denote the set of 1-types realized over $B$ by
$\tp^\fA[B]$. Likewise, we denote the set of 2-types realized by pairs of elements $\langle b, b' \rangle$, where $b \in B$ and $b' \in B'$, by $\tp^\fA[B,B']$. (There is no requirement that $B$ and $B'$ be disjoint.) When $B$ is a singleton, we write 
$\tp^\fA[b,B']$ in place of $\tp^\fA[\set{b},B']$.

\begin{lemma}[\cite{fo21t:KO05}, Prop.~4]
\label{lem:ssp} 
Let $\fA$ be a $\sigma$-structure not containing the distinguished predicate $\ft$, $B \subseteq A$, and $C := A \setminus B$. Then there is a
$\sigma$-structure $\fA'$ with domain $A' = B' \cup C$
for some set $B'$ of size exponential in $|\sigma|$,
such that
\begin{enumerate}[\textup{(}i\textup{)}]
\item 
$\fA'_{|C} = \fA_{|C}$.
\item 
$\tp^{\fA'}[B'] = \tp^\fA[B]$, whence $\tp^{\fA'}[A'] = \tp^\fA[A]$;
\item 
$\tp^{\fA'}[B',B'] = \tp^{\fA}[B,B]$ and $\tp^{\fA'}[B', C] = \tp^{\fA}[B, C]$, whence\newline
$\tp^{\fA'}[A',A'] = \tp^{\fA}[A,A]$;
\item 
for each $b' \in B'$ there is some $b \in B$ with $\tp^{\fA'}[b',A']
\supseteq \tp^{\fA}[b,A]$;
\item 
for each $a \in C$: $\tp^{\fA'}[a,B'] \supseteq \tp^{\fA}[a,B]$.
\end{enumerate}
\end{lemma}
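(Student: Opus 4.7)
The plan is to shrink $B$ to a set $B'$ of cardinality exponential in $|\sigma|$ while preserving the required realizations of 1- and 2-types. Since the number of 1-types and 2-types over $\sigma$ are each bounded by $2^{|\sigma|}$, the strategy is to index $B'$ by a carefully chosen family of such types and then to specify $\fA'$ on $A' = B' \cup C$ so as to secure conditions (i)--(v).

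First I would enumerate $\Pi_B := \tp^\fA[B]$ and $\Pi := \tp^\fA[A]$. For each triple $(\pi, \pi', \tau)$ with $\pi \in \Pi_B$, $\pi' \in \Pi$, and $\tau$ a 2-type with left-1-type $\pi$ and right-1-type $\pi'$ realized in $\fA$ by some pair $(b,a) \in B \times A$, introduce a fresh element $b'_{\pi,\pi',\tau}$ into $B'$. The number of such triples is at most $2^{O(|\sigma|)}$, giving the required size bound. Now define $\fA'$ on $A'$: set $\fA'_{|C} = \fA_{|C}$, yielding (i); assign each $b'_{\pi,\pi',\tau}$ the 1-type $\pi$, yielding (ii); for each pair of clones in $B' \times B'$ select a 2-type realized in $\fA$ by some pair in $B \times B$ with matching 1-types, ensuring collectively that all of $\tp^\fA[B, B]$ is represented; and for each $c \in C$ together with a clone $b'_{\pi,\pi',\tau}$ such that $\pi' = \tp^\fA[c]$ and $\tau$ is realized between some $B$-element and $c$, set $\tp^{\fA'}[b'_{\pi,\pi',\tau}, c] = \tau$, handling all other clone-$c$ pairs by choosing any 2-type realized in $\fA$ between a $B$-element of the appropriate 1-type and $c$.

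With these choices, (iii) will follow by counting clones, and (v) follows by construction: given any $c \in C$ and any 2-type $\tau_0 \in \tp^\fA[c, B]$, write $\tau_0 = \tp^\fA[c,b]$ with $\tp^\fA[b] = \pi$, so that its transpose $\tau_0^T$ belongs to $\tp^\fA[B, A]$; then the included clone $b'_{\pi, \tp^\fA[c], \tau_0^T}$ has been set up so that it realizes $\tau_0^T$ with $c$, and hence $c$ realizes $\tau_0$ with it, as required.

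The principal obstacle is condition (iv): for each clone $b' \in B'$ one must exhibit some $b \in B$ with $\tp^\fA[b, A] \subseteq \tp^{\fA'}[b', A']$. Because the 2-type assignments above were made opportunistically to secure (v), they need not correspond to any single $b \in B$'s profile. To overcome this I would designate for each clone $b'_{\pi,\pi',\tau}$ an \emph{agent} $\alpha(b') \in B$ of 1-type $\pi$ that itself realizes $\tau$ with some $A$-element; wherever the opportunistic choice for $\tp^{\fA'}[b',c]$ diverges from $\tp^\fA[\alpha(b'),c]$, one must argue that the missing 2-type $\tp^\fA[\alpha(b'),c]$ is nonetheless realized somewhere in $B' \times A'$ between a suitable clone and some $a' \in A'$, thanks to the exhaustiveness of our clone family. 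Orchestrating these competing constraints so that all of (i)--(v) hold simultaneously---if necessary by enlarging $B'$ with a bounded number of additional copies dedicated to covering the agent profiles---is the main combinatorial balancing act of the proof.
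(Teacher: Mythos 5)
The paper offers no proof of this lemma---it is imported verbatim from \cite{fo21t:KO05}, Prop.~4---so your sketch must be judged on its own merits.

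Your treatment of (i), (ii), (iii) and (v) is plausible, and the argument for (v) is correct precisely because your ``if''-case rule sets $\tp^{\fA'}[b'_{\pi,\pi',\tau},c]=\tau$ for \emph{every} $c\in C$ of 1-type $\pi'$ for which $\tau$ is realized between $B$ and $c$. But that very commitment destroys (iv), and the gap you acknowledge is not a bookkeeping nuisance but a genuine failure. Take $B=\set{b_1,b_2}$ both of 1-type $\pi$, $C=\set{c_1,c_2}$ both of 1-type $\pi'\notin\tp^\fA[B]$, with $\tp^\fA[b_1,c_1]=\tp^\fA[b_2,c_2]=\tau_1$ and $\tp^\fA[b_1,c_2]=\tp^\fA[b_2,c_1]=\tau_2\neq\tau_1$. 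Your rule then forces $\tp^{\fA'}[b'_{\pi,\pi',\tau_1},c_1]=\tp^{\fA'}[b'_{\pi,\pi',\tau_1},c_2]=\tau_1$. Both candidate agents $b_1$ and $b_2$ realize $\tau_2$ with some element of $C$, so $\tau_2\in\tp^\fA[b,A]$ whatever the agent; yet $\tau_2$ has right 1-type $\pi'$, which is realized nowhere in $B'$, so $b'_{\pi,\pi',\tau_1}$ can realize $\tau_2$ only with $c_1$ or $c_2$---and you have given it $\tau_1$ with both. Thus $\tau_2\notin\tp^{\fA'}[b'_{\pi,\pi',\tau_1},A']$ and (iv) fails for every choice of agent.

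Your attempted repair also mis-states what (iv) demands: the missing 2-type must be realized \emph{by $b'$ itself} with some $a'\in A'$, not merely ``somewhere in $B'\times A'$ between a suitable clone and some $a'$.'' Being realized by a different clone is no help. The genuine resolution must break the tension between (iv) and (v) by multiplying clones per 1-type and segregating their roles, so that for each fixed $c$ the (at most exponentially many) witnessing obligations for (v) consume only a few clones of a given type, leaving the remainder free to mirror a designated agent's 2-types with $c$; one then needs an explicit allocation argument (not a local greedy rule) showing the two sets of obligations can be met simultaneously. That allocation is the substance of the proof, and it is missing from your sketch.
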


The above Lemma applies to arbitrary structures (without any distinguished predicates). If, now, $\ft$ is a distinguished
predicate required to be interpreted as a transitive relation, let us write 
$\tp_<^\fA[B,B']$ to denote the subset of 2-types $\beta \in \tp^\fA[B,B']$ such that $\models \beta \rightarrow \ft_<(x,y)$, and similarly for $\tp_>^\fA[B,B']$ and $\tp_\sim^\fA[B,B']$. 
The following two Lemmas, due to ~\cite{fo21t:st13}, allow us to replace any clique in
a structure $\fA$ interpreting $\ft$ by an equivalent one of bounded size.
The proofs were kindly supplied by those authors in private
communication. 

\begin{lemma}\label{lem:sc}
Let $\sigma$ be a signature containing the distinguished transitive predicate $\ft$,
$\fA$ a $\sigma$-structure, and $\mathbb{A}$ the set of cliques of $\fA$. Let 
$B \in \mathbb{A}$ and $C = A \setminus B$.
Then there is a $\sigma$-structure $\fA'$ with domain
$A'= B' \cup C$ for some set $B'$,
with $|B'|$ bounded exponentially in $|\sigma|$, such that
\textup{(}i\textup{)}--\textup{(}v\textup{)} are as in Lemma \ref{lem:ssp}, and the set of cliques of $\fA'$ is $(\mathbb{A} \setminus \set{B}) \cup \set{B'}$.
\end{lemma}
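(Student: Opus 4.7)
The plan is to reduce to Lemma~\ref{lem:ssp} by first forgetting the transitive predicate $\ft$, applying that lemma to the resulting reduct, and then carefully re-introducing $\ft$. The case $|B|=1$ is trivial: take $B'=B$ and $\fA'=\fA$; so assume henceforth that $|B|>1$, which forces $\ft^\fA \supseteq B \times B$.

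The preliminary observation driving the construction is that, since $B$ is a single clique of size greater than one, transitivity of $\ft^\fA$ makes the $\ft$-relation between $B$ and every other clique completely uniform. Fix any clique $D \in \mathbb{A} \setminus \set{B}$: by Lemma~\ref{lma:po} exactly one of $B <_\ft D$, $B >_\ft D$, $B \sim_\ft D$ holds, and consequently, for any $b \in B$ and $d \in D$, the truth values of $\ft(b,d)$ and $\ft(d,b)$ depend only on $D$ and not on the particular choice of $b$. This lets all of $B$ be treated as a single ``block'' externally.

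I would then set $\sigma^- = \sigma \setminus \set{\ft}$, form the $\sigma^-$-reduct $\fA^-$ of $\fA$, and apply Lemma~\ref{lem:ssp} to obtain a $\sigma^-$-structure $\fA'^-$ on domain $A' = B' \cup C$ satisfying clauses~(i)--(v) there, with $|B'|$ bounded exponentially in $|\sigma^-| \leq |\sigma|$. Next I would define $\fA'$ to be the $\sigma$-expansion of $\fA'^-$ in which $\ft$ is interpreted by: (a) $B' \times B' \subseteq \ft^{\fA'}$; (b) $\ft^{\fA'}$ agrees with $\ft^\fA$ on $C \times C$; and (c) for every $b' \in B'$ and $c \in C$ lying in some clique $D \in \mathbb{A} \setminus \set{B}$, $\ft^{\fA'}(b',c)$ iff $\ft^\fA(b,c)$ for any (equivalently, every) $b \in B$, and symmetrically for $\ft^{\fA'}(c,b')$.

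The main obstacle is verifying transitivity of $\ft^{\fA'}$; this amounts to case analysis on how a hypothetical triple $(a_1,a_2,a_3) \in (A')^3$ splits between $B'$ and $C$. The all-in-$B'$ and all-in-$C$ cases are immediate from (a) and (b); the mixed cases reduce, via the uniformity observation, to transitivity of $\ft^\fA$ after swapping each $B'$-occurrence for an arbitrary $b \in B$. The one subcase worth flagging (with $a_1,a_3 \in B'$ and $a_2 \in C$ in clique $D$) turns out to be vacuously satisfied, because the hypotheses would force both $B <_\ft D$ and $B >_\ft D$, contradicting Lemma~\ref{lma:po}. Clauses~(i)--(v) of Lemma~\ref{lem:ssp} then transfer from $\fA^-$ to $\fA'$ routinely: since $|B|>1$, every element of $B$ and of $B'$ satisfies $\ft(x,x)$, so 1-types match; for 2-types involving $B'$, the $\sigma^-$-part is inherited from $\fA'^-$ while the $\ft$-part is supplied by~(a) and~(c). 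Finally, the cliques of $\fA'$ are exactly $(\mathbb{A} \setminus \set{B}) \cup \set{B'}$: clause~(a) makes $B'$ strongly connected, clause~(b) preserves every clique of $C$, and clause~(c) keeps $B'$ strictly separated from each other clique in the exact same $<_\ft$, $>_\ft$, or $\sim_\ft$ relation in which $B$ previously stood.
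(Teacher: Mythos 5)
Your construction has a clean and economical idea, but the step you wave away as ``routine'' is exactly where the argument breaks. You apply Lemma~\ref{lem:ssp} to the $\ft$-free reduct $\fA^-$ and afterwards paint $\ft$ back on by hand, uniformly per clique. That makes transitivity and the clique count easy (your case analysis for transitivity and your argument that the cliques of $\fA'$ are $(\mathbb{A}\setminus\{B\})\cup\{B'\}$ are both fine), but it does \emph{not} let you transfer clause~(iii) of Lemma~\ref{lem:ssp} to the full $\sigma$-structure. Clause~(iii) for $\fA^-$ gives only a \emph{set-level} equality $\tp^{\fA'^-}[B',C]=\tp^{\fA^-}[B,C]$: it tells you nothing about which specific elements $c\in C$ realize which $\sigma^-$-2-types against $B'$. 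Since $\fA^-$ has been stripped of $\ft$, Lemma~\ref{lem:ssp} has no way to ``remember'' the clique of $c$, and may attach to some $b'\in B'$ and $c\in C$ a $\sigma^-$-2-type that in $\fA$ was realized only against elements of a \emph{different} clique (one standing in a different $\ft_\relVar$-relation to $B$). Your rule~(c) then pins the $\ft$-part of $\tp^{\fA'}[b',c]$ to $c$'s actual clique, producing a full $\sigma$-2-type that is not realized anywhere in $\tp^\fA[B,C]$. So the $\subseteq$ half of clause~(iii) for $\fA'$ can fail, and an analogous problem threatens clause~(iv). This matters: clause~(iii) is precisely what the downstream Lemma~\ref{lma:smallCliques} uses to see that the universal conjuncts of $\phi$ survive the clique-shrinking.

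The paper's proof avoids this by \emph{not} forgetting $\ft$. It renames $\ft$ to an ordinary binary predicate $q_0$ and adds marker unary predicates $u$, $u_<$, $u_>$, $u_\sim$ recording, in each element's 1-type, exactly the $\ft_\relVar$-relation its clique bears to $B$. Lemma~\ref{lem:ssp} is then applied to this richer structure, so its 2-type preservation already carries the $q_0$-information along; and because any 2-type realized in $\bar{\fA}$ that contains $u_<(x)\wedge u(y)$ also contains $q_0(x,y)\wedge\neg q_0(y,x)$ (and similarly for $u_>$, $u_\sim$), the preserved 2-type set forces the $q_0$-relation between $B'$ and each $c\in C$ to be clique-uniform and to agree with what $B$ had, which is what gives you both transitivity and the correct clique partition \emph{and} the full-signature version of (iii). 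If you want to save your approach you would need an analogue of those markers; as written, the re-attachment of $\ft$ after the fact is not justified.
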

\begin{proof}
If $|B|=1$, then we simply put $B'=B$ and we are done. Otherwise,
let $u$, $u_<$, $u_>$ and $u_\sim$ be fresh unary predicates.  Let
$\bar{\fA}$ be the expansion of $\fA$ obtained by setting $u^{\bar{\fA}} = B$ and 
\begin{equation*}
u_\relVar^{\bar{\fA}} = \set{a \in C : \quad  \fA \models \ft_\relVar[a,b] \text{ for some ($=$ all) $b \in B$}},
\end{equation*}
for $\relVar \in \set{<,>,\sim}$; and now rename the distinguished predicate $\ft$ in
$\bar{\fA}$ with an ordinary binary predicate---say---$q_0$. (Of course, even though $q_0$ is not a distinguished predicate, $q_0^{\bar{\fA}}$
is still a transitive relation.)
Let the result of applying Lemma \ref{lem:ssp} to
$\bar{\fA}$ and $B$ be a structure
$\bar{\fA}'$, in which $B'$ is the replacement for $B$; and write $A'$ for the domain of $\bar{\fA}'$. 
Notice that, if $\tau$ is a 2-type realized in $\bar{\fA}$ containing the literals $u_<(x)$ and $u(y)$,
then $\tau$ also contains the literals $q_0(x,y)$ and $\neg q_0(y)$, and similarly, {\em mutatis mutandis}, with $u_<$ replaced by $u_>$ and $u_\sim$.
But from property (iii) of Lemma \ref{lem:ssp}, we have $\tp^{\bar{\fA}'}[A',A'] = \tp^{\bar{\fA}}[A,A]$. Hence, if $a \in C$ is such that $\fA \models \ft_<[a,b]$ for some (and hence all) $b \in B$,
then $\bar{\fA}' \models q_0[a,b']$ and $\bar{\fA}' \not \models q_0[b',a]$ for 
all (and hence some) $b' \in B$, and similarly for $\ft_>$ and $\ft_\sim$.
It is then obvious that $q_0^{\bar{\fA}}$ is a transitive relation with set of cliques
$(\mathbb{A} \setminus \set{B}) \cup \set{B'}$, and indeed that the clique ordering induced by 
$\ft$
on $\fA$ and clique ordering induced by $q_0$ on $\bar{\fA}'$ are isomorphic under replacement of $B$ by $B'$. 
Now let $\fA'$
be the structure obtained from $\bar{\fA}'$ by
dropping the interpretations of $u, u_<, u_>$ and $u_\sim$ and renaming 
$q_0$ back to $\ft$.  
\end{proof}
Now for the promised lemma allowing us to confine attention to models with small cliques.
\begin{lemma} 
Let $\phi$ be a \textup{(}finitely\textup{)} satisfiable $\LtoT$-sentence in transitive normal form over a
signature $\sigma$. Then there exists a \textup{(}finite\textup{)} model of $\phi$ in which
the size of each clique is bounded exponentially in $|\sigma|$.
\label{lma:smallCliques}
\end{lemma}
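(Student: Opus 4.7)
The plan is to start with a (finite) model $\fA$ of $\phi$ and iteratively apply Lemma~\ref{lem:sc} to each clique whose size exceeds the exponential bound furnished by that lemma. Since Lemma~\ref{lem:sc} preserves all other cliques exactly (the set of cliques is $(\mathbb{A} \setminus \set{B}) \cup \set{B'}$), the iteration affects each clique at most once; in the finite case this terminates after finitely many steps and produces a finite model in which every clique is of bounded size.

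The key step is verifying that each application of Lemma~\ref{lem:sc} preserves the truth of $\phi$, exploiting the transitive-normal-form structure~\eqref{eq:nfT}. For the universal conjuncts $\forall x \forall y(\ft_{\relVar}(x,y) \rightarrow \eta_{\relVar})$, note that $\eta_{\relVar}$ mentions neither $\ft(x,y)$ nor $\ft(y,x)$, while $\ft_{\relVar}(x,y)$ is itself determined by the 2-type. Property~(iii) of Lemma~\ref{lem:ssp} (inherited by Lemma~\ref{lem:sc}) states $\tp^{\fA'}[A', A'] = \tp^{\fA}[A,A]$, so the set of realized 2-types is preserved and these conjuncts carry over.

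For the existential conjuncts $\forall x(p_{h,\relVar}(x) \rightarrow \exists y(\ft_{\relVar}(x,y) \wedge \theta_{h,\relVar}))$, witnesses are produced as follows. For $x \in B'$, property~(iv) yields some $b \in B$ with $\tp^{\fA'}[x,A'] \supseteq \tp^{\fA}[b,A]$; in particular $\tp^{\fA'}[x] = \tp^{\fA}[b]$, so $p_{h,\relVar}(b)$ held in $\fA$, and any witness $c$ for $b$ supplies a $y \in A'$ with $\tp^{\fA'}[x,y] = \tp^{\fA}[b,c]$, which is a witness for $x$. For $x \in C$, we combine property~(i), namely $\fA'_{|C} = \fA_{|C}$, with property~(v), $\tp^{\fA'}[x,B'] \supseteq \tp^{\fA}[x,B]$, to replace any $B$-based witness of $x$ in $\fA$ by a $B'$-based one in $\fA'$.

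I expect the main technical care to lie in the existential step: one must verify for each location of the witness (in $C$ or in the old clique $B$) that the relevant property of Lemma~\ref{lem:ssp} supplies a matching 2-type in $\fA'$, and, for $x \in B'$, that the unary predicates $p_{h,\relVar}$ governing the existential commitments of $x$ are inherited from the corresponding $b \in B$. For unrestricted (as opposed to finite) satisfiability, the iteration must instead be performed simultaneously across all cliques of $\fA$; but since each application depends only on the chosen clique and the rest of the structure, the simultaneous version presents no substantive additional difficulty.
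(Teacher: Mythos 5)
Your proposal follows the same route as the paper: apply Lemma~\ref{lem:sc} to each oversized clique, using property~(iii) for the universal conjuncts and properties~(i),~(iv),~(v) for the existential ones, then iterate. Your single-step verification is actually more detailed than the paper's, which merely cites the relevant properties; the paper is in turn more careful about the iteration in the infinite case, where it replaces cliques one at a time along an enumeration $I_1,I_2,\dots$ and then takes a limit $\fA_\infty$ with connections between $I'_k$ and $I'_l$ as in $\fA_{\max(k,l)}$ (rather than your gestured-at ``simultaneous'' replacement). Since the lemma is later applied only to finite models, this is a cosmetic difference.
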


\begin{proof}  
Let $\phi$ be as given in~\eqref{eq:nfT}, and suppose $\fA \models \phi$.
Let $B \subseteq A$ be a clique of $\fA$, let 
$C=A\setminus B$, and let $\fA'$, with domain $A'= B' \cup
C$, be the result of applying of Lemma~\ref{lem:sc} to $\fA$. 
We claim that
$\fA' \models \phi$.
The universally quantified conjuncts of $\phi$ are true in
$\fA'$ thanks to property (iii) of Lemma~\ref{lem:sc}. As for the
existential conjuncts, for
any $c \in C$, properties  (i) and (v) guarantee that $c$ has all
required witnesses. For any $b \in B'$, the same thing is
guaranteed by property (iv). This establishes the claim.

Now let $\fA$ be a countable $\sigma$-structure. Let $I_1$, $I_2,
\dots$ be a (possibly infinite) sequence of all cliques in a
$\fA$, $\fA_0=\fA$ and $\fA_{j+1}$ be the structure $\fA_j$
modified by replacing clique $I_{j+1}$ by its small replacement
$I'_{j+1}$ as described above. We define the limit structure
$\fA_\infty$ with the domain $I'_1 \cup I'_2, \dots$ such that
for all $k, l$ the connections between $I'_k$ and $I'_l$ are
defined in the same way as in $\fA_{max(k, l)}$. It is easy to see
that $\fA_\infty \models \phi$ and all cliques in $\fA_\infty$
are bounded exponentially in $|\sigma|$.
\end{proof}

We are now ready to prove the main result of this section: an exponential reduction of the (finite) satisfiability problem for $\LtoT$ to the (finite) satisfiability problem for standard normal form
$\LtoPO$-formulas.
\begin{lemma}
Let $\phi$ be formula of $\LtoT$. 
There exists an $\LtoPO$-formula $\hat{\phi}$ in standard normal form over a signature $\hat{\sigma}$
with multiplicity $\hat{m}$, such that:
\textup{(i)} if $\phi$ has a \textup{(}finite\textup{)} model with at least $2$ cliques, then
$\hat{\phi}$ has a \textup{(}finite\textup{)} model; \textup{(ii)}
if
$\hat{\phi}$ has a model of size $L$, then 
$\phi$ has a model of size at most $n \cdot L$, where $n$ is bounded by an exponential function of $\sizeOf{\phi}$; and \textup{(iii)} both $|\hat{\sigma}|$ and $\hat{m}$ are bounded by an exponential function of $\sizeOf{\phi}$.
\label{lma:cliquifiy}
\end{lemma}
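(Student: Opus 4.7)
The approach is to combine Lemma~\ref{lma:nfT} and Lemma~\ref{lma:smallCliques}: we may assume $\phi$ is in transitive normal form and that every witnessing model has $\ft$-cliques of size at most $n := 2^{O(\sizeOf{\phi})}$. By Lemma~\ref{lma:po}, the cliques of any such model then form a strict partial order under $<_\ft$. The plan is therefore to take $\hat\phi$ to be an $\LtoPO$-formula describing the structure induced on the set of cliques: each element of a model of $\hat\phi$ will represent a single $\ft$-clique of a model of $\phi$, with unary predicates encoding the clique's internal content, and binary predicates other than $<$ encoding the 2-types across pairs of cliques. Under this scheme, a model of $\hat\phi$ of size $L$ blows up to a model of $\phi$ of size at most $nL$, as required by~(ii).

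Fix a numbering $1, \ldots, n$ of positions within a clique. Define $\hat\sigma$ to contain, besides $<$: unary predicates $\mathrm{size}_k$ for $1 \le k \le n$; for each unary $p \in \sigma$ (including the $\ft(x,x)$ indicator) and each position $i$, a unary predicate $p^{(i)}$ recording whether the $i$-th element of the represented clique satisfies $p$; for each ordinary binary $r \in \sigma$ and each ordered pair of distinct positions $(i,j)$, a unary predicate $r^{(i,j)}_{\mathrm{in}}$ recording the $r$-status between the $i$-th and $j$-th elements within the clique; and for each ordinary binary $r \in \sigma$ and each ordered pair $(i,j)$, a binary predicate $r^{(i,j)}_{\mathrm{out}}$ for between-clique $r$-status. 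Thus $|\hat\sigma| = O(n^2|\sigma|)$, exponential in $\sizeOf{\phi}$. The formula $\hat\phi$ is then the conjunction of a universal conjunct $\forall x \forall y (x = y \vee \hat\eta)$ translating each $\eta_\relVar$ of $\phi$ into the encoded predicates (splitting on the sizes of $x$ and $y$ and on whether $x < y$, $y < x$, $x \sim y$, or $x = y$), together with existential conjuncts $\forall x \exists y (x \ne y \wedge \hat\theta_{h,\relVar,i})$ for each $0 \le h < m$, $\relVar \in \{\equiv,<,>,\sim\}$, and $1 \le i \le n$, each encoding the original witness requirement at position $i$; the within-clique case $\relVar = \equiv$ is a purely unary condition on $x$, recast in this form via the trick noted after Lemma~\ref{lma:nfL2}. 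Rewriting in standard normal form leaves the multiplicity as $\hat m = O(mn)$, exponential in $\sizeOf{\phi}$, establishing~(iii).

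For (i), given $\fA \models \phi$ with at least two cliques, take $\hat\fA$ over the set of cliques of $\fA$ with $<$ equal to $<_\ft$, and set the new predicates according to an arbitrary but fixed numbering of elements within each clique; the conditions of $\hat\phi$ then hold by direct construction. For (ii), given $\hat\fA \models \hat\phi$ of size $L$, blow up each element $x$ of $\hat\fA$ to $\mathrm{size}(x)$ fresh elements $c_1^x, \ldots, c_{\mathrm{size}(x)}^x$, assigning their 1-types and internal ordinary 2-types from the unary predicates on $x$, and their between-clique ordinary 2-types from the $r^{(i,j)}_{\mathrm{out}}$ together with the $<$-relation on $\hat\fA$; take $\ft$ complete (in both directions) between distinct elements of the same clique, and between distinct cliques follow $<_\ft$. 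The resulting $\fA$ has size at most $nL$ and satisfies $\phi$.

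The principal obstacle is verifying that $\ft^\fA$ in the blown-up structure of~(ii) is genuinely transitive. This reduces to two observations: (a) the partial order $<$ on $\hat\fA$ is transitive (automatic); and (b) the between-clique $\ft$-assignment is \emph{uniform}, i.e., whenever $x <^{\hat\fA} y$ every pair $(c_i^x, c_j^y)$ is $\ft_<$-related, and likewise for the incomparable case. Together these ensure that any mixed within-then-between or between-then-within path in $\ft$ composes correctly, since within-clique $\ft$ is complete in both directions. A parallel but easier check is needed for $\hat\fA \models \hat\phi$ in~(i): the universal part is a case analysis on the $<_\ft$-relation between two cliques of $\fA$, while each existential conjunct is discharged by the witnesses already present in $\fA$, read off at the appropriate position.
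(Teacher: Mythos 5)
Your proposal follows the paper's strategy exactly at the structural level: normalize to transitive normal form (Lemma~\ref{lma:nfT}), bound clique size by Lemma~\ref{lma:smallCliques}, pass to the clique quotient using Lemma~\ref{lma:po}, and write an $\LtoPO$-formula in standard normal form whose models encode clique quotients. Both directions of the model correspondence match the paper's, as does the observation that transitivity of $\ft$ in the blow-up reduces to transitivity of $<$ plus uniformity of the between-clique assignment.

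Where you diverge is purely in the \emph{encoding}. The paper enumerates the (doubly exponentially many) isomorphism types of single cliques (``cells'') and pairs of cliques (``diatoms''), and then spends only $\lceil \log M\rceil$ unary and $\lceil \log N\rceil$ binary labelling predicates to index these types; the conjuncts $\psi_2,\psi_3,\psi_4$ are the coherence constraints between cell labels, diatom labels, diatom inverses, and the partial order. You instead encode the internal structure \emph{positionally}, with $O(n^2|\sigma|)$ predicates recording which atoms hold at or between numbered slots within and across cliques. Both schemes carry the same information and both give a signature of exponential size in $\sizeOf{\phi}$ (in fact of the same order, $O(|\sigma| n^2)$), so the bound in (iii) comes out the same. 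The positional scheme avoids the explicit $\psi_3$-style inverse constraint, since $r^{(i,j)}_{\mathrm{out}}(u,v)$ and $r^{(j,i)}_{\mathrm{out}}(v,u)$ encode independent atoms; the type-index scheme is more compact to write down once cells and diatoms are fixed. Neither is clearly superior.

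Two small details worth spelling out in your version: you need indicators not only for $\ft(x,x)$ but for every diagonal atom $r(x,x)$ with $r$ an ordinary binary predicate, since these occur in 1-types over $\sigma$; and the formula $\hat\phi$ must force exactly one $\mathrm{size}_k$ to hold at each element (and that the position-indexed predicates above the declared size are ignored or forced false), so that the blow-up in (ii) is well-defined. Both are routine to add.
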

\begin{proof}
By Lemma~\ref{lma:nfT}, we may without loss of generality assume $\phi$ to be in transitive normal form:
\begin{equation*}
\bigwedge_{\relVar \in \set{\equiv,<,>,\sim}} \hspace{-6mm}
\forall x \forall y (\ft_{\relVar} (x,y) \rightarrow \eta_{\relVar}) 
\wedge
\bigwedge_{h=0}^{m-1}   \bigwedge_{\relVar \in \set{\equiv,<,>,\sim}} \hspace{-6mm}
\forall x \exists y  (p_{h,\relVar}(x) \rightarrow 
     (\ft_{\relVar} (x,y) \wedge \theta_{h,\relVar})).
\end{equation*}
Let $\sigma$ be the signature of $\phi$.
From Lemma~\ref{lma:smallCliques}, we know that, if $\phi$ has a (finite) model, then it has
one in which each clique is of size at most $n$, where $n$ is bounded by an exponential function of $|\sigma|$.
Let $\bC = \set{c_1, \dots, c_n}$ be some set of $n$ objects ($n \geq 1$). We call any set $C = \set{c_1, \dots, c_m}$ for some $m$ ($1 \leq m \leq n$) an \textit{initial segment} of $\bC$. Say that a {\em cell} is a $\sigma$-structure
$\fC$ whose domain $C$ is an initial segment of $\bC$ such that $\fC$ has exactly one clique, namely $C$ itself. Here (and here only)
we lift our usual assumption that all structures have cardinality at least 2, thus allowing cells with the
singleton domain $\set{c_1}$.
Enumerate the cells as $\fC_0, \dots, \fC_{M-1}$. Thus, $M$ is bounded by a doubly exponential function of $|\sigma|$. 
Notice that, if $\fC$ is a cell containing more than 1 element, then 
by the transitivity of $\ft^\fC$, we have $\fC \models \forall x . \ft(x,x)$, and hence $\ft^\fC = C \times C$.
On the other hand, 
if $C = \set{c_1}$, then we may have either $\fC \models \ft[c_1, c_1]$ or $\fC \not \models \ft[c_1, c_1]$. (It follows, incidentally, that $M \geq 2$.) 

Now let $\bE = \set{e_1, \dots, e_n}$ and $\bE' = \set{e'_1, \dots, e'_n}$ be disjoint sets of cardinality $n$, and define the notion of
an \textit{initial segment} of these sets in the same way as for $\bC$. Say that a {\em diatom} is 
a $\sigma$-structure $\fD$ with domain $D = E \cup E'$, where $E$ is an initial segment of $\bE$ and $E'$ an initial segment of $\bE'$ (not necessarily of the same cardinality),
such that such that the set of cliques in $\fD$ is exactly $\set{E,E'}$. 
Enumerate the diatoms
as $\fD_0, \dots, \fD_{N-1}$. Thus, $N$ is bounded by a doubly exponential function of $|\sigma|$. 
(On the other hand, $N \geq M \geq 2$.)

If $C$ is an initial segment of $\bC$
we define the mappings $\epsilon: C \rightarrow \bE$ and $\epsilon': C \rightarrow \bE'$ by
$\epsilon(c_i) = e_i$ and $\epsilon'(c_i) = e'_i$ for all $i$ ($1 \leq i \leq |C|$). 
Thus, if $\fD = \fD_k$ is a diatom with cliques $E \subseteq \bE$  and $E' \subseteq \bE'$, there exist unique
cells $\fC = \fC_j$ and $\fC'= \fC_{j'}$ 
such that $\epsilon: \fC \simeq \fD_{|E}$ and $\epsilon': \fC' \simeq \fD_{|E'}$. 
We refer to $\fC$ and $\fC'$ as the \textit{left-} and \textit{right-cells} of $\fD$, respectively,
and, working with the corresponding indices, we define, for all $k$ ($0 \leq k < N$),
$\leftCell(k) = j$ and $\rightCell(k) = j'$.
Suppose now that we replaced the elements $e_1, e_2, \dots$
of $E \subseteq \bE$ with the corresponding elements $e'_1, e'_2, \dots$
of $\bE'$, and  we replaced the elements $e'_1, e'_2, \dots$
of $E' \subseteq \bE'$ with the corresponding elements $e_1, e_2, \dots$
of $\bE$. The result would be another diatom, say, $ \fD^{-1}$, obtained (in essence) by reversing the 
choice of which clique of $\fD$ defines the left-cell, and which the right-cell. 
We refer to $\fD^{-1} = \fD_{k'}$ as the {\em inverse} of $\fD = \fD_k$,
and, working with the corresponding indices, we define, for all $k$ ($0 \leq k < N$),
$\converseDiatom(k) = k'$. 
We introduce one final piece of terminology regarding diatoms. Recalling the abbreviations~\eqref{eq:orderDefs},
consider any diatom $\fD$ with cliques $E \subseteq \bE$ and $E' \subseteq \bE'$.
Evidently, for some $\relVar \in \set{<,>,\sim}$, we have $\fD \models \ft_\relVar[e,e']$ for all $e \in E$ and $e' \in E$. We call $\relVar$ the {\em order-type} of $\fD$. Thus, the order type of $\fD$ is $>$ if and only if
the order type of $\fD^{-1}$ is $<$, and the order type of $\fD$ is $\sim$ if and only if
the order type of $\fD^{-1}$ is $\sim$. Working with the corresponding indices, we denote the order-type of $\fD_k$ by $\relVar \langle k \rangle$.

Suppose $\phi$ has a model $\fA$ with at least two cliques, where no clique of $\fA$ has more than $n$ elements. We proceed to construct an $\LtoPO$-formula $\hat{\phi}$ (depending only on $\phi$, and not $\fA$), together with a model $\hat{\fA}$ of $\hat{\phi}$.
To avoid confusion, we
use the variables $u$ and $v$ in $\hat{\phi}$ in place of $x$ and $y$: it helps to think of $u$ and $v$ as
ranging over the set of cliques of $\fA$. 
Let $\bar{p} = p_1, \dots, p_s$ be a list of fresh unary predicates and $\bar{q} = q_1, \dots, q_t$ a list of fresh
binary predicates, where $\lceil s= \log M \rceil$ and $\lceil t= \log N \rceil$. Applying the same technique as employed in the proof of Lemma~\ref{lma:asymmetric}, we may form the labelling formulas $\bar{p}\langle j \rangle(u)$, for $0 \leq j < M$,
and $\bar{q}\langle k \rangle(u,v)$ for $0 \leq k < N$. 
Now let $\hat{A}$ be the set of cliques of $\fA$, and for each $\hat{a} \in \hat{A}$, fix some (arbitrary) 1--1 function $\hat{a} \rightarrow C$, where $C$ is the initial segment of $\bC$ of cardinality $|\hat{a}|$. 
Denote by $\iota: A \rightarrow \bC$ the union of all these functions.
(In effect, $\iota$ orders the elements in each cell.)
For any $\hat{a} \in \hat{A}$, the substructure $\fA_{|\hat{a}}$ is isomorphic, 
under $\iota$, to some cell or other, say, $\fC_{\hat{a}}$, which we call the 
\textit{reference cell} of $\hat{a}$. Now suppose that $\hat{a}, \hat{a}' \in \hat{A}$ are distinct, and let $\fC$ and $\fC'$ be their respective reference cells.
(There is no requirement that $\fC$ and $\fC'$ be distinct.) Recalling the functions  $\epsilon$
and $\epsilon'$ defined above, and setting $E = \epsilon(C)$, $E' = \epsilon'(C)$, 
define the function $\kappa: (\hat{a} \cup \hat{a}') \rightarrow E \cup E'$ (see Fig.~\ref{fig:reference}) by
\begin{equation*}
\kappa(a) = 
\begin{cases}
\epsilon(\iota(a)) & \text{ if $a \in \hat{a}$;}\\
\epsilon'(\iota(a)) & \text{ otherwise, (i.e.~if $a \in \hat{a}'$).}
\end{cases}
\end{equation*}
Evidently, $\kappa$ defines an isomorphism from $\fA_{|(\hat{a} \cup \hat{a}')}$ to some diatom or other, 
say $\fD_{\hat{a}, \hat{a}'}$, with cells $E$ and $E'$, which we call the \textit{reference diatom} of the
pair $\langle \hat{a}, \hat{a}' \rangle$.
Observe that $\fC_{\hat{a}}$ is always the left-cell of $\fD_{\hat{a}, \hat{a}'}$, and $\fC_{\hat{a}'}$ the right-cell.
That is, if $\fD_{\hat{a}, \hat{a}'} = \fD_k$, then $\fC_{\hat{a}} = \fC_{\leftCell(k)}$ and $\fC_{\hat{a}'} = \fC_{\rightCell(k)}$. Observe also that, if 
$\fD = \fD_{\hat{a}, \hat{a}'}$, then $\fD^{-1} = \fD_{\hat{a}', \hat{a}}$.
That is, if $\fD_{\hat{a}, \hat{a}'} = \fD_k$, then $\fD_{\hat{a}', \hat{a}}  = \fD_{I(k)}$.
\begin{figure}

\begin{center}
\begin{tikzpicture}[scale= 0.3]
\draw (0, 15) -- (11,15) -- (11,2) --(0,2);
\draw (1,13) node {$\fA$};
\draw (5,4) rectangle (10,11);
\draw (6.2,6) node {$\hat{a}'$};
\draw (7,5) rectangle (9,7);
\draw (6,9) node {$\hat{a}$};
\draw (7,8) rectangle (9,10);
\draw[->-=0.6, bend right] (9,9) to node [above,pos=0.6] {$\iota$} (21,14);
\draw[->-=0.7, bend right, dashed] (9,9) to node [above,pos=0.7] {$\kappa$} (26,5);

\draw[->-=0.8, bend right] (9,6) to node [below,pos=0.8] {$\iota$} (31,14);
\draw[->-=0.7, bend right, dashed] (9,6) to node [above,pos=0.7] {$\kappa$} (26,2);

\draw (16, -1) node {$\fD_0$};
\draw (14,0) rectangle (18,7);

\draw (27, -1) node {$\fD_{\hat{a},\hat{a}'}$};
\draw (25,0) rectangle (29,7);
\draw (27,2) node {$E'$};
\draw (26,1) rectangle (28,3);
\draw (27,5) node {$E$};
\draw (26,4) rectangle (28,6);

\draw (32, 3.5) node {$\dots$};
\draw (37, -1) node {$\fD_{N-1}$};
\draw (35,0) rectangle (39,7);
\draw (23, 3.5) node {$\dots$};

\draw (15, 17) node {$\fC_0$};
\draw (14,14) rectangle (16,16);

\draw (18, 15) node {$\dots$};
\draw (21, 17) node {$\fC_{\hat{a}}$};
\draw (20,14) rectangle (22,16);
\draw[->-=0.3, bend right] (21,14) to node [right,pos=0.3] {$\epsilon$} (26,5);

\draw (26, 15) node {$\dots$};
\draw (31, 17) node {$\fC_{\hat{a}'}$};
\draw (30,14) rectangle (32,16);
\draw[->-=0.4, bend left] (31,14) to  node [right,pos=0.4] {$\epsilon'$} (28,2);
\draw (34.5, 15) node {$\dots$};

\draw (38, 17) node {$\fC_{M-1}$};
\draw (37,14) rectangle (39,16);

\draw[dotted] (13,-2) -- (13,18) -- (40,18) -- (40,-2) -- cycle;

\end{tikzpicture}
\end{center}
\caption{The function $\kappa$ mapping  $\hat{a} \cup \hat{a}'$ to 
the reference diatom $\fD = \fD_{\hat{a},\hat{a}'}$. The construction of $\kappa$ composes the function
$\iota$ mapping $\hat{a}$ and $\hat{a}'$
to their 
respective reference cells $\fC = \fC_{\hat{a}}$ and $\fC' = \fC_{\hat{a}'}$ with the functions $\epsilon: \bC \rightarrow \bE$ and
$\epsilon': \bC \rightarrow \bE'$.}
\label{fig:reference}
\end{figure}
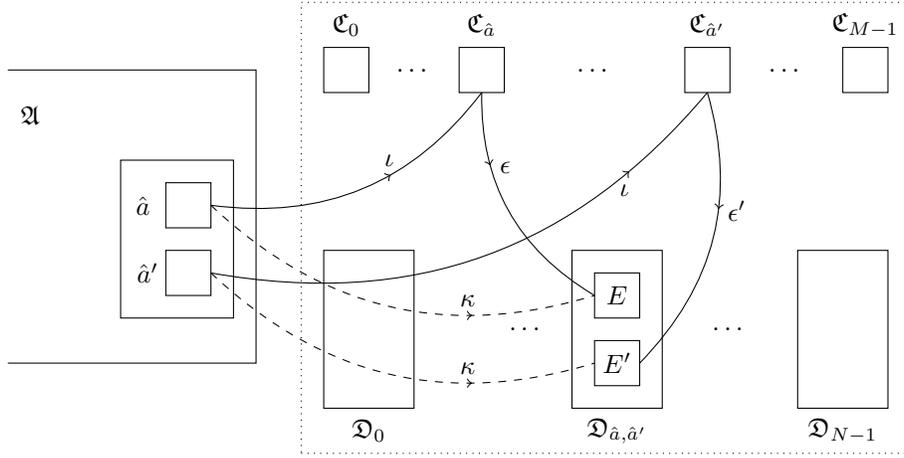

Now let $\hat{\fA}$ be the structure over $\hat{A}$ with
signature $\hat{\sigma} = \bar{p} \cup \bar{q} \cup \set{<}$, defined as follows.
\begin{enumerate}
\item For all $\hat{a} \in \hat{A}$,
and all $j$ ($0 \leq j < M$), $\hat{\fA} \models \bar{p}\langle j \rangle[\hat{a}]$ if and only
if $\fC_{\hat{a}} = \fC_j$;
\item
for all distinct $\hat{a}, \hat{a}' \in \hat{A}$
and all $k$ ($0 \leq k < N$), $\hat{\fA} \models \bar{q}\langle k \rangle[\hat{a},\hat{a}']$ if and only
if $\fD_{\hat{a},\hat{a}'}= \fD_k$;
\item
for all distinct $\hat{a}, \hat{a}'\in \hat{A}$, and all $\relVar \in \set{<,>,\sim}$,
$\hat{\fA} \models \relVar(\hat{a}, \hat{a}') $ if and only
if $\fD_{\hat{a}, \hat{a}'}$ is of order-type $\relVar$.
\end{enumerate}
Under this interpretation, and
taking the variables $u$ and $v$ to range over the
cliques of $\fA$, the formula $\bar{p}\langle j \rangle(u)$ says ``the reference cell  $\fC_u$ of $u$ is $\fC_j$,'' while 
the formula $\bar{q}\langle k \rangle(u,v)$ says ``the reference diatom $\fD_{u,v}$ of $\langle u, v \rangle$ is $\fD_k$.''  
Furthermore, by Lemma~\ref{lma:po}, $<^{\hat{\fA}}$ is a partial order, with
$>^{\hat{\fA}}$ and $\sim^{\hat{\fA}}$ standing in the expected relations to $<^{\hat{\fA}}$. 
Since $\fA$ by assumption has at least two cliques, $\hat{\fA}$ does not violate our general
assumption that all structures have cardinality at least 2. 

We now proceed to define the sought-after formula $\hat{\phi}$, building it up conjunct-by-conjunct, verifying, as we do so, that all these conjuncts are true in $\hat{\fA}$. 
We begin by taking $\psi'_1$ to be the conjunction
\begin{equation*}
\forall u \bigvee_{j=0}^{M-1} \bar{p}\langle j \rangle (u)
\wedge
\forall u \forall v \left( u = v \vee \bigvee_{k=0}^{N-1} \bar{q}\langle k \rangle (u,v) \right).
\end{equation*}
Under the interpretation $\hat{\fA}$, we may read $\psi'_1$ as saying: ``Every clique of $\fA$ has some reference cell, and every pair of distinct cliques has some reference diatom.'' This is obviously true by construction. Under the general assumption that all domains have cardinality at least 2, $\psi'_1$ is equivalent to the formula $\psi_1$ given by
\begin{equation*}
\forall u \forall v \left( u = v \vee \bigvee_{j=0}^{M-1} \bar{p}\langle j \rangle (u) \right)
\wedge
\forall u \forall v \left( u = v \vee \bigvee_{k=0}^{N-1} \bar{q}\langle k \rangle (u,v) \right),
\end{equation*}
so that $\hat{\fA} \models \psi_1$. Now let $\psi_2$ be the formula
\begin{equation*}
\bigwedge_{k=0}^{N-1}
\forall u \forall v \left(u = v \vee \left(\bar{q}\langle k \rangle(u,v) \rightarrow \bar{p}\langle \leftCell(k) \rangle(u) \wedge
\bar{p}\langle \rightCell(k) \rangle(v) \wedge \right)\right).
\end{equation*}
Under the interpretation $\hat{\fA}$, we may read $\psi_2$ as stating
that, if $u$ and $v$ are distinct cliques of $\fA$ such that $\fD_{u,v} = \fD_k$, then 
$\fC_{u} = \fC_{\leftCell(k)}$ and $\fC_{v} = \fC_{\rightCell(k)}$.
Now let $\psi_3$ be the formula
\begin{equation*}
\bigwedge_{k=0}^{N-1}
\forall u \forall v \left(u = v \vee \left(\bar{q}\langle k \rangle(u,v) \rightarrow \bar{q}\langle \converseDiatom(k) \rangle(u,v) \right)\right).
\end{equation*}
Under the interpretation $\hat{\fA}$, we may read $\psi_3$ as stating
that, if $u$ and $v$ are distinct cliques of $\fA$ such that $\fD_{u, v} = \fD_k$, then 
$\fD_{v,u} = \fD_{\converseDiatom(k)}$. 
Further, let $\psi_4$ be the formula
\begin{equation*}
\bigwedge_{k=0}^{N-1}
\forall u \forall v \left(u = v \vee \left(\bar{q}\langle k \rangle(u,v) \rightarrow \relVar \langle k \rangle (u,v) \right)\right).
\end{equation*}
Under the interpretation $\hat{\fA}$, we may read $\psi_4$ as stating
that, if $u$ and $v$ are distinct cliques of $\fA$ such that $\fD_{u, v} = \fD_k$, then the order-type of
$\fD_{v,u}$ is $\relVar \langle k \rangle$. Again, we have already observed that all these statements are true.
Thus, $\hat{\fA} \models \psi_2 \wedge \psi_3 \wedge \psi_4$. 

We now turn our attention to the formula $\phi$, starting with the purely 
universal conjuncts. Let $\lambda(u)$ abbreviate the formula
\begin{equation*}
\bigvee \set{\bar{p}\langle j \rangle (u) \mid 0 \leq j < M,\ \fC_j \models \forall x \forall y 
   (x= y \vee \eta_{\equiv}(x,y))}.
\end{equation*}
Under the interpretation $\hat{\fA}$, we may read $\lambda(u)$ as ``$u$ is a clique of $\fA$ in which 
the formula $\eta_{\equiv}(x,y)$ is satisfied by all pairs of distinct elements.''
For each $\relVar \in \set{<,>,\sim}$, 
let $\hat{\eta}_{\relVar}(u,v)$ abbreviate the formula
\begin{equation*}
\bigvee \set{\bar{q}\langle k \rangle (u,v) \mid 0 \leq k < N,\ \fD_k \models \forall x \forall y 
   (\ft_{\relVar}(x,y) \rightarrow \eta_{\relVar}(x,y))}.
\end{equation*}
We may read $\hat{\eta}_{\relVar}(u,v)$ as ``$u$ and $v$ are a pair of cliques in which 
the formula $\eta_{\relVar}(x,y)$ is satisfied by all pairs of elements related by $\ft_\relVar$.''
Now let $\psi'_5$ be the formula 
\begin{equation*}
\forall u .\lambda(u) \ \wedge \bigwedge_{\relVar \in \set{<,>,\sim}} \forall u \forall v (x = y \vee \hat{\eta}_{\relVar}(u,v)).
\end{equation*}
Under the interpretation $\hat{\fA}$, we may read the first conjunct of $\psi'_5$ as stating: 
``if $\fC$ is a cell realized in $\fA$, then any pair of distinct elements in $\fC$ satisfies
$\eta_\equiv(x,y)$.'' The truth of this statement follows from the fact that $\fA \models \forall x \forall x (\ft_\equiv(x,y) \rightarrow \eta_\equiv(x,y))$.
Similarly, the remaining conjuncts state: ``if $\fD$ is a diatom realized in $\fA$ having order-type $\relVar$ then any pair of elements ordered by $\ft_s$ satisfies $\eta_\relVar$.'' The truth of this statement follows from the fact that $\fA \models \forall x \forall x (\ft_{\relVar}(x,y) \rightarrow \eta_{\relVar}(x,y))$.
Again, replacing $\psi'_5$ with the equivalent formula $\psi_5$ given by 
\begin{equation*}
\forall u \forall v (u = v \vee \lambda(u)) \ \wedge \bigwedge_{\relVar \in \set{<,>,\sim}} \forall u \forall v (x = y \vee  \hat{\eta}_{\relVar}(u,v)),
\end{equation*}
we see that, $\hat{\fA} \models \psi_5$.

Now we turn our attention to the universal-existential conjuncts of $\phi$.
For each $h$ ($0 \leq h < m$),
let $\mu_{h}(u)$ abbreviate the formula
\begin{equation*}
\bigvee \set{\bar{p}\langle j \rangle (u) \mid 0 \leq j < M,\ \fC_j \models \forall x (p_{h,\equiv}(x) \rightarrow \exists y (x \neq y \wedge \theta_{h,\equiv}(x,y)))}.
\end{equation*}
Under the interpretation $\hat{\fA}$, we may read $\mu_{h}(u)$ as ``$u$ is a clique isomorphic to some cell $\fC$ such that $\fC \models \forall x (p_{h,\equiv}(x) \rightarrow \exists y (x \neq y \wedge \theta_{h,\equiv}(x,y)))$.'' 
Now let $\psi'_6$ be the formula 
\begin{equation*}
\forall u \bigwedge_{h=0}^{m-1} \mu_h(u). 
\end{equation*}
Under the interpretation $\hat{\fA}$,
we may read $\psi'_6$ as stating: 
``if $\fC$ is a cell realized in $\fA$, then any element in $\fC$ satisfying $p_{h,\equiv}(x)$ has a witness for 
$\exists y(x \neq y \wedge \theta_{h,\equiv}(x,y))$ in $\fC$.
The truth of this statement follows from the fact that $\fA \models \forall x (p_{h,\equiv}(x) \rightarrow \exists y (\ft_\equiv(x,y) \wedge  \theta_{h,\equiv}(x,y)))$. Replacing $\psi'_6$ with $\psi_6$, given by
\begin{equation*}
\forall u \forall v \left( u = v \vee \bigwedge_{h=0}^{m-1}\mu_h(u) \right),
\end{equation*}
we thus have $\hat{\fA} \models \psi_6$.
Further,
for each $h$ ($0 \leq h < m$), each $\relVar \in \set{<,>,\sim}$ and each $i$ ($0 \leq i <n$),
let $\nu_{h,\relVar,i}(u)$ abbreviate the formula
\begin{equation*}
\bigvee \set{\bar{p}\langle j \rangle (u) \mid 0 \leq j < M,\ \fC_j \models p_{h,s}[c_i]}.
\end{equation*}
We may read $\nu_{h,\relVar,i}(u)$ as ``$u$ is a clique whose reference cell $\fC_u$ is such that $\fC_u \models p_{h,s}[c_i]$.''
(We take the statement ``$\fC_u \models p_{h,s}[c_i]$'' to be false if $c_i$ is not in the domain of $\fC_u$.)
Finally, for each $h$ ($0 \leq h < m$), each $\relVar \in \set{<,>,\sim}$, each $i$ ($0 \leq i <n$) and each $i'$ ($0 \leq i' <n$),
let $\xi_{h,\relVar,i,i'}(u,v)$ be the formula
\begin{equation*}
\bigvee \set{\bar{q}\langle k \rangle (u,v) \mid 0 \leq k < N,\ \relVar\langle k \rangle = \relVar \text{ and } \fD_k \models \theta_{h,\relVar}[c_i,c'_{i'}]}.
\end{equation*}
We may read $\xi_{h,\relVar,i,i'}(u,v)$ as ``$u$ and $v$ are cliques whose reference diatom $\fD_{u,v}$
has order-type $\relVar$ and is such that $\fD_{u,v} \models \theta_{h,\relVar}[c_i,c'_{i'}]$.''
(We take the statement ``$\fD_{u,v} \models \theta_{h,\relVar}[c_i,c'_{i'}]$'' to be false if $c_i$
or $c_i'$ are not in the domain of $\fD_{u,v}$.) Now let $\omega$ be the conjunction
\begin{equation*}
\bigwedge_{\relVar \in \set{<,>,\sim}} \bigwedge_{i=0}^{n-1}\bigwedge_{h=0}^{m-1} \hspace{-0mm}
\forall u \exists v \left(u \neq v \wedge 
     \left(\nu_{h,\relVar,i}(u) \rightarrow 
           \bigvee_{i'=0}^{n-1} \xi_{h,\relVar,i,i'}(u,v)\right) \right).
\end{equation*}
Under the interpretation $\hat{\fA}$, 
$\omega$ states: ``for all $\relVar$ and $h$, if $u$ is an $\fA$-clique with reference cell $\fC$ such that some element $a$ of $C$ satisfies $p_{h,\relVar}(x)$, then there is some other $\fA$-clique $v$ of such that 
$a$ has a witness for $\exists y(\ft_\relVar(x,y) \wedge \theta_{h,\relVar}(x,y))$ in 
$\fA_{|(u \cup v)}$.'' The truth of this statement follows from the fact that $\fA \models \forall x (p_{h,\relVar}(x) \rightarrow \exists y (\ft_\relVar(x,y) \wedge  \theta_{h,\relVar}(x,y)))$.

Now let $\hat{\phi} = \psi_1 \wedge \cdots \wedge \psi_6 \wedge \omega$. Thus,
$\hat{\phi}$ is an $\LtoPO$-formula in standard normal form over a signature $\hat{\sigma}$ consisting of the unary predicates
$p_1, \dots, p_s$, the ordinary binary predicates $q_1, \dots, q_t$ and the navigational predicates 
$<$, $>$ and $\sim$,
with multiplicity $\hat{m} = 4mn$. We see that both $|\hat{\sigma}|$ and $\hat{m}$ are bounded by an exponential function of $\sizeOf{\phi}$.
Moreover, we have shown that $\hat{\phi}$ has the model $\hat{\fA}$, where $<$ is interpreted as the partial order $<_T$ on the cliques of $\fA$, and $>$ and $\sim$ stand in the usual relations to $<$. 
It is obvious that $\hat{\fA}$ is finite if $\fA$ is. This establishes conditions (i) and (iii) of the lemma. 

To establish condition (ii), we show that, if $\hat{\phi}$ has a model of size $L \geq 2$, then 
$\phi$ has a model of size at most $n \cdot L$. 
Suppose then that $\fB \models \hat{\phi}$, with $|B| = L$.
Consider any element $b \in B$. From~$\psi_1$ there exists $j$ ($0 \leq j < M$) such that $\fB \models \bar{p}\langle j \rangle[b]$, so let
$\fC_b$ be a fresh copy of the cell $\fC_j$, having domain, say, $\check{B}_b$. Let $\check{B} = \bigcup_{b \in B} \check{B}_b$,
and define a structure $\check{\fB}$ over $\check{B}$ as follows. For all $b \in B$, let  $\check{\fB}_{|\check{B}_b}= \fC_b$, so that it remains only to define the 2-types involving elements from different
sets $\check{B}_b$. Suppose $b, c \in B$ are distinct. From~$\psi_1$ again,
there exists $k$ ($0 \leq k < N$) such that $\fB \models \bar{q}\langle k \rangle[b,c]$. 
Now set $\check{B}_{|(\check{B}_b \cup \check{B}_c)}= \fD_k$. That these assignments do not clash with the
structures $\fC_b$ already established
is immediate from~$\psi_2$. 
That these assignments do not clash with each other
is immediate from~$\psi_3$. 
This completes the construction of $\check{\fB}$. Obviously, $|\check{B}| \leq n \cdot L$.
From~$\psi_5$, we have 
\begin{equation*}
\check{\fB} \models \bigwedge_{\relVar \in \set{\equiv,<,>,\sim}}
\forall x \forall y (\ft_{\relVar} (x,y) \rightarrow (x = y \vee \eta_{\relVar})).
\end{equation*}
Likewise, from~$\psi_6 \wedge \omega$, we have 
\begin{equation*}
\check{\fB} \models \bigwedge_{h=0}^{m-1}   \bigwedge_{\relVar \in \set{\equiv,<,>,\sim}}
\forall x (p_{h,\relVar}(x) \rightarrow 
     \exists y (\ft_{\relVar} (x,y) \wedge \theta_{h,\relVar})).
\end{equation*}
That is, $\check{\fB} \models \phi$, as required. It remains only to check that $\ft^{\check{\fB}}$ is transitive. 
By assumption, $<^\fB$ is a partial order. By construction, if $a$ and $a'$ are distinct
elements of $\check{B}_b$, for some $b \in B$, then $\check{\fB} \models \ft[a,a']$. From $\psi_4$, 
if $a \in \check{B}_b$ and $a' \in \check{B}_{b'}$, where
$b$ and $b'$ are distinct elements of $B$, then $\check{\fB} \models \ft[a,a']$ if and only
if $\fB \models b < b'$. It is then obvious that $\ft^{\check{\fB}}$ is transitive.
\end{proof}

\begin{theorem}
Any finitely satisfiable $\LtoT$-formula $\phi$ has a model 
of size bounded by a triply exponential function of
$\sizeOf{\phi}$, and so $\FinSat(\LtoT)$ is in $\ThreeNExpTime$.
\label{theo:mainT}
\end{theorem}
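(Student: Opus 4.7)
The plan is to chain the two main reductions already established. First I would apply Lemma~\ref{lma:cliquifiy} to any finitely satisfiable $\LtoT$-formula $\phi$ to obtain an $\LtoPO$-formula $\hat{\phi}$ in standard normal form over a signature $\hat{\sigma}$ and with multiplicity $\hat{m}$, where both $|\hat{\sigma}|$ and $\hat{m}$ are bounded by some exponential function of $\sizeOf{\phi}$. I would then invoke Theorem~\ref{theo:mainPO} on $\hat{\phi}$, which supplies a finite model of size doubly exponential in $|\hat{\sigma}| + \hat{m}$. Composing ``doubly exponential'' with ``exponential'' yields a triply exponential bound in $\sizeOf{\phi}$ on the size of the $\hat{\phi}$-model. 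Finally, Lemma~\ref{lma:cliquifiy}(ii) translates this back to a model of $\phi$ of size at most $n$ times larger, where $n$ is exponential in $\sizeOf{\phi}$; the result is still triply exponential.

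Before running this pipeline I would dispense with the edge case excluded by the hypothesis ``at least two cliques'' in Lemma~\ref{lma:cliquifiy}(i). If every finite model of $\phi$ consists of a single $\ft$-clique, then Lemma~\ref{lma:smallCliques} already delivers a finite model of merely exponential size, well inside the triply exponential budget. Otherwise, some finite model of $\phi$ contains at least two cliques, so Lemma~\ref{lma:cliquifiy}(i) applies and guarantees that $\hat{\phi}$ is finitely satisfiable, allowing the reduction to proceed.

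For the complexity assertion, I would conclude with a standard guess-and-check: a nondeterministic algorithm guesses a structure of at most triply exponential size and verifies in time polynomial in the size of the guess --- by inspecting 1- and 2-types and testing each conjunct of a standard-normal-form representation of $\phi$ (via Lemma~\ref{lma:nfL2}) --- that the guessed structure is indeed a model. This places $\FinSat(\LtoT)$ in $\ThreeNExpTime$.

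The main obstacle here is not really in this final statement, which is essentially a composition of previously established bounds. The substantive work has already been discharged by Lemma~\ref{lma:cliquifiy}, where cliques of up to exponential size must be encoded as single elements of a partial order in a way that preserves the truth of $\LtoT$-sentences. What remains at this stage is just the careful accounting of the three nested exponentials arising from (a) the clique-coding in Lemma~\ref{lma:cliquifiy}, (b) the doubly exponential model-size bound of Theorem~\ref{theo:mainPO}, and (c) the final re-expansion of cliques, together with the isolation of the single-clique edge case handled directly by Lemma~\ref{lma:smallCliques}.
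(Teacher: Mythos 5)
Your proof is correct and follows essentially the same pipeline as the paper's: first dispatch the single-clique edge case, then apply Lemma~\ref{lma:cliquifiy} and Theorem~\ref{theo:mainPO} and compose the bounds. The one substantive difference is in how you dispose of the single-clique case. The paper observes that a single-clique model over a domain of at least two elements must interpret $\ft$ as the total relation, so one may replace every $\ft$-atom in $\phi$ by $\top$ and invoke the exponential-size finite model property of plain $\FOT$; this gives a small model directly without any further machinery. You instead invoke Lemma~\ref{lma:smallCliques} and argue that, since (in this case) every finite model is single-clique and the lemma gives a model with exponentially bounded cliques, the resulting model itself has exponential size. Both routes are sound; your route is slightly more indirect, in that the statement of Lemma~\ref{lma:smallCliques} bounds only the clique sizes, not the domain size, so you are implicitly relying on the observation that the transformation preserves the single-clique property (which does hold, but is a feature of the proof rather than of the stated conclusion). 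You also need to pass through the transitive-normal-form Lemma~\ref{lma:nfT} before Lemma~\ref{lma:smallCliques} can be applied, as the latter is stated only for formulas in that form. These are minor points of presentation; the exponential-budget accounting, the use of Lemma~\ref{lma:cliquifiy}(i) and (ii), the appeal to Theorem~\ref{theo:mainPO}, and the closing guess-and-check argument for the $\ThreeNExpTime$ bound are all correct.
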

\begin{proof}
Let $\phi$ be a formula of $\LtoT$. 
Recalling our general assumption that all structures have cardinality at least 2, 
any model of $\phi$ consisting of a single clique is one in which $\ft$ is total. Thus, we may test
satisfiability of $\phi$ in single-clique structures by replacing all $\ft$-atoms by $\top$, and considering the
resulting $\FOT$-formula. Since any satisfiable $\FOT$-formula $\phi'$ has a model of cardinality bounded by an exponential function of $\sizeOf{\phi'}$, the result is established. Thus, we may confine our attention to determining whether $\phi$ has a finite model with at least 2 cliques.

By Lemma~\ref{lma:cliquifiy},
let $\hat{\phi}$ 
be an $\LtoPO$-formula in standard normal form with multiplicity $\hat{m}$ over a signature $\hat{\sigma}$, such that: (i) if $\phi$ has a finite model with at least 2 cliques, then $\hat{\phi}$ is finitely satisfiable, 
(ii) if $\hat{\phi}$ has a model of size $L$, then $\phi$ has a model of size $n \cdot L$,
where $n$ is bounded by an exponential function of $\sizeOf{\phi}$; and (iii) both $|\hat{\sigma}|$ and $\hat{m}$ are bounded by an exponential function of $\sizeOf{\phi}$.
By Theorem~\ref{theo:mainPO}, if $\phi$, and therefore
$\hat{\phi}$, is finitely satisfiable, then $\hat{\phi}$ has a model of size $L$ bounded by a
doubly exponential function of $|\hat{\sigma}| +\hat{m}$, and hence by a triply exponential function of $\sizeOf{\phi}$,
whence $\phi$ also has a model of size bounded by a triply exponential function of $\sizeOf{\phi}$.
\end{proof}

\bibliographystyle{plain}
\bibliography{fo21t}
\end{document}